\newcommand{\p}{\partial}
\newcommand{\dd}{{\rm d}}
\begin{document}

\title{Regularity and temperature of stationary black hole event horizons}
\titlerunning{Regularity and temperature of stationary black hole event horizons}

\author{Raymond A. Hounnonkpe\thanks{Universit\'e d'Abomey-Calavi,  B\'enin and Institut de Math\'ematiques et de Sciences Physiques (IMSP), Porto-Novo, B\'enin
 \email{rhounnonkpe@ymail.com}} and Ettore Minguzzi\thanks{Dipartimento di Matematica, Universit\`a degli Studi di Pisa,  Largo
B. Pontecorvo 5,  I-56127 Pisa, Italy. \email{ettore.minguzzi@unipi.it}}}

\institute{}

\date{}

\maketitle

\begin{abstract}
\noindent Available proofs of the regularity of stationary black hole event horizons rely on certain assumptions on the existence of  sections that imply a $C^1$ differentiability assumption. By using a  quotient bundle approach, we remedy this problem by proving directly that, indeed, under the null energy condition event horizons of stationary black holes are  totally geodesic null hypersurfaces as regular as the metric. Only later, by using this result, we show that the cross-sections, whose existence was postulated in previous works,  indeed exist.
These results hold true under weak causality conditions.
Subsequently,
we prove that under the dominant energy condition stationary  black hole event horizons indeed admit  constant surface gravity, a result that does not require any  non-degeneracy assumption, requirements on existence of cross-sections or  a priori smoothness conditions. We are able to make sense of the angular velocity and of the value (not just sign) of surface gravity as quantities related to the horizon, without the need of assuming  Einstein's vacuum equations and the Killing extension.
Physically, this implies that  under very general conditions every stationary black hole has indeed a constant temperature (the zeroth law of black hole thermodynamics).
\end{abstract}


\section{Stationary black holes}

An important problem in mathematical relativity is that of establishing that the event horizon in stationary black holes is as regular as the metric. Subsequently one can try to show further properties, i.e.\ that surface gravity is constant, or that the null vector field tangent to the event horizon extends as a Killing vector field (not necessarily coincident with that implied by the initial symmetry)
in a neighborhood of the horizon (in the analytic case this is Hawking's rigidity theorem \cite{friedrich99,alexakis10}).

Ideally, such a proof should not rely on any assumption on the smoothness of the event horizon nor on special topological conditions. Horizons are differentiable in the interior of the lightlike geodesic generators and at the endpoints of generators with multiplicity one, but otherwise can be quite non-differentiable \cite{chrusciel98,beem98}. The $C^1$ differentiability condition implies that no generator escapes the horizon, that is, every point stays in the interior of a generator \cite[Cor.\ 3.7]{beem98}. Some further fine properties can  be established \cite{chrusciel01,chrusciel02}, for instance by using the theory of lower-$C^2$ functions \cite{minguzzi14d}.
In any case, it  has been observed that imposing strong differentiability properties, and possibly
even analyticity, on the spacetime manifold, metric, or Cauchy hypersurfaces does not
guarantee that  horizons will be differentiable. Indeed, an example by Budzy\'nski et al.\ \cite{budzynski03} shows that non-differentiable compact Cauchy horizons may still form.

However, if the null energy condition is imposed then compact Cauchy horizons are indeed as regular as the metric \cite{minguzzi14d,larsson14}.
 Thus, similar behavior can be expected for event horizons, that is, they might likely be shown to be as regular as the metric under the null energy condition.

Note that the difficulty of the problem lies in the fact that the horizon is not defined locally, e.g.\ via  properties of the Killing field, but rather set theoretically, as the boundary of the past of future infinity. Naturally, it is hard to put one's hand over such evanishing  objects, so technicalities met in works such as   \cite{minguzzi14d,larsson14} devoted to the compact case should be expected. Fortunately, the problem can largely be reduced to an application of the  compact case, and also the existence of a Killing symmetry is of help.
 Techniques for reducing the analysis of non-compact stationary black hole event horizons to the compact  case are not novel.
The idea was introduced in \cite{friedrich99}, see also \cite{moncrief08}, (under smoothness assumptions on the horizon) but, as we shall see, our quotient approach will be different, the quotient being with respect to the isometric flow, not with respect to the geodesic flow of the horizon.

For what concerns available results, a first proof of the smoothness of the event horizon was given in \cite[cf.\ Sec.\ Conclusions]{chrusciel01} but details on how to construct certain spacelike sections pushed to the future by the Killing flow were not provided.

More details were given in Chru\'sciel and Costa \cite[Thm. 4.1,4.11]{chrusciel08}, where the authors made use of a certain assumption on the existence of cross-sections to lightlike geodesic reaching $\mathscr{I}^+$ but, unfortunately, as we shall show,  it implies a $C^1$ differentiability assumption on the future of such section.

Actually, a certain unsatisfaction in regards to the imposition of the existence of such cross-sections is also found in the original paper by Chru\'sciel and Costa \cite[p.\ 197]{chrusciel08} where the authors write

\begin{quote}
We find the requirement (1.1) [a type of cross-section assumption] somewhat unnatural, [\,] but we have not
been able to develop a coherent theory without assuming some version of (1.1). [Without imposing it,] it is not clear how to guarantee
the smoothness of [the horizon] and the static-or-axisymmetric alternative.
\end{quote}



In this work we reconsider and solve this classical problem. In short we shall be able to prove the smoothness of the horizon and the existence of certain useful principal bundles without using assumptions on cross-sections. Nevertheless, the existence of cross-sections will be subsequently proved by using the existence of the bundles.


The paper is structured as follows. In Section \ref{cmqbb}, starting from minimal assumptions, we show the existence of a neighborhood $V$ of the event horizon $H^+$ with the structure of a trivial $\mathbb{R}$-bundle,  the horizon having projection $S$.

In Section \ref{moot}, by considering the $\mathbb{Z}$-action, we show that $H^+$ can be compactified into a horizon $\hat H$, provided $S$ is compact. Under the null convergence condition, this allows us to establish the smoothness of the horizon by using previous results for compact horizons.

In Section \ref{cnqwc}, we establish the existence of cross-sections, that is, compact codimension one submanifolds of the horizon intersected precisely once by every generator.

In Section \ref{ccpt}, we consider the quotient of the horizon with respect to the geodesic flow. This identifies a neighborhood of the event horizon with another trivial $\mathbb{R}$-bundle but with a different quotient. In particular, the projection of the event horizon is now a manifold $\tilde S$. In spacetime dimension 4 and in other cases of interest, $S$ and $\tilde S$ are shown to be homeomorphic.

In Section \ref{sign}, we provide a definition of sign of surface gravity sufficiently general to apply to the non-compact horizon case and relate it to conditions of completeness for a lightlike field $n$ tangent to the horizon. The field $n$ turns out to be unique up to rescalings. When the scale is fixed, typically via other properties, the surface gravity is defined.

Finally, in Section \ref{surf}, we establish that under very general conditions, every event horizon admits a constant surface gravity and an angular velocity. Our findings are then compared with previous results in the literature, particularly those relating to the rigidity and black hole uniqueness theorems.

We end this section by introducing some definitions and terminology which are the same of \cite{minguzzi14d,minguzzi18b}. A spacetime
$(M, g)$ is a paracompact, time oriented Lorentzian manifold of dimension $n + 1 \ge 2$.
The signature of the metric is $(-,+,\ldots,+)$.
We  assume that  $M$ is  $C^k$, $4 \le k \le \infty$, or even analytic, and so as it is $C^1$ it has a unique $C^\infty$ compatible structure (Whitney) \cite[Theor.\ 2.9]{hirsch76}.
Thus we could assume that $M$ is smooth without loss of generality.

The metric will be assumed to be $C^3$ but it is likely that the degree can be lowered. We assume at least this degree of differentiability because it was also assumed in  \cite{minguzzi14d} over which results we rely. A Killing field satisfies $k^a{}_{;b;c}=R^a{}_{bcd} k^d$ and so it is as regular as the metric, and similarly is its flow. For shortness, sometimes we shall sloppily use the word {\em smooth} as meaning {\em as much as the regularity of the metric allows} when such a regularity is clear.

 With a curve symbol we might denote the map or the image of the map. The inclusion $\subset$ is reflexive. We refer the reader to \cite{minguzzi18b} for results in causality theory and conventions not explicitly recalled in the present work.
 
Note added in proof: Our new work (arXiv:2506.20004v2) establishes that the null energy condition suffices wherever the dominant energy condition is used in this paper.





\section{Existence of the principal bundles} \label{cmqbb}

Let $k$ be a complete  Killing field and let $\phi_t: M \to M$ be its flow. As it is customary, we denote $\phi_t(S):=\cup_{p\in S} \{\phi_{t}(p) \}$.
\begin{proposition} \label{nndp}
Let $S\subset M$ be a set invariant under the Killing flow. Then $I^-(S)$, $\overline{S}$ and $\p S$ are invariant under the Killing flow. Thus $\p I^-(S)$ is invariant under the Killing flow. Time dual statements hold.
\end{proposition}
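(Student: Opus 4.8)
The plan is to reduce every claim to one structural fact: each flow map $\phi_t$ is an isometry of $(M,g)$ (globally defined for all $t$ since $k$ is complete), hence a homeomorphism that sends timelike curves to timelike curves and preserves the time orientation. From this I would first record that the chronological relation is preserved in both arguments, i.e.\ $p\ll q$ if and only if $\phi_t(p)\ll\phi_t(q)$. The only point needing a word of care is that $\phi_t$ preserves, rather than reverses, the time orientation; this holds because $\phi_0=\mathrm{id}$ does so and $t\mapsto\phi_t$ is continuous, so no orientation flip can occur for any $t$, and therefore $\ll$ is transported without being reversed.

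Granting this, the invariance of $I^-(S)$ is immediate. If $p\in I^-(S)$ then $p\ll q$ for some $q\in S$; applying $\phi_t$ gives $\phi_t(p)\ll\phi_t(q)$, and $\phi_t(q)\in S$ since $S$ is invariant, so $\phi_t(p)\in I^-(S)$. This shows $\phi_t(I^-(S))\subseteq I^-(S)$, and the same argument for $\phi_{-t}$ (also a flow map) gives the reverse inclusion, hence $\phi_t(I^-(S))=I^-(S)$.

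The invariance of $\overline{S}$ and of $\partial S$ is purely topological: a homeomorphism commutes with closure, interior and boundary, so $\phi_t(\overline{S})=\overline{\phi_t(S)}=\overline{S}$ and $\phi_t(\partial S)=\partial(\phi_t(S))=\partial S$. Applying the boundary statement to the already established invariant set $I^-(S)$ in place of $S$ yields at once that $\partial I^-(S)$ is invariant. The time-dual statements follow verbatim upon replacing $I^-$ by $I^+$, since the orientation-preserving $\phi_t$ equally preserves the future chronological relation.

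I expect no genuinely hard step here; the entire content is absorbed into the fact that isometries preserve both the causal and the topological structure. If one step is to be singled out as the main (and rather mild) obstacle, it is the time-orientation continuity remark above, which is what guarantees that $\ll$ is carried forward without being flipped.
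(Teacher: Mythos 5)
Your proposal is correct and follows essentially the same route as the paper: the flow maps are homeomorphisms preserving the (time-oriented) causal structure, so $\phi_t(I^-(S))=I^-(\phi_t(S))=I^-(S)$, while closure and boundary are preserved for purely topological reasons, and $\partial I^-(S)$ is handled by applying the boundary statement to the invariant set $I^-(S)$. The only difference is that you make explicit the (correct) continuity argument for time-orientation preservation, which the paper leaves implicit in the phrase ``timelike curves are sent to timelike curves.''
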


\begin{proof}
Let $t\in \mathbb{R}$. Since timelike curves are sent to timelike curves $\phi_t(I^-(S))=I^{-}(\phi_t(S))$, hence $\phi_t(I^-(S))=I^{-}(S)$.

Since $\phi_t$ is a homeomorphism $\p \phi_t(S)=\phi_t (\p S)$ (and $\overline{\phi_t(S)}=\phi_t(\overline{S})$) thus $\p S=\phi_t(\p S)$ (resp.\ $\overline{S}=\phi_t(\overline{S})$).
$\square$ \end{proof}

We denote  with $\phi(S)$ the orbit of the set $S$ as in \cite{friedrich99}, $\phi(S):=\cup_t\phi_t(S)$.

We assume the existence of an acausal connected spacelike  hypersurface $\Sigma_{end}$, possibly with edge, such that $k$ is timelike on it. We assume that $\Sigma_{end}$ is topologically closed, that is, it includes its edge.

The map $\Sigma_{end} \times \mathbb{R} \to M$, $(p,t)\to \phi_t(p)$ is injective. Indeed, if $\phi_t(p)=\phi_s(q)$, $p,q\in \Sigma_{end}$, then $\phi_{t-s}(p)=q$, so if it were $t\ne s$ there would be a timelike curve connecting $\Sigma_{end}$ to itself contradicting acausality. Thus $t=s$ which implies $p=q$.


Defining the set
\[
M_{end}=\phi(\Sigma_{end})
\]
we get that $M_{end}$ is connected and invariant under the Killing flow and that $k$ is timelike on $M_{end}$ (it can be tricky to prove that $M_{end}$ is a closed set but we shall not use this property).


\begin{proposition} \label{vis}
The {\em event horizon} $H:=\p I^-(M_{end})$ is invariant under the Killing flow. The subsets $H\cap I^+(M_{end})$ and $H\cap \overline{I^+(M_{end})}$ are invariant under the Killing flow (and similarly in the time dual case).
\end{proposition}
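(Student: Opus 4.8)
The plan is to reduce the whole statement to the preceding (unnamed) proposition together with the elementary fact that each flow map $\phi_t$ is a bijection, so that intersections of flow-invariant sets remain flow-invariant. No new geometric input is needed: all the causal-theoretic content is already carried by the previous proposition, and what remains is essentially bookkeeping.

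First I would recall that $M_{end}=\phi(\Sigma_{end})$ is invariant under the Killing flow, since it is the orbit of a set: indeed $\phi_s(\phi(\Sigma_{end}))=\cup_t\,\phi_{s+t}(\Sigma_{end})=\phi(\Sigma_{end})$. Applying the previous proposition with $S=M_{end}$ then gives at once that $H=\p I^-(M_{end})$ is invariant under the Killing flow, which is the first assertion. For the remaining two sets I would invoke the time dual of that same proposition: since $M_{end}$ is invariant, so are $I^+(M_{end})$ and $\overline{I^+(M_{end})}$.

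Next I would state the elementary observation that if $A$ and $B$ are invariant under the Killing flow then so is $A\cap B$. This holds because each $\phi_t$ is a homeomorphism, hence a bijection, so $\phi_t(A\cap B)=\phi_t(A)\cap\phi_t(B)=A\cap B$. Taking $A=H$ and $B=I^+(M_{end})$, respectively $B=\overline{I^+(M_{end})}$, yields the invariance of $H\cap I^+(M_{end})$ and $H\cap\overline{I^+(M_{end})}$. The time dual statements, concerning $\p I^+(M_{end})$ and its intersections with $I^-(M_{end})$ and $\overline{I^-(M_{end})}$, follow identically from the time dual of the previous proposition.

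I do not expect any genuine obstacle here. The only point worth recording explicitly is the bijectivity of $\phi_t$, without which $\phi_t(A\cap B)=\phi_t(A)\cap\phi_t(B)$ could fail (for a general map only the inclusion $\subseteq$ holds); since the Killing flow consists of diffeomorphisms this is immediate. Thus the proposition is in effect a corollary of the preceding one, isolated here for later use in the principal-bundle construction.
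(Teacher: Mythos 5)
Your proof is correct and takes essentially the same route as the paper: invariance of $H=\p I^-(M_{end})$ comes from applying the preceding proposition to the invariant set $M_{end}$, and invariance of the two intersections comes from the invariance of $I^+(M_{end})$ and $\overline{I^+(M_{end})}$ combined with the fact that intersections of invariant sets are invariant. The paper's proof merely leaves implicit the bookkeeping (invariance of $M_{end}$, bijectivity of $\phi_t$) that you spell out.
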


\begin{proof}
 It follows from Prop.\ \ref{nndp}.  $\square$
\end{proof}

\begin{definition}
We denote $H^+= H\cap I^+(M_{end})$, and call $M_{out}=I^+(M_{end})\cap I^-(M_{end})$ the {\em domain of outer communication}.
\end{definition}

Since $H$ is an achronal boundary it has no edge. As a consequence $H^+$ might have edge but $\textrm{edge} (H^+)\cap H^+=\emptyset$ (for results on the connection between
 $\textrm{edge} (H^+)$, bifurcate horizons and non-degeneracy, see \cite{kay91,racz92b,boyer69}).

Let $T$ be the open set over which $k$ is timelike.
 Then $T$ is  invariant by the flow and it contains $M_{end}$.

  Let $T(p)$ be the connected components of $T$ including $p$ and let $\gamma_p:=\phi(p)$ be the integral curve of $k$ through $p$.
\begin{lemma}
\label{lem1}
Let $p\in T$ then $I^{-}(\gamma_p, T) = T(p)$. In particular, if  $p\in M_{end}$,  then $M_{end} \subset T(p) = I^{-}(\gamma_p, T)\subset I^{-}(\gamma_p)$.
\end{lemma}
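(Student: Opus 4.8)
The plan is to compute the past $I^-(\gamma_p,T)$ \emph{inside the open submanifold} $T$, treating $(T,g|_T)$ as a spacetime in its own right with the induced time orientation, and to establish the identity $I^-(\gamma_p,T)=T(p)$ by an open–closed argument on the connected set $T(p)$. Throughout I may assume, after possibly replacing $k$ by $-k$ on the component under consideration, that $k$ is future-directed on $T(p)$: since $k$ is continuous and timelike on the connected set $T(p)$, it is consistently future- or past-directed there, and when $p\in M_{end}$ it is future-directed, matching the orientation used for $M_{end}$.

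First I would dispose of the easy inclusion and of nonemptiness. The orbit $\gamma_p=\phi(p)$ is a connected subset of the flow-invariant set $T$ containing $p$, hence $\gamma_p\subset T(p)$. If $q\in I^-(\gamma_p,T)$ there is a future-directed timelike curve contained in $T$ joining $q$ to a point of $\gamma_p$; the union of this curve with $\gamma_p$ is connected, contained in $T$, and contains both $p$ and $q$, so $q\in T(p)$. This gives $I^-(\gamma_p,T)\subset T(p)$. Moreover, flowing forward from $p$ along $\gamma_p$ produces a future-directed timelike curve in $T$ reaching $\phi_\epsilon(p)\in\gamma_p$, so $p\in I^-(\gamma_p,T)$ and the set is nonempty.

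The core of the argument is to show that $I^-(\gamma_p,T)$ is relatively closed in $T(p)$; being also open (chronological pasts are open) and nonempty, connectedness of $T(p)$ then forces $I^-(\gamma_p,T)=T(p)$. Two ingredients drive the closedness. First, $I^-(\gamma_p,T)$ is invariant under the Killing flow: each $\phi_t$ is an isometry mapping $T$ onto $T$ and $\gamma_p$ onto itself, so it sends future-directed timelike curves in $T$ to future-directed timelike curves in $T$ and therefore $\phi_t\big(I^-(\gamma_p,T)\big)=I^-(\gamma_p,T)$. Second, I use that the chronological relation $\ll$ of the spacetime $(T,g|_T)$ is open. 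Now take $q\in T(p)$ with $q_n\to q$ and $q_n\in I^-(\gamma_p,T)$. Since $k$ is future-directed timelike at $q$, for small $\epsilon>0$ the orbit segment from $q$ to $\phi_\epsilon(q)$ is a future-directed timelike curve in $T$, so $q\ll\phi_\epsilon(q)$ in $T$; by openness of $\ll$ and $\phi_\epsilon(q_n)\to\phi_\epsilon(q)$ we obtain $q\ll\phi_\epsilon(q_n)$ in $T$ for $n$ large. By flow invariance $\phi_\epsilon(q_n)\in I^-(\gamma_p,T)$, and concatenating the timelike curve from $q$ to $\phi_\epsilon(q_n)$ with one from $\phi_\epsilon(q_n)$ to $\gamma_p$ yields $q\in I^-(\gamma_p,T)$. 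Hence the set is closed in $T(p)$.

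The expected main obstacle is exactly this closedness step, where a naive approach would try to take a limit of the timelike curves joining $q_n$ to $\gamma_p$ and then struggle to control them (the landing parameters on $\gamma_p$ could run off to infinity, and limit curves of timelike curves are only causal). The flow circumvents this difficulty: pushing the $q_n$ a fixed small $\epsilon$ to the future keeps them in $I^-(\gamma_p,T)$ by invariance, while creating, via openness of $\ll$, the room needed to place the limit point $q$ chronologically before them, so no limit-curve or compactness argument is required. Finally, for the ``in particular'' clause, $M_{end}$ is connected, contains $p$, and lies in $T$, whence $M_{end}\subset T(p)$; and $I^-(\gamma_p,T)\subset I^-(\gamma_p)$ is immediate, since every timelike curve in $T$ is a timelike curve in $M$.
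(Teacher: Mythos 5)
Your proof is correct and follows essentially the same route as the paper: both rest on the flow-invariance of $I^{-}(\gamma_p,T)$, the timelikeness of $k$ on $T$, and the connectedness of $T(p)$ to conclude that this past set is nonempty, open and closed in $T(p)$, hence all of it. The only cosmetic difference is in how closedness is packaged: the paper notes that the achronal boundary $\partial I^{-}(\gamma_p,T)$ in $(T,g)$ is invariant under the timelike Killing flow and therefore empty, whereas you prove relative closedness directly by pushing the approximating points $q_n$ slightly forward with the flow and using openness of $\ll$ --- the underlying mechanism is identical.
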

\begin{proof}
Consider $\partial I^{-}(\gamma_p, T)$ in the spacetime $(T,g)$ then $\partial I^{-}(\gamma_p, T)$ is achronal. Since it is invariant under the flow of $k$ which is timelike on $T$ we have  $\partial I^{-}(\gamma_p, T)$ $ =\emptyset$ and so  $ I^{-}(\gamma_p, T) = T(p)$. $M_{end}\subset T$ is connected and contains $p$ so it is contained in the connected component of $T$ that contains $p$, namely $T(p)$.
$\square$ \end{proof}

Although there is some freedom in defining $M_{end}$, the resulting set $I^-(M_{end})$ is largely independent of it as the next results show.
The next result is an improvement of \cite[Lemma 3.1]{chrusciel94c}.

\begin{lemma}
Let $U$ be a past set ($I^-(U)\subset U$) invariant under the flow of $k$, and suppose that  $U\cap M_{end}\ne \emptyset$. Then $I^-(M_{end})\subset U$.
\end{lemma}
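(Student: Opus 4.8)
The plan is to show that $I^-(M_{end}) \subset U$ by exploiting that $U$ is a flow-invariant past set meeting $M_{end}$, and leveraging the structural result of Lemma~\ref{lem1}. First I would unwind what it means for a point to lie in $I^-(M_{end})$: if $q \in I^-(M_{end})$, then there is some $p \in M_{end}$ with $q \in I^-(p)$. The goal is to produce a point of $U$ to the future of $q$ and then use that $U$ is a past set to conclude $q \in U$.

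The key observation is that $M_{end}$ is connected, invariant under the flow, and $k$ is timelike on it, so by Lemma~\ref{lem1} the whole of $M_{end}$ lies in $T(p) = I^-(\gamma_p, T)$ for any $p \in M_{end}$; in particular any two points of $M_{end}$ are joined, through the flow lines, by chains lying to the timelike past of each other's orbits within $T$. Concretely, I would take the given point $p_0 \in U \cap M_{end}$ and an arbitrary $p \in M_{end}$ with $q \in I^-(p)$. Since both lie in the connected timelike region $M_{end} \subset T(p_0)$, Lemma~\ref{lem1} gives $p \in I^-(\gamma_{p_0})$, i.e.\ $p \in I^-(\phi_s(p_0))$ for some $s \in \mathbb{R}$. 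Chaining the timelike relations, $q \in I^-(p) \subset I^-(\phi_s(p_0))$, so $q \in I^-(\phi_s(p_0))$. Because $U$ is flow-invariant and $p_0 \in U$, we have $\phi_s(p_0) \in U$; and because $U$ is a past set, $I^-(\phi_s(p_0)) \subset U$, whence $q \in U$. Since $q$ was arbitrary this yields $I^-(M_{end}) \subset U$.

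The main obstacle is making the connectedness argument precise: I need that for \emph{every} $p \in M_{end}$ there is some real $s$ with $p \in I^-(\phi_s(p_0))$, i.e.\ that $M_{end} \subset I^-(\gamma_{p_0})$. This is exactly the content of the final clause of Lemma~\ref{lem1}, applied with $p_0$ in place of $p$: since $p_0 \in M_{end}$, one has $M_{end} \subset T(p_0) = I^-(\gamma_{p_0}, T) \subset I^-(\gamma_{p_0})$. Thus the structural lemma already packages the delicate part, and the remaining work is the straightforward chaining of inclusions together with the two defining properties of $U$ (flow-invariance to move $p_0$ along its orbit, and the past-set property to absorb the timelike past). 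I would double-check the direction of the flow parameter, but since the orbit $\gamma_{p_0}$ is the full integral curve the sign of $s$ is immaterial. No separate treatment of edge points or limit behaviour should be needed, since everything takes place through chronological ($I^-$) relations which are open.
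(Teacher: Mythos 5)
Your proof is correct and follows essentially the same route as the paper: both apply Lemma~\ref{lem1} at the point of $U\cap M_{end}$ to get $M_{end}\subset I^-(\gamma_{p_0})$, then combine flow-invariance of $U$ (to place the orbit in $U$) with the past-set property (to absorb the chronological past). The only difference is that you unwind the set inclusions pointwise, whereas the paper states them directly; the content is identical.
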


Of course, a time dual version holds.

\begin{proof}
Let $r\in U\cap M_{end}$, then from lemma \ref{lem1}, $M_{end}  \subset I^{-}(\gamma_r)$. Since $U$ is invariant and a past set, then
$I^{-}(\gamma_r)\subset U$. Hence $ M_{end} \subset U$ and finally
$I^-(M_{end}) \subset U$ since $U$ is a past set.
$\square$ \end{proof}

%

%
%
%

The next result is essentially \cite[Lemma 3.1]{chrusciel94c} but with weaker assumptions.

\begin{lemma} \label{jwx}
Let $E$ be a subset of $M$ invariant under the flow of $k$, then either  $I^-(E)\cap M_{end}=\emptyset$ or $I^-(M_{end}) \subset I^-(E)$.
\end{lemma}

\begin{proof}
It follows from the previous result.
$\square$ \end{proof}


\begin{corollary} \label{boa}
For  $p\in M_{end}$ we have
$I^-(\gamma_p)=I^-(M_{end})$, $I^+(\gamma_p)=I^+(M_{end})$.
\end{corollary}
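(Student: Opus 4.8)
The plan is to establish the equality $I^-(\gamma_p)=I^-(M_{end})$ by two opposite inclusions, and then invoke the time-dual statement for $I^+$. The whole argument rests on two structural facts already in hand: that the integral curve $\gamma_p$ sits inside $M_{end}$, and that $\gamma_p$ is a timelike curve (being an integral curve of $k$, which is timelike on $M_{end}$).

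For the inclusion $I^-(\gamma_p)\subset I^-(M_{end})$, I would first note that since $p\in M_{end}$ and $M_{end}=\phi(\Sigma_{end})$ is invariant under the Killing flow, the orbit $\gamma_p=\phi(p)$ is entirely contained in $M_{end}$. Monotonicity of the chronological past under set inclusion then gives $I^-(\gamma_p)\subset I^-(M_{end})$ immediately.

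For the reverse inclusion $I^-(M_{end})\subset I^-(\gamma_p)$, I would apply Lemma \ref{jwx} with the choice $E=\gamma_p$. The curve $\gamma_p$ is invariant under the flow of $k$, since $\phi_t$ maps the integral curve of $k$ through $p$ onto itself, so $E$ satisfies the hypothesis of the lemma. It then remains only to rule out the alternative $I^-(\gamma_p)\cap M_{end}=\emptyset$: because $k$ is timelike on $M_{end}$, the curve $\gamma_p$ is timelike, so a point of $\gamma_p$ strictly to the future of $p$ has $p$ in its chronological past, whence $p\in I^-(\gamma_p)\cap M_{end}$ and the intersection is nonempty. Lemma \ref{jwx} therefore yields $I^-(M_{end})\subset I^-(\gamma_p)$, and combining with the previous paragraph gives the claimed equality. (Alternatively one could derive the same inclusion directly from Lemma \ref{lem1}, which already records $M_{end}\subset I^-(\gamma_p)$, together with the idempotency $I^-(I^-(\gamma_p))\subset I^-(\gamma_p)$.)

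Finally, the identity $I^+(\gamma_p)=I^+(M_{end})$ follows by the time-dual version of Lemma \ref{jwx}, with the same two observations reversed in time. I do not expect any genuine obstacle here: the only points requiring care are the elementary verifications that $\gamma_p\subset M_{end}$ (invariance) and that $p\in I^-(\gamma_p)$ (timelikeness of the generator), both of which are immediate from the standing assumptions on $M_{end}$.
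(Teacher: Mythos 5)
Your proof is correct and follows essentially the route the paper intends: the corollary is stated immediately after Lemma \ref{jwx} precisely so that one applies it with $E=\gamma_p$ (invariance plus $p\in I^-(\gamma_p)\cap M_{end}$ from timelikeness of $k$ on $M_{end}$), combined with the trivial inclusion $I^-(\gamma_p)\subset I^-(M_{end})$ from $\gamma_p\subset M_{end}$, and the time dual for $I^+$. Your parenthetical alternative via Lemma \ref{lem1} and idempotency of $I^-$ is also exactly the path the paper's preceding results make available.
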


\begin{proposition} \label{cppg}
The orbits of $k$ in $M_{end}$ cannot be future (past) imprisoned in a compact set.
\end{proposition}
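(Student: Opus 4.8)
The plan is to argue by contradiction and to produce a closed timelike curve, thereby contradicting the chronology condition, which is the weakest of the causality assumptions in force. Assume $k$ is future directed on $M_{end}$ (otherwise replace $k$ by $-k$) and suppose the orbit $\gamma_p=\phi(p)$ of a point $p\in M_{end}$ is future imprisoned in a compact set $K$, in the weak sense that there is a sequence $t_n\to+\infty$ with $\phi_{t_n}(p)\in K$; ruling this out rules out total imprisonment as well. By compactness I would pass to a subsequence so that $\phi_{t_n}(p)\to q$ for some $q\in K$. Since the whole orbit lies in the invariant set $T$ on which $k$ is timelike, every orbit segment is a future directed timelike curve, so $\phi_b(p)\in I^+(\phi_a(p))$ whenever $a<b$.

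The step I expect to be the main obstacle is controlling the causal character of $k$ at the accumulation point $q$: a priori $q$ could lie on $\partial T$, where $k$ becomes null or vanishes, and then the argument below would collapse. This is resolved by the observation that $g(k,k)$ is a flow invariant: from $\mathcal{L}_k g=0$ and $\mathcal{L}_k k=0$ one gets $\mathcal{L}_k\, g(k,k)=0$, so $g(k,k)$ equals the negative constant $g(k_p,k_p)<0$ all along $\gamma_p$. By continuity of $g(k,k)$ this forces $g(k_q,k_q)<0$, hence $k$ is timelike at $q$ and $q\in T$. Since $T$ is invariant, the entire orbit of $q$ lies in $T$, and for any $\epsilon>0$ the segment $s\mapsto\phi_s(q)$, $s\in[0,2\epsilon]$, is a future directed timelike curve in $T$, giving $\phi_{2\epsilon}(q)\in I^+(q)$.

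It then remains to close the loop. Because $I^+$ is an open relation and $\phi_{t_n}(p)\to q$ while $\phi_{t_m+2\epsilon}(p)\to\phi_{2\epsilon}(q)$, I would conclude $\phi_{t_n}(p)\in I^-(\phi_{t_m+2\epsilon}(p))$ for all sufficiently large $n$ and $m$. Fixing such an $m$ and then choosing $n$ large enough that also $t_n>t_m+2\epsilon$ (possible since $t_n\to+\infty$), the orbit itself supplies the reverse relation $\phi_{t_n}(p)\in I^+(\phi_{t_m+2\epsilon}(p))$. Thus $\phi_{t_m+2\epsilon}(p)$ and $\phi_{t_n}(p)$ each lie in the chronological future of the other, yielding a closed timelike curve and contradicting the chronology condition. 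The past case is entirely analogous by time duality.
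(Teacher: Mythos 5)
Your construction of the closed timelike curve is sound, and it is in fact more self-contained than the paper's own argument, which invokes the non-trivial cited result that a future imprisoned orbit accumulates on an inextendible causal curve that accumulates on itself. Your key steps are all correct: flow-invariance of $g(k,k)$ (from $\mathcal{L}_k g=0$, $\mathcal{L}_k k=0$) forces the accumulation point $q$ to lie in $T$ with $k_q$ future-directed timelike, and openness of the relation $I^+$ applied to the pair $(q,\phi_{2\epsilon}(q))$ correctly yields points $\phi_{t_m+2\epsilon}(p)$ and $\phi_{t_n}(p)$ on the orbit, each in the chronological future of the other.

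The gap is in the very last step. You conclude by ``contradicting the chronology condition, which is the weakest of the causality assumptions in force'', but no such condition is in force: at this stage the paper assumes nothing about the causality of $(M,g)$ beyond the acausality of $\Sigma_{end}$ (the paper states explicitly, shortly after this proposition, that save for the acausality of $\Sigma_{end}$ no causality conditions have been imposed). A spacetime satisfying the standing hypotheses may well contain closed timelike curves somewhere, so exhibiting one is not by itself a contradiction. The repair is short and is precisely the paper's own closing move: your closed timelike curve passes through $\phi_{t_n}(p)$, a point of the orbit of $p\in M_{end}$; since $M_{end}=\phi(\Sigma_{end})$ we may write $p=\phi_s(x)$ with $x\in\Sigma_{end}$, and the isometry $\phi_{-(t_n+s)}$ maps your closed timelike curve to a closed timelike curve through $x$. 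This is a causal curve connecting $\Sigma_{end}$ to itself, contradicting the acausality of $\Sigma_{end}$. With that substitution your argument becomes a correct and genuinely more elementary alternative to the paper's proof.
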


\begin{proof}
We know that $k$ is timelike on $M_{end}$ and $\Sigma_{end}$ is acausal. Suppose there is a future imprisoned orbit starting from $\Sigma_{end}$, then it  accumulates on an imprisoned inextendible causal curve that accumulates on itself \cite{minguzzi07f,minguzzi18b} (it is not necessarily closed).  Such limit curve would still be an integral curve of $k$ hence timelike, which means, by the openness of the chronology relation, that any point of the limit curve  belongs to the chronology violating set. As the latter is open, the original orbit intersects it, and by isometry that would give that $\Sigma_{end}$ intersects the chronology violating set too, contradicting its acausality.
$\square$ \end{proof}





We recall that a {\em future $C^0$ null hypersurface} $H$ is a locally achronal topological
embedded hypersurface such that for every $p\in H$ there is a  (possibly non-unique) future inextendible lightlike geodesic (called {\em generator}) contained in $H$  with past endpoint $p$  \cite{galloway00}. Future  $C^0$ null hypersurface will also be called  {\em past horizons}.

\begin{proposition}
$H$ is a $C^0$ future null hypersurface.
\end{proposition}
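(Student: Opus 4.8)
The plan is to recognize $H=\partial I^-(M_{end})$ as an achronal boundary and to verify the three defining features of a future $C^0$ null hypersurface in turn: that $H$ is a locally achronal embedded topological hypersurface, that through every point of $H$ there passes a null geodesic generator, and---most importantly---that these generators are future inextendible. A decisive simplification is provided by Corollary \ref{boa}: for any $p\in M_{end}$ one has $I^-(M_{end})=I^-(\gamma_p)$, so that $H=\partial I^-(\gamma_p)$ is the past achronal boundary of a single complete timelike orbit $\gamma_p$ of $k$. This reduces the whole analysis to the familiar structure of the past boundary of a timelike curve.

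For the first two features I would invoke the classical structure theory of achronal boundaries (Penrose). Since $I^-(M_{end})$ is an open past set, $H$ is closed, achronal, and a locally Lipschitz embedded topological hypersurface; as already noted it has no edge. The same theory yields that every point of $H$ not lying in $\overline{\gamma_p}$ is the past endpoint of a future-directed null geodesic segment contained in $H$, the generator. Here I would first record that $\gamma_p\subset I^-(\gamma_p)=I^-(M_{end})$: indeed $k$ is timelike along $\gamma_p$, so $\phi_{s}(\phi_t(p))\in I^+(\phi_t(p))$ for $s>0$ and hence $\phi_t(p)\in I^-(\gamma_p)$ for every $t$. Consequently $\gamma_p$ lies in the open interior of $I^-(M_{end})$ and $\gamma_p\cap H=\emptyset$, so every point of $H$ does lie on a generator.

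The main obstacle is the future inextendibility of the generators, that is, the exclusion of future endpoints. By the structure theorem, the future endpoint of any generator of $\partial I^-(\gamma_p)$ must lie in $\overline{\gamma_p}$. I would therefore establish $\overline{\gamma_p}=\gamma_p$, i.e.\ that the orbit is a closed subset of $M$. This is exactly where the non-imprisonment of the orbits enters: were $\overline{\gamma_p}\setminus\gamma_p\neq\emptyset$, the orbit would accumulate at some point $q$ and hence return to a compact neighborhood of $q$ infinitely often as $t\to+\infty$ or $t\to-\infty$, so it would be (partially) future or past imprisoned in a compact set, contradicting the proposition that orbits in $M_{end}$ cannot be so imprisoned. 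Hence $\overline{\gamma_p}=\gamma_p\subset I^-(M_{end})$, giving $\overline{\gamma_p}\cap H=\emptyset$, so no generator can possess a future endpoint. Therefore each generator, followed to the future from any of its points, remains in $H$ and is future inextendible.

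Combining the three features, every $p\in H$ is the past endpoint of a future-inextendible null geodesic contained in the locally achronal embedded hypersurface $H$, which is precisely the definition of a future $C^0$ null hypersurface. I expect the genuinely delicate point to be the passage from non-imprisonment to the closedness of $\gamma_p$, together with the correct localization of the future endpoints of the generators; the remaining assertions are standard consequences of achronal-boundary theory.
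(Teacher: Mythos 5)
Your reduction of the problem via Corollary \ref{boa} to the achronal boundary $\partial I^-(\gamma_p)$ of a single orbit, and your appeal to the classical structure theorem (every point of $\partial I^-(\gamma_p)\setminus\overline{\gamma_p}$ lies on a null geodesic in the boundary that is future inextendible or has future endpoint on $\overline{\gamma_p}$), are legitimate and genuinely different from the paper's argument. But your decisive step---$\overline{\gamma_p}=\gamma_p$---has a real gap. You justify it by saying that an accumulation point of the orbit would make the orbit ``(partially) future or past imprisoned in a compact set, contradicting the proposition''. The paper's non-imprisonment proposition excludes \emph{total} imprisonment, i.e.\ the orbit eventually remaining inside a compact set: this is the standard meaning of the term, and it is what the proposition's proof actually handles, since the cited results (a totally imprisoned curve accumulates on an inextendible causal curve that accumulates on itself) belong to the structure theory of totally imprisoned curves. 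An orbit with an accumulation point $q$ is only \emph{partially} imprisoned: it returns to every compact neighborhood of $q$ at arbitrarily large parameter values but may wander arbitrarily far away in between. The two notions are inequivalent, and the distinction is critical precisely here, because at this stage the paper assumes no causality condition beyond acausality of $\Sigma_{end}$; strong causality, which would forbid partial imprisonment, is not available to you.

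The gap is not cosmetically repairable. If the accumulation point $q$ had $k(q)$ timelike, one could rescue the claim: by openness of $\ll$ one gets $\gamma_p(t_m)\ll\gamma_p(t_n+1)$ for large $n,m$, which combined with $\gamma_p(t_n+1)\ll\gamma_p(t_m)$ for $t_m>t_n+1$ gives chronology violation on the orbit, and since the chronology violating set is open and invariant under the flow it would meet $\Sigma_{end}$, contradicting acausality. But $q$ need only lie in $\overline{M_{end}}$---the dangerous case being $q\in H$ itself---where $k$ may be null or even vanish, and there no such argument is apparent. Note that the paper's proof is engineered to need only the weaker, established statement: non-total-imprisonment together with a compact exhaustion of $M$ yields a sequence $t_n\to\infty$ with $q_n=\gamma_r(t_n)$ escaping every compact set while keeping $p_n\ll q_n$; the limit curve theorem then directly produces a future-inextendible causal curve starting at $p$ and contained in $\overline{I^-(M_{end})}$, which cannot meet the open past set $I^-(M_{end})$ and is therefore an achronal lightlike geodesic contained in $H$. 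To complete your route you would have to strengthen the non-imprisonment proposition so as to exclude partial imprisonment of the orbits (handling accumulation points where $k$ is null or zero), or else abandon the closedness of $\gamma_p$ and fall back on the limit-curve argument.
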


\begin{proof}
Let $r\in M_{end}$, so that $H=\p I^-(\gamma_r)$, where $\gamma_r$, $\gamma_r(0)=r$, is the orbit passing through $r$.
Let $p\in H=\p I^-(\gamma_r)$ then we can find $p_n\in I^-(\gamma_r)$, $q_n\in \gamma_r$, such that $p_n\ll q_n$, $p_n\to p$.
We can always redefine $q_n$ so that $q_n=\gamma_r(t_n)$ with $t_n\to \infty$ and $q_n$ escaping every compact set. By the limit curve theorem, there is a
a future inextendible continuous causal curve $\gamma$  in $\overline{I^-(M_{end})}$ starting from $p$. But this continuous causal curve cannot intersect $I^-(M_{end})$ otherwise $p\in I^-(M_{end})$, which proves that the image of $\gamma$ is contained in $H$. As $H$ is achronal, $\gamma$ is an achronal lightlike geodesic.
%
%
$\square$ \end{proof}

Observe that $H^+$ is an open invariant  subset of $H$. By continuity the Killing orbits cannot leave a connected component $C$ of $H^+$ to enter another one, thus the Killing flow sends each component of $H^+$ to itself, $\phi(C)=C$.

A set with the property of the next proposition is called {\em cross-section} in \cite[Sec.\ 4.1]{chrusciel08}. It is used in that paper to prove smoothness or analyticity of the event horizon. It has become a standard assumption in the literature on black holes, see e.g.\ \cite{kunduri13}.
Unfortunately, it is so strong that it is essentially equivalent to demanding $C^1$ differentiability of $H$ from the outset, as it basically imposes that there are no non-differentiability points.
\begin{proposition}
If $H$ admits a compact subset $K$ intersected precisely once by every generator, then $H$ is $C^1$ on $H\backslash \overline{J^-(K)}$.
\end{proposition}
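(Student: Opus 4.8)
The plan is to reduce the claim to the differentiability property recalled in Section~1, namely that a horizon is differentiable at the interior points of its generators. Concretely, I would show that every $p\in H\setminus\overline{J^-(K)}$ is an interior point of a (unique) generator, conclude differentiability at each such point, and finally upgrade pointwise differentiability on the open set $\Omega:=H\setminus\overline{J^-(K)}$ to $C^1$ regularity by establishing continuity of the field of tangent null hyperplanes.

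First I would fix $p\in\Omega$ and let $\eta$ be the generator through $p$, understood as the maximal lightlike geodesic of $H$ containing $p$; it has a past endpoint and is future-inextendible, and by hypothesis it meets $K$ at exactly one point $q$. The crucial point is that $q$ lies strictly to the past of $p$ along $\eta$. Indeed $q\in K\subset\overline{J^-(K)}$ whereas $p\notin\overline{J^-(K)}$, so $q\neq p$; and were $q$ to the future of $p$ on $\eta$ we would have $q\in K\cap J^+(p)$, whence $p\in J^-(K)\subset\overline{J^-(K)}$, a contradiction. Thus $\eta$ contains the point $q\in H$ strictly to the past of $p$, while (being future-inextendible) it also contains points to the future of $p$; therefore $p$ is not the past endpoint of $\eta$, i.e.\ $p$ is an interior point of $\eta$.

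Next I would invoke the standard structural facts for horizons used in Section~1: distinct generators can meet only at past endpoints, so an interior generator point lies on a unique generator; and $H$ is differentiable at interior points of generators, the tangent space there being the null hyperplane containing the generator direction. Since the argument of the previous paragraph applies to every point of the open set $\Omega$, the horizon is differentiable throughout $\Omega$ and through each of its points passes a single generator.

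Finally, to obtain $C^1$ regularity I would prove that the Gauss map $p\mapsto T_pH$ is continuous on $\Omega$. If $p_n\to p$ in $\Omega$, the generators through the $p_n$ are lightlike geodesics in the closed achronal set $H$; by the limit curve theorem a subsequence converges to a lightlike geodesic through $p$ contained in $H$, which by uniqueness must be the generator through $p$. Hence the generator directions converge, and with them the tangent null hyperplanes, which yields continuity of the tangent plane field and thus $C^1$ regularity of $H$ on $\Omega$. I expect the last step to be the main obstacle: ruling out a jump of the null direction at $p$ is exactly where the uniqueness of the generator at interior points and the limit curve theorem are indispensable, and it is the genuine content of the passage from differentiability to $C^1$.
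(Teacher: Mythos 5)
Your proof is correct and follows essentially the same route as the paper: the paper argues contrapositively that from any non-differentiability point at least two generators depart, each of which must meet $K$ to its future, forcing all such points into $J^-(K)$, and then cites Beem--Kr\'olak for the fact that differentiability of a horizon on an open set is equivalent to $C^1$ regularity there. The only real difference is that you inline a (correct) limit-curve argument for that last cited equivalence, and phrase the causal step directly (points of $H\setminus \overline{J^-(K)}$ are interior points of generators) rather than contrapositively.
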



\begin{proof}
From every non-differentiability point $p$ of a $C^0$ future null hypersurface $H$ depart at least two distinct generators \cite{beem98}. Since every generator intersects $K$ it must be $p\in J^-(K)$, thus $H$ is differentiable on the open set $H\backslash \overline{J^-(K)}$. For a horizon differentiability on an open set is equivalent to $C^1$ regularity \cite{beem98}.
$\square$ \end{proof}

The next result is a slight improvement over \cite[Lemma 2.1]{friedrich99}. We note that, save for the property of acausality for $\Sigma_{end}$, we did not impose causality conditions so far.

 We recall that past distinction holds at a point $q$ if for all $p\in M$, $I^-(p)=I^-(q) \Rightarrow  p=q$. It also admits an equivalent formulation in terms of the existence of past distinguishing neighborhoods cf.\ \cite[Sec.\ 6.4]{hawking73} \cite[Rem.\ 3.12]{minguzzi06c}.

\begin{proposition} \label{vob}
 From every points $q \in  \overline{I^-(M_{end})}$ starts a  future inextendible continuous causal curve $\sigma$ entirely contained in $\overline{I^-(M_{end})}$ (if $q\in H$, by achronality,  this is necessarily a generator of $H$ passing from $q$).

No point  $q\in  \overline{I^-(M_{end})}$ where past distinction holds is such that $I^-(q)\supset I^-(M_{end})$.

If past distinction holds at $q\in  \overline{I^-(M_{end})}\cap I^+(M_{end})$ then $k$ does not vanish at $q$.
\end{proposition}

\begin{proof}
Let  $r\in M_{end}$.  We know that $I^-(\gamma_r)=I^-(M_{end})$ and from Prop.\ \ref{cppg} $\gamma_r$ escapes every compact set in the future. As  $q\in  \overline{I^-(M_{end})}$ we can find $t_k\to +\infty$ and $q_k\to q$ such that $(q_k, \gamma_r(t_k))\in I$. By the limit curve theorem there is a future inextendible continuous causal curve $\sigma$ starting from $q$ and entirely contained in $\overline{I^-(M_{end})}$.

Suppose $q$ as in the second statement exists and let $p\in \sigma$, $p\ne q$, be in the curve $\sigma\subset \overline{I^-(M_{end})}$ constructed in the first paragraph. Then $I^-(q)\subset I^-(p)$ and, as $p$ belongs to the closure of the past set $I^-(M_{end})$,  $I^-(p)\subset   I^-(M_{end}) \subset I^-(q)$, a contradiction with past distinction at $q$.

For the second statement, if $k$ vanishes at $q\in \overline{I^-(M_{end})}\cap I^+(M_{end})$, then $\phi_t(q)=q$ for every $t$, hence $E:=\{q\}$ is invariant under the flow. Observe that $I^-(E)\cap M_{end}\ne \emptyset$ thus, by   Lemma \ref{jwx}, we have $I^-(q)\supset I^-(M_{end})$, which, due to the previously proved result, gives a contradiction.
$\square$ \end{proof}

%
%

\begin{corollary} \label{jid}
For $p\in M_{end}$
no point   $q\in \overline{I^-(M_{end})}$  where past distinction holds is such that $I^-(q)\supset \gamma_p$.
\end{corollary}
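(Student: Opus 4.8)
The plan is to derive Corollary \ref{jid} directly from Proposition \ref{vob} together with Corollary \ref{boa}. Recall that Corollary \ref{boa} states precisely that for $p \in M_{end}$ one has $I^-(\gamma_p) = I^-(M_{end})$. So the inclusion $I^-(q) \supset \gamma_p$ and the inclusion $I^-(q) \supset I^-(M_{end})$ should be shown to be essentially equivalent, at which point Proposition \ref{vob} finishes the argument by contradiction.

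First I would suppose, for contradiction, that there is a point $q \in H$ where past distinction holds and such that $I^-(q) \supset \gamma_p$. The key observation is that if $I^-(q)$ contains the whole orbit $\gamma_p$, then since $I^-(q)$ is a past set we automatically get $I^-(\gamma_p) \subset I^-(q)$; indeed $I^-(\gamma_p)$ is the chronological past of the curve, and any point chronologically preceding a point of $\gamma_p$ (which lies in $I^-(q)$) chronologically precedes $q$. Combining this with Corollary \ref{boa}, which gives $I^-(\gamma_p) = I^-(M_{end})$, I obtain $I^-(M_{end}) = I^-(\gamma_p) \subset I^-(q)$, that is, $I^-(q) \supset I^-(M_{end})$.

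At this stage the hypothesis of Corollary \ref{jid} has been converted into exactly the situation forbidden by the second assertion of Proposition \ref{vob}: a point $q \in H$ where past distinction holds with $I^-(q) \supset I^-(M_{end})$. Invoking Proposition \ref{vob} then yields the desired contradiction, completing the proof.

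I do not expect any serious obstacle here, since the whole argument is a short chain of set inclusions feeding into an already-established proposition. The only point requiring a little care is the step $\gamma_p \subset I^-(q) \Rightarrow I^-(\gamma_p) \subset I^-(q)$, which relies on the transitivity of the chronology relation $\ll$ (or equivalently on $I^-(q)$ being an open past set); this is routine but worth stating cleanly so that the reduction to Proposition \ref{vob} is transparent.
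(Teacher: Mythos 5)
Your proposal is correct and follows essentially the same route as the paper: combine Corollary \ref{boa} ($I^-(\gamma_p)=I^-(M_{end})$) with the fact that $I^-(q)$ is a past set to upgrade $\gamma_p\subset I^-(q)$ to $I^-(M_{end})\subset I^-(q)$, then invoke Proposition \ref{vob} for the contradiction. The paper's proof is just a terser version of exactly this argument, leaving the transitivity step implicit.
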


\begin{proof}
Indeed, as $I^-(\gamma_p)=I^-(M_{end})$ we would have  otherwise $I^-(q)\supset I^-(M_{end})$, which contradicts the previous proposition.
$\square$ \end{proof}

\begin{proposition} \label{bxa}
Let $p\in M_{end}$.
For each compact subset  $K\subset \overline{I^-(M_{end})}$ at which strong causality holds
we can find $p'\in \gamma_p$ such that $K\cap \overline{I^+(p')}=\emptyset$.
\end{proposition}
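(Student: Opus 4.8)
The plan is to argue by contradiction. I assume that for every $p'\in\gamma_p$ the set $K\cap\overline{I^+(p')}$ is nonempty, and I extract a single point $q\in K$ that is simultaneously ``above'' the entire future of the orbit and on the boundary of its past, and then I violate strong causality at $q$. Writing $\gamma_p(t)=\phi_t(p)$, the sets $K\cap\overline{I^+(\gamma_p(t))}$ are closed subsets of the compact set $K$, hence compact, and they are nested decreasingly in $t$: for $t<t'$ we have $\gamma_p(t)\ll\gamma_p(t')$, so $I^+(\gamma_p(t'))\subset I^+(\gamma_p(t))$ and the same for closures. Under the contradiction hypothesis each member is nonempty, so by the finite intersection property for nested nonempty compacta there is a point $q\in K$ with $q\in\overline{I^+(\gamma_p(t))}$ for \emph{every} $t\in\mathbb{R}$. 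Since $K\subset H$ and strong causality holds on $K$, strong causality, and in particular past distinction, holds at $q$.

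Next I would collect two families of points approaching $q$ from opposite causal directions. On one hand $q\in H=\p I^-(\gamma_p)\subset\overline{I^-(\gamma_p)}$ (using $I^-(M_{end})=I^-(\gamma_p)$ from Corollary~\ref{boa}), so there are $w_m\to q$ with $w_m\in I^-(\gamma_p)$, that is $w_m\ll\gamma_p(u_m)$ for suitable parameters $u_m$. On the other hand I fix a relatively compact neighborhood $U$ of $q$; since the orbit through $p\in M_{end}$ cannot be future imprisoned in the compact set $\overline U$, it exits $\overline U$ at arbitrarily large parameter values, so I may choose $t_n\to+\infty$ with $\gamma_p(t_n)\notin\overline U$. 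Using $q\in\overline{I^+(\gamma_p(t_n))}$ I then pick $x_n\to q$ with $\gamma_p(t_n)\ll x_n$.

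The contradiction now comes from splicing these into a single long timelike curve that leaves a neighborhood of $q$ and returns to it. Given $m$, choose $n$ so large that $t_n>u_m$; then $\gamma_p(u_m)\ll\gamma_p(t_n)$, and concatenating the timelike segments gives
\[
w_m\ll\gamma_p(u_m)\ll\gamma_p(t_n)\ll x_n,
\]
so there is a future-directed timelike curve from $w_m$ to $x_n$ passing through the point $\gamma_p(t_n)\notin\overline U$. I then invoke strong causality at $q$: there is a neighborhood $V\subset U$ of $q$ such that no causal curve with both endpoints in $V$ ever leaves $U$. Because $w_m\to q$ and $x_n\to q$, I can arrange both endpoints to lie in $V$ while the curve still runs through $\gamma_p(t_n)\notin U$, contradicting strong causality at $q$ and thereby proving the proposition.

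The main obstacle I anticipate is the ordering of the quantifiers in this last step. I must fix $U$, and hence select the escaping times $t_n$, \emph{before} producing the strong-causality neighborhood $V$, and only afterwards choose $m$ and $n$ so that $w_m,x_n\in V$ and $t_n>u_m$ hold simultaneously; this sequencing is precisely what forces the spliced curve to exit the fixed neighborhood while keeping its endpoints arbitrarily close to $q$. A secondary technical point is the precise reading of ``not future imprisoned'': I only need that the orbit is not eventually contained in $\overline U$, i.e.\ that it leaves $\overline U$ at arbitrarily large parameter values, which is exactly what the earlier proposition on non-imprisonment of orbits in $M_{end}$ supplies.
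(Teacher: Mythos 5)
Your proof is correct, and its first half coincides with the paper's: both apply the finite intersection property to the nested nonempty compacta $K\cap\overline{I^+(\gamma_p(t))}$ to extract a point $q\in K$ lying in $\overline{I^+(r)}$ for every $r\in\gamma_p$, and both then contradict strong causality at $q$ by exhibiting timelike curves whose endpoints approach $q$ while making an excursion of fixed size away from it. Where you genuinely diverge is in how that excursion is built. The paper stays on the horizon: since $H$ is a $C^0$ future null hypersurface, a generator of $H$ starts at $q$; fixing $r\neq q$ on it to the future of $q$, for any $x\ll q$ and $q'\gg q$ one has $x\ll r$, and $r\in H\subset\overline{I^-(M_{end})}\subset\overline{I^-(q')}$ (the last inclusion because $q'\gg q$ forces $\gamma_p\subset I^-(q')$ and hence $I^-(M_{end})=I^-(\gamma_p)\subset I^-(q')$), so by openness of $I^\pm$ a timelike curve runs from $x$ to a point arbitrarily close to $r$ and on to $q'$. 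You instead route the excursion along the Killing orbit itself: the past endpoints $w_m$ come from $q\in H=\partial I^-(\gamma_p)$ (Corollary \ref{boa}), the future endpoints $x_n$ from $q\in\overline{I^+(\gamma_p(t_n))}$, and the non-imprisonment proposition guarantees times $t_n\to+\infty$ with $\gamma_p(t_n)\notin\overline{U}$, through which the spliced curve passes. Both sets of ingredients are established earlier in the paper, so either route is legitimate. The paper's construction is somewhat leaner: the far point $r$ is fixed once and for all, independent of any neighborhood, so it needs no non-imprisonment result and none of the quantifier bookkeeping over $U$, $t_n$, $V$, $m$, $n$ that you rightly flag as the delicate part of your argument; in exchange, your version never invokes the null-hypersurface structure of $H$ (existence of generators), resting only on properties of the Killing flow and of achronal boundaries.
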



\begin{proof}
If not the  family of closed subsets of the compact set $K$ given by $\{\overline{I^+(r)}\cap K: r\in \gamma_p\}$ satisfies the finite intersection property, which implies that there is $q$ common to all elements of the family. But then any $q'\gg q$ is such that $I^-(q')\supset \gamma_p$ and hence $I^-(q')\supset I^{-}(M_{end})$.

As shown in Prop.\ \ref{vob}, there is a future inextendible continuous causal curve $\sigma$ starting from $q$ and entirely contained in $\overline{I^-(M_{end})}$.
Let $r\ne q$ be a point in $\sigma$, to the future of $q$.   Let $x\ll q$, then $r\gg x$, but  $r\in \overline{I^-(q')}$ thus by the openness of $I^+$ we can, with a future directed timelike curve, start from $x$ reach a point arbitrarily close to $r$ and finally reach $q'$. As $x$ and $q'$ can be chosen arbitrarily close to $q$, strong causality is violated at $q$, a contradiction.
$\square$ \end{proof}

The previous result can be improved as follows
%

\begin{proposition} \label{bxb}
Let $(M,g)$ be strongly causal on an  invariant set $A\subset \overline{I^-(M_{end})}$. There is an invariant open set $V \supset A$ such that, for every  $q\in M_{end}$ and for every compact set $K\subset V$  there is $q'\in \gamma_q$ such that $K\cap \overline{I^+(q')}=\emptyset$.
\end{proposition}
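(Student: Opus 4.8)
The plan is to reduce the statement to a single fact about the strong causality region and then run a violation argument modelled on Proposition \ref{bxa}. First I observe that for fixed $q\in M_{end}$ the sets $\overline{I^+(\phi_t(q))}$ decrease as $t$ increases, so for a compact $K$ the family $\{\overline{I^+(\phi_t(q))}\cap K\}_{t}$ is a nested family of compact sets; by the finite intersection property the desired $q'$ with $K\cap \overline{I^+(q')}=\emptyset$ exists if and only if $K\cap A_q=\emptyset$, where $A_q:=\bigcap_{t\in\mathbb{R}}\overline{I^+(\phi_t(q))}$. Since the strong causality region $\mathcal{S}$ is open, invariant under the isometry $\phi_t$, and contains $H$ by hypothesis, I would take $V=\mathcal{S}$, an invariant open neighborhood of $H$. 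With this choice the proposition is equivalent to the single assertion that $A_q\cap\mathcal{S}=\emptyset$ for every $q\in M_{end}$, because a point of $A_q\cap\mathcal{S}$ would itself constitute a compact set $K\subset V$ with $K\cap A_q=K\neq\emptyset$.

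Suppose then, for contradiction, that there is $x\in A_q\cap\mathcal{S}$. Two structural facts drive the argument. First, $A_q$ is invariant, since $\phi_s\big(\bigcap_t\overline{I^+(\phi_t(q))}\big)=\bigcap_t\overline{I^+(\phi_{t+s}(q))}=A_q$, so the whole orbit $\gamma_x$ lies in $A_q$. Second, reasoning as in Proposition \ref{vob} and Corollary \ref{boa}: if $z\in A_q$ and $z''\gg z$ then, since $z\in\overline{I^+(\phi_t(q))}$ for every $t$ and $I^-(z'')$ is an open neighborhood of $z$, one gets $\phi_t(q)\ll z''$ for all $t$, hence $\gamma_q\subset I^-(z'')$ and therefore $I^-(z'')\supset I^-(\gamma_q)=I^-(M_{end})$. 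Assuming first that $k(x)\neq 0$, the orbit through $x$ is timelike (the norm of $k$ is constant along its own integral curves), so $\phi_{-1}(x)\in A_q$ with $\phi_{-1}(x)\ll x$; taking $z=\phi_{-1}(x)$ and $z''=x$ yields $I^-(x)\supset I^-(M_{end})\supset\gamma_q$, that is, $\phi_t(q)\ll x$ for every $t\in\mathbb{R}$. Thus $x$, and likewise every point of $\gamma_x$, lies in the chronological future of the entire orbit $\gamma_q$.

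From here I would derive a violation of strong causality at $x$, contradicting $x\in\mathcal{S}$, in the spirit of Proposition \ref{bxa} and of the non-imprisonment result proved above. I fix a causally convex neighborhood $U\ni x$ with compact closure and points $x_-=\phi_{-\epsilon}(x)\ll x\ll\phi_{\epsilon}(x)=x_+$ in $U$; by invariance both $x_\pm$ lie in the future of all of $\gamma_q$, and $I^-(x_+)\supset I^-(M_{end})$. Since $\gamma_q$ is not future imprisoned it escapes every compact set, so one may pick $t$ with $\phi_t(q)\notin\overline{U}$, and then $\phi_t(q)\ll x_-$ and $\phi_t(q)\ll x_+$ produce future directed causal curves from the fixed exterior point $\phi_t(q)$ into $U$. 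The heart of the matter, and the step I expect to be the main obstacle, is to close these into a single future directed causal curve that starts in $U$, leaves $\overline{U}$, and re-enters $U$ arbitrarily near $x$: because $x$ need not lie on $H$, there is no null generator available to play the role it did in Proposition \ref{bxa}, so the returning segment must instead be extracted from the boundary $\overline{I^-(M_{end})}\subset\overline{I^-(x_+)}$ together with the limit curve theorem applied to the escaping family $\phi_t(q)$. Once such a loop is produced, strong causality at $x$ is contradicted. Finally, the degenerate case $k(x)=0$ is handled separately: then $x$ is a fixed point with $I^-(x)\supset I^-(M_{end})$, so $x$ shares its chronological past with its entire future generator, violating past distinction as in Proposition \ref{vob}; since $\mathcal{S}$ enforces past distinction, this too is a contradiction.
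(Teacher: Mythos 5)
Your opening reduction is sound: for $q\in M_{end}$ the sets $\overline{I^+(\phi_t(q))}$ are nested (since $k$ is timelike on $M_{end}$), so for compact $K$ the existence of $q'\in\gamma_q$ with $K\cap\overline{I^+(q')}=\emptyset$ is indeed equivalent to $K\cap A_q=\emptyset$. The fatal step is the choice $V=\mathcal{S}$: it converts the proposition into the single claim $A_q\cap\mathcal{S}=\emptyset$, and that claim is \emph{false} under the paper's hypotheses, so the step you yourself flag as the main obstacle is not a technical difficulty but an impossibility. Concretely, consider the maximally extended non-extremal Reissner--Nordstr\"om spacetime, with $k$ the (complete) static Killing field and $\Sigma_{end}$ a spacelike acausal hypersurface in one exterior region: all hypotheses of Proposition \ref{bxb} hold, and strong causality holds everywhere. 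Yet a point $x$ on (or just beyond) the inner Cauchy horizon satisfies $\gamma_q\subset I^-(x)$ for every exterior static orbit $\gamma_q$ --- this is the classical infinite-blueshift statement that an observer crossing the Cauchy horizon sees the entire future history of the exterior universe --- hence $x\in\bigcap_t I^+(\phi_t(q))\subset A_q$ while $x\in\mathcal{S}$. This also explains why your plan to mimic Proposition \ref{bxa} cannot close: the arguments of Propositions \ref{vob} and \ref{bxa} use essentially that the offending point lies on $H=\p I^-(M_{end})$, so that its future lightlike generator stays in $\overline{I^-(M_{end})}\subset\overline{I^-(x)}$ and an almost-closed causal curve results; off the horizon, $I^-(x)\supset I^-(M_{end})$ is perfectly compatible with strong causality, and no contradiction exists to be found. (A secondary slip: $k(x)\neq 0$ does not make the orbit through $x$ timelike; constancy of $g(k,k)$ along the orbit only fixes its causal character, and $k$ can be spacelike or null near and beyond $H$, as the paper itself stresses.)

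The paper's proof goes in the opposite direction: instead of taking the largest candidate neighborhood and trying to verify the property on it, it manufactures a small $V$ on which your property holds by construction. For a fixed $p\in M_{end}$ and each single point $q\in H$ (a compact subset of $H$), Proposition \ref{bxa} provides $p'(q)\in\gamma_p$ with $q\notin\overline{I^+(p'(q))}$, hence an open set $O(q)$ disjoint from $\overline{I^+(p'(q))}$; the invariant open set $V(p)=\phi\bigl(\bigcup_{q\in H}O(q)\bigr)$ is then disjoint from $A_p$ by construction. For compact $K\subset V(p)$ one takes a finite subcover by sets $\phi_{t_i}(O(q_i))$ and picks the point of the finite family $\{\phi_{t_i}(p'(q_i))\}$ lying last along $\gamma_p$, which by the nestedness you observed works for all of $K$ simultaneously; finally, Corollary \ref{boa} transfers the statement from the fixed orbit $\gamma_p$ to an arbitrary $q\in M_{end}$. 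If you wish to salvage your scheme, the repair is exactly this: do not take $V=\mathcal{S}$, but build $V$ so that $V\cap A_p=\emptyset$ from the outset; your reduction then becomes the trivial part of the argument, and the construction of $V$ the substantive one.
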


 This result can be applied with $A$ replaced by $H$, $H^+$,  any connected component of $H^+$, or $\overline{I^-(M_{end})}$ itself. In the following propositions  we shall only consider the $H^+$ case, though generalizations are possible.

\begin{proof}
Suppose that for every $p\in M_{end}$ there is $V(p)$,  invariant open set containing  $A$, with the property that for every compact set  $K\subset V(p)$   there is $p'\in \gamma_p$ such that $K\cap \overline{I^+(p')}=\emptyset$.

Let us first prove that under this assumption the claim of the proposition holds. Indeed, let $V:=V(r)$ for some $r\in M_{end}$.  Let $q\in M_{end}$ and $K\subset V$. By the assumed property there is $r'\in \gamma_r$ such that $K\cap \overline{I^+(r')}=\emptyset$.
We have $r'\in M_{end} \subset I^-(M_{end})=I^-(\gamma_q)$, thus there is $q'\in \gamma_q$ such that $q'\in I^+(r')$ and hence $\overline{I^+(q')}\subset  \overline{I^+(r')}$ which implies $K\cap \overline{I^+(q')}=\emptyset$.

Let us prove the assumption  in the first paragraph of the proof.
Let $p\in M_{end}$ and let  $q\in A$.  Accordingly to Prop.\ \ref{bxa} there is $p'(q) \in \gamma_p$ such that  $q\notin \overline{I^+(p')}$. As this last set is closed we can find an open set $O(q)$ such that $O(q)\cap  \overline{I^+(p')}=\emptyset$. Let us define the invariant open set
\[
V(p)=\cup\{ \phi_t(O(q)), q\in  A, t\in \mathbb{R}\}=\phi(\cup_q O(q)),
\]
 and let $K\subset V(p)$ be compact. There is a finite covering $\{U_i\}$ of $K$, $U_i:=\phi_{t_i}(O(q_i))$, to whose elements $O(q_i)$ correspond points $p'_i\in \gamma_p$ such that $O(q_i)\cap  \overline{I^+(p_i')}=\emptyset$. The last point $p'$  of the finite family $\{\phi_{t_i}(p'_i)\}$ is then such that $K\cap \overline{I^+(p')}=\emptyset$.
$\square$ \end{proof}


\begin{theorem} \label{kkf}
Suppose that $(M,g)$ is strongly causal at $H^+$. Then there is an invariant open neighborhood $W$ of $H^+$ over which $k\ne 0$ and strong causality holds and defining $V:= W\cap I^+(M_{end})$  the action $\phi: \mathbb{R}\times V \to V$, $(t,p) \mapsto \phi_t(p)$, is proper and free.
\end{theorem}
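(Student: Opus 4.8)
\emph{The plan} is to build $W$ as a finite intersection of invariant open sets each containing $H^+$, to establish properness by a limit/escape argument, and to deduce freeness from properness. First I would set $W:=V_0\cap\{k\neq 0\}\cap\mathcal S$, where $V_0$ is the invariant open neighbourhood of $H^+$ furnished by Proposition~\ref{bxb} (applied with $H$ replaced by $H^+$, as permitted by the remark following it), $\{k\neq 0\}$ is the open set on which the Killing field does not vanish, and $\mathcal S$ is the open set of points at which strong causality holds. Each of the three sets contains $H^+$: the first by Proposition~\ref{bxb}; the second because strong causality implies past distinction, whence $k\neq 0$ on $H^+$ by Proposition~\ref{vob}; the third by hypothesis. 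Each is invariant because $\phi_t$ is an isometry and hence preserves $k$, the causal relations, and therefore the strong-causality locus. Thus $W$ is an invariant open neighbourhood of $H^+$ with $k\neq 0$ and strong causality throughout, and $V:=W\cap I^+(M_{end})$ is invariant and open, so that $\phi$ really maps $\mathbb R\times V$ into $V$.

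For properness I would argue by contradiction through the sequential criterion: if the action on $V$ were not proper there would be $p_n\to p$ in $V$, parameters $t_n$ with $|t_n|\to\infty$, and $\phi_{t_n}(p_n)\to p'$ in $V$; interchanging the two sequences if necessary we may take $t_n\to+\infty$. Fix $q\in M_{end}$ and let $K$ be the compact set formed by the two sequences together with their limits $p,p'$. Since $W\subset V_0$, Proposition~\ref{bxb} provides $q'=\gamma_q(s_0)$ with $K\cap\overline{I^+(q')}=\emptyset$. Now $p\in V\subset I^+(M_{end})=I^+(\gamma_q)$ by Corollary~\ref{boa}, so $\gamma_q(\sigma)\ll p$ for some $\sigma$, and by openness of $I^+$ together with $p_n\to p$ we get $\gamma_q(\sigma)\ll p_n$ for large $n$; applying the isometry $\phi_{t_n}$ yields $\gamma_q(\sigma+t_n)\ll\phi_{t_n}(p_n)$. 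Because $q\in M_{end}$ and $M_{end}$ is invariant with $k$ future-directed timelike there, the whole orbit $\gamma_q$ is a future-directed timelike curve; hence as soon as $\sigma+t_n>s_0$ we have $q'\ll\gamma_q(\sigma+t_n)\ll\phi_{t_n}(p_n)$, i.e.\ $\phi_{t_n}(p_n)\in I^+(q')$. Letting $n\to\infty$ gives $p'\in\overline{I^+(q')}$, contradicting $K\cap\overline{I^+(q')}=\emptyset$. This proves properness.

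Freeness then follows quickly. Since $k\neq 0$ on $V$, the flow has no fixed point there, so for each $p\in V$ the stabiliser is a closed subgroup of $\mathbb R$ distinct from $\mathbb R$, hence either $\{0\}$ or $\tau\mathbb Z$ with $\tau>0$. In the latter case the orbit of $p$ would be periodic and $\{t:\phi_t(\{p\})\cap\{p\}\neq\emptyset\}=\tau\mathbb Z$ would be unbounded, contradicting the properness just established; hence every stabiliser is $\{0\}$ and the action is free.

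\emph{The main obstacle} is the properness step: one must choose the compact set correctly and combine the escape property of Proposition~\ref{bxb} with the fact that the orbits in $M_{end}$ are timelike, so that growth of the flow parameter forces genuine chronological progress past the point $q'$ that escapes $K$. A minor point is the reduction of the case $t_n\to-\infty$ to $t_n\to+\infty$, achieved by exchanging the roles of the sequences $(p_n)$ and $(\phi_{t_n}(p_n))$.
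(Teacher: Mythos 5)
Your proposal is correct and takes essentially the same route as the paper: the neighbourhood $W$ is assembled from the same three ingredients (the invariant neighbourhood of Prop.~\ref{bxb}, the strong-causality locus, and the set $\{k\neq 0\}$, which contains $H^+$ by Prop.~\ref{vob}), and your properness argument, though phrased through the sequential criterion rather than by directly bounding $\{t:\phi_t(K_1)\cap K_2\neq\emptyset\}$, rests on exactly the same mechanism as the paper's: Cor.~\ref{boa} places points of $V$ in $I^+(\gamma_q)$, so the flow eventually pushes them into $I^+(q')$ along the future-directed timelike orbit $\gamma_q$, while Prop.~\ref{bxb} forces the relevant compact subset of $V$ to avoid $\overline{I^+(q')}$. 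The only organizational difference is freeness: the paper proves it independently (a periodic point $p$ would give $\gamma_q\subset I^-(p)$, contradicting Prop.~\ref{bxb} with $K=\{p\}$), whereas you deduce it from properness via compactness of stabilizers, which is equally valid and marginally more economical since the trivial group is the only compact subgroup of $(\mathbb{R},+)$.
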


Note that $H^+$ is closed in the topology of $V$.
\begin{proof}

By  \cite[Prop.\ 4.82]{minguzzi18b} the set at which strong causality holds is open, thus there is an open neighborhood $U$ of $H^+$ at which strong causality holds. By Prop.\ \ref{vob} $k\ne 0$ on $H^+$ hence, by continuity, there is an open neighborhood $U'$ of $H^+$ over which $k\ne 0$. The open neighborhood of $H^+$, $R:=\phi(U\cap U')$, is such that $k\ne 0$ on it and strong causality holds on it.  By Prop.\ \ref{bxb} we can find an invariant open  neighborhood $Z$ as in that result.
Let $W=R\cap Z$ and  $V:=W\cap I^+(M_{end})$.

Suppose the action $\phi$ is not free on $V$. Then we can find $p\in V$ and $t\ne 0$ such that $\phi_t(p)=p$, which implies $p=\phi_{-t}(p)$ and so for every $m\in \mathbb{Z}$, $\phi_{mt}(p)=p$. But for $q\in M_{end}$, $p\in I^+(r)$ for some $r\in \gamma_q$, which implies $p\in I^+(\phi_{mt}(r))$ for every $m\in \mathbb{Z}$ and hence $\gamma_q\subset I^-(p)$, in contradiction with the property of Prop.\ \ref{bxb} for $K=\{p\}$  (or Cor.\ \ref{jid}). This shows that the action $\phi$ is free.

Let $r\in M_{end}$ and let $K_1,K_2\subset V$ be compact subsets.  By Cor.\ \ref{boa} for each $p\in K_1$ we can find some $q\in \gamma_r$ such that $p\in I^+(q)$, thus, passing to a finite subcovering, we see that we can choose $x\in \gamma_r$ such that $K_1\subset I^+(x)$. Moreover, we know from Prop.\ \ref{bxb}, and from $K_2\subset Z$ that there is $y\in \gamma_r$ such that $K_2 \cap \overline{I^+(y)}=\emptyset$, so that the same is true for every $y'\ge y$, $y'\in \gamma_r$.
We can choose $y$ so that  $y\gg x$. Let $\tau>0$ be such that $\phi_\tau(x)=y$. Then it cannot be $\phi_t(K_1)\cap K_2\ne \emptyset$ for any $t\ge \tau$.

Indeed, if there were $z\in \phi_t(K_1)\cap K_2$, for some $t\ge \tau$ then $z':=\phi_{-t}(z)\in K_1$ would be such that $z=\phi_t(z')\in K_2$. Now $z'\in I^+(x)$ thus $z\in I^+(\phi_t(x))$, a contradiction with $I^+(y')\cap K_2=\emptyset$ for $y'\ge y$, $y'\in \gamma_r$.

As $t$ such that $\phi_t(K_1)\cap K_2\ne \emptyset$ is upper bounded, so is $s$ such that $K_1\cap \phi_s(K_2)\ne \emptyset$. But since the latter equation is equivalent to $\phi_{-s}(K_1) \cap K_2\ne \emptyset$, we conclude that the $t$ such that $\phi_t(K_1)\cap K_2\ne \emptyset$ are bounded. The map $\phi: (t,x)\mapsto (\phi_t(x),x)$ is continuous, thus $\phi^{-1}(K_2\times K_1)$ is closed, but $\phi^{-1}(K_2\times K_1)\subset I\times K_1$ with $I$ compact interval, which proves  properness.
$\square$ \end{proof}

Unless otherwise specified in the following  $V$ will be a neighborhood of $H^+$ with the properties of Theorem \ref{kkf}.

From the standard result \cite[Thm.\ 3.34]{alexandrino15} \cite[Thm.\ 1.11.4 55]{duistermaat00} \cite[Thm.\ 21.10]{lee13} we obtain (by {\em manifold} we understand Hausdorff manifold)

\begin{corollary} \label{bjs}
Suppose that $(M,g)$ is strongly causal at $H^+$.
The quotient $V/\mathbb{R}$ is a  manifold and so $\pi: V\to V/\mathbb{R}$ is a trivial principal bundle with structure group $(\mathbb{R},+)$. As a consequence,  $H^+$ is diffeomorphic to the product $\mathbb{R}\times S$, with $S:=\pi(H^+)\subset B:=V/\mathbb{R}$, where $S$ is closed subset of $V/\mathbb{R}$, and where the orbits are the whole $\mathbb{R}$-fibers.
\end{corollary}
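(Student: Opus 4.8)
The plan is to deduce the whole statement from Theorem \ref{kkf} together with the quotient manifold theorem cited immediately above it. By Theorem \ref{kkf} the $\mathbb{R}$-action $\phi$ on $V$ is smooth, free and proper, so that standard result applies verbatim: the orbit space $B:=V/\mathbb{R}$ is a Hausdorff smooth manifold (the Hausdorff separation being precisely what properness buys), the projection $\pi\colon V\to B$ is a smooth surjective submersion, and $(V,\pi)$ is a principal bundle with structure group $(\mathbb{R},+)$. Freeness forces every stabilizer to be trivial, so for each $p\in V$ the orbit map $t\mapsto\phi_t(p)$ is injective; together with properness this realizes each orbit as a closed embedded submanifold diffeomorphic to $\mathbb{R}$. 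In particular no orbit closes up into a circle, which is exactly the assertion that the orbits fill the \emph{whole} $\mathbb{R}$-fibers.

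The only point that is not purely formal is triviality, so this is where I would concentrate. A principal bundle is trivial iff it admits a global smooth section, and for the abelian, contractible structure group $(\mathbb{R},+)$ such a section always exists over a paracompact manifold base. Concretely, pick a trivializing cover $\{U_\alpha\}$ with local sections $s_\alpha$ and additive transition cocycle $f_{\alpha\beta}\colon U_\alpha\cap U_\beta\to\mathbb{R}$, satisfying $f_{\alpha\gamma}=f_{\alpha\beta}+f_{\beta\gamma}$; averaging against a subordinate partition of unity $\{\rho_\gamma\}$ via $g_\alpha:=\sum_\gamma\rho_\gamma f_{\alpha\gamma}$ yields $g_\alpha-g_\beta=f_{\alpha\beta}$, so the corrected local sections $b\mapsto\phi_{g_\alpha(b)}(s_\alpha(b))$ agree on overlaps and glue to a global section. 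Equivalently, the obstruction lives in $H^1(B;\underline{\mathbb{R}})$, which vanishes because the sheaf of smooth functions is fine. Fixing such a section produces an equivariant diffeomorphism $V\cong\mathbb{R}\times B$ intertwining $\phi$ with translation in the first factor and $\pi$ with projection onto $B$.

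It remains to transport this product structure to $H^+$. By Proposition \ref{vis} the set $H^+$ is flow-invariant, hence saturated, so $\pi^{-1}(\pi(H^+))=H^+$; writing $S:=\pi(H^+)\subset B$, the trivialization of the previous paragraph restricts to a diffeomorphism $H^+=\pi^{-1}(S)\cong\mathbb{R}\times S$, with the $\mathbb{R}$-factor sweeping out the full orbits as claimed. Finally, since $\pi$ is a quotient map and $\pi^{-1}(S)=H^+$ is closed in $V$ (as noted just after Theorem \ref{kkf}), the set $S$ is closed in $B$. Thus the main obstacle is not any of these deductions but the input on which they rest, the properness and freeness supplied by Theorem \ref{kkf}; granting that, the corollary is a clean repackaging of the quotient manifold theorem and the contractibility of the fiber $\mathbb{R}$.
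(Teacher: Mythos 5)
Your proposal is correct and takes essentially the same route as the paper: both rest on the quotient manifold theorem applied to the free proper action from Theorem \ref{kkf}, both prove triviality of the $(\mathbb{R},+)$-bundle by a partition-of-unity argument (you split the additive transition cocycle to glue a global section, while the paper averages the local fiber coordinates $t_i$ into a global function $t$ with $k(t)=1$ --- the same computation in dual form), and both deduce closedness of $S$ from $H^+=\pi^{-1}(S)$ being closed and $\pi$ being a quotient map. The only substantive difference is cosmetic: the paper's explicit trivializing function $t$ is reused later in the text, whereas your section-based trivialization is equivalent but leaves that function implicit.
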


The smooth function $t: V\to \mathbb{R}$ constructed in the proof and which realizes the trivialization  will be used in the following (it does not need to be a time function).

\begin{proof}
Indeed the bundle, which exists by the mentioned results, as any $\mathbb{R}$-bundle, is trivial. We recall this fact as it allows us to introduce some notation. Let $\{O_i\}$ be a locally finite covering of $B:=V/\mathbb{R}$ such that the bundle trivializes over $O_i$. Let $\rho_i$ be a smooth partition of unity relative to a covering $\{O_i\}$ and let  $t_i:=\pi_1\circ h_i$ with $h_i:\pi^{-1}(O_i)\to \mathbb{R}\times O_i$  local trivializing maps, then with $t:=\sum_i (\rho_i\circ \pi) t_i$ the map $(t,\pi): V\to \mathbb{R}\times B$, trivializes the bundle (observe that $k( t)= \sum_i (\rho_i\circ \pi) k( t_i)= \sum_i \rho_i\circ \pi =1$, thus $k=\p/\p t$ in the trivialization). Thus if $K\subset V$ is a compact subset, then $t(K)$ is compact which means that no orbit of $k$ can be imprisoned in a compact set (note that the function $t$ increases over the Killing orbits but is not necessarily a time function on $V$).

Since the projection is a quotient map and $H^+=\pi^{-1}(S)$ is closed, it follows that $S$ is closed.
$\square$ \end{proof}

\begin{corollary} \label{erjs}
Suppose that $(M,g)$ is strongly causal at $H^+$.
There is a smooth codimension one hypersurface $\Sigma$ in $V$ which intersects exactly once every Killing orbit (hence those belonging to $H^+$) and which is diffeomorphic to $B$. Its intersection $\sigma=\Sigma\cap H^+$  is diffeomorphic to $S$ and hence has the same number of components of $H^+$. By removing it, $H^+\backslash \sigma$ gets twice the original number of components.

No integral curve of $k$ can forward accumulate on some $q\in V$. In particular, the Killing orbits of $k$ on $H^+$ cannot be closed and they forward escape every compact set.
\end{corollary}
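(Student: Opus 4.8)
The plan is to build the desired cross-section directly from the trivializing function $t:V\to\mathbb{R}$ produced in Corollary \ref{bjs}, exploiting the identity $k(t)=1$ (equivalently $k=\p/\p t$ in the trivialization). First I would set $\Sigma:=t^{-1}(0)$. Since $\dd t(k)=k(t)=1\neq 0$ everywhere, $t$ is a submersion and $\Sigma$ is a smooth embedded codimension one hypersurface of $V$. Along any Killing orbit one has $\frac{\dd}{\dd s}t(\phi_s(p))=k(t)|_{\phi_s(p)}=1$, hence $t(\phi_s(p))=t(p)+s$; therefore the orbit through $p$ meets $\Sigma$ at the single parameter value $s=-t(p)$, so $\Sigma$ intersects every Killing orbit exactly once. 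Finally $\pi|_\Sigma:\Sigma\to B$ is a smooth bijection (its inverse sends $b\in B$ to the unique zero of $t$ on the fibre $\pi^{-1}(b)$, namely $(0,b)$ in the trivialization), hence a diffeomorphism, so $\Sigma$ is diffeomorphic to $B$.

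Next I would identify $\sigma=\Sigma\cap H^+$. Since $H^+=\pi^{-1}(S)$, we have $\sigma=(\pi|_\Sigma)^{-1}(S)$, so the diffeomorphism $\pi|_\Sigma$ restricts to a diffeomorphism $\sigma\to S$; in particular $\sigma$ is diffeomorphic to $S$. By Corollary \ref{bjs} we have $H^+\cong\mathbb{R}\times S$ with $t$ playing the role of the $\mathbb{R}$-coordinate, so the number of connected components of $H^+$ equals that of $S$, hence that of $\sigma$. Under this product identification $\sigma$ corresponds to $\{0\}\times S$, whence $H^+\setminus\sigma\cong(\mathbb{R}\setminus\{0\})\times S$; each component of $S$ thus yields exactly two components, doubling the component count.

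For the dynamical claims I would again use $t(\phi_s(p))=t(p)+s\to+\infty$ as $s\to+\infty$. If an integral curve forward accumulated on some $q\in V$, i.e.\ $\phi_{s_n}(p)\to q$ with $s_n\to+\infty$, then continuity of $t$ would give $t(p)+s_n=t(\phi_{s_n}(p))\to t(q)$, which is impossible since the left-hand side diverges; so no forward accumulation occurs. In particular no orbit on $H^+$ can be closed, since a period $T>0$ would force $t(p)=t(\phi_T(p))=t(p)+T$ (this also follows from the freeness of the action established in Theorem \ref{kkf}). Finally, for any compact $K\subset V$ the set $t(K)$ is compact, hence bounded above by some $N$; since $t(\phi_s(p))=t(p)+s>N$ once $s>N-t(p)$, the orbit leaves $K$ and never returns, so it forward escapes every compact set.

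The argument is essentially bookkeeping once Theorem \ref{kkf} and Corollary \ref{bjs} are in hand: the substantive content---freeness and properness of the $\mathbb{R}$-action, triviality of the bundle, and the existence of the global fibrewise-affine function $t$ with $k(t)=1$---has already been secured. The only point demanding a little care is checking that $\Sigma=t^{-1}(0)$ is a genuine global cross-section meeting each orbit exactly once and that it is compatible with the product decomposition of $H^+$; both facts are immediate from $k(t)=1$. I therefore do not anticipate a serious obstacle in this corollary.
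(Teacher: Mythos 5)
Your proof is correct and follows essentially the same route as the paper: the paper likewise takes $\Sigma$ to be a level set of the trivializing function $t$ from Corollary \ref{bjs} and derives all claims from the triviality of the $\mathbb{R}$-bundle, with the non-imprisonment of orbits coming from the compactness of $t(K)$. Your write-up simply makes explicit the identity $t(\phi_s(p))=t(p)+s$ and the resulting bookkeeping that the paper leaves implicit.
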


We use the word {\em forward} instead of {\em future} because $k$ can be spacelike on $V$.

\begin{proof}
All the results are consequence of   the $\mathbb{R}$-bundle being trivial.
The hypersurface $\Sigma$ is just a level set $t=cost$ from the trivializing diffeomorphism.
$\square$ \end{proof}

\begin{proposition} \label{act}
Let $(M,g)$ be strongly causal at $H^+$. There are sets $W$ and $V$ as in Theorem \ref{kkf}. Moreover, for any  $\tau>0$ the action $\psi:\mathbb{Z}\times V \to V$,  $(n, p)\mapsto \phi_{n\tau}(p)$ is proper and free.
\end{proposition}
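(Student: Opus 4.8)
The plan is to reduce the statement about the $\mathbb{Z}$-action $\psi$ to the already-established properness and freeness of the $\mathbb{R}$-action $\phi$ of Theorem~\ref{kkf}. Since $\psi$ is simply the restriction of $\phi$ to the discrete subgroup $\tau\mathbb{Z}\subset\mathbb{R}$, intuitively both properties should be inherited, but the two directions require slightly different arguments.

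\emph{Freeness.} I would argue directly. Suppose $\psi(n,p)=\phi_{n\tau}(p)=p$ for some $p\in V$ and some $n\neq 0$. Then $p$ is a fixed point of $\phi_{n\tau}$ with $n\tau\neq 0$, which contradicts the freeness of the $\mathbb{R}$-action established in Theorem~\ref{kkf} (there it was shown that $\phi_s(p)=p$ with $s\neq 0$ leads, via Prop.~\ref{bxb} applied to $K=\{p\}$, to $\gamma_q\subset I^-(p)$ for $q\in M_{end}$, a contradiction). Hence no nontrivial element of $\mathbb{Z}$ fixes any point, so $\psi$ is free.

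\emph{Properness.} Here I would use the characterization via compact sets: the action $\psi$ is proper iff for all compact $K_1,K_2\subset V$ the set $\{n\in\mathbb{Z}: \phi_{n\tau}(K_1)\cap K_2\neq\emptyset\}$ is finite. From the properness of $\phi$ proved in Theorem~\ref{kkf}, the set $I:=\{t\in\mathbb{R}: \phi_t(K_1)\cap K_2\neq\emptyset\}$ is compact, hence bounded. Therefore $\{n\in\mathbb{Z}: n\tau\in I\}$ is the intersection of the discrete set $\tau\mathbb{Z}$ with a bounded set, which is finite. Since the preimage $\psi^{-1}(K_2\times K_1)$ is contained in $\{n: n\tau\in I\}\times K_1$, a product of a finite (hence compact) set with a compact set, and is closed by continuity of $\psi$, it is compact. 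This gives properness.

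The main obstacle, such as it is, is mostly bookkeeping: ensuring that the sets $W$ and $V$ can be taken to be literally the same as in Theorem~\ref{kkf} (which the statement already grants, since it asserts their existence ``as in Theorem~\ref{kkf}'') so that all the hypotheses—$k\neq 0$, strong causality, and the escape property of Prop.~\ref{bxb}—are in force on $V$. Once that is secured, no new geometric input is needed; the result is purely the restriction of a proper free $\mathbb{R}$-action to the lattice $\tau\mathbb{Z}$, and the only genuinely used facts are the compactness of $I$ and the discreteness of $\tau\mathbb{Z}$.
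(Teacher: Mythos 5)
Your proof is correct. On properness you and the paper are doing the same thing: the paper simply declares properness ``immediate from Theorem \ref{kkf}'', and your argument --- compactness of $I=\{t\in\mathbb{R}:\phi_t(K_1)\cap K_2\neq\emptyset\}$ from the properness of the $\mathbb{R}$-action, finiteness of $\tau\mathbb{Z}\cap I$, and closedness of $\psi^{-1}(K_2\times K_1)$ inside the compact set $\{n: n\tau\in I\}\times K_1$ --- is precisely the justification that declaration presupposes. The genuine difference is in freeness: the paper does \emph{not} inherit it formally from Theorem \ref{kkf}, but re-runs the causal argument from scratch, i.e.\ it assumes $\phi_{n\tau}(p)=p$ with $n\neq 0$, picks $z\in\gamma_r$ with $p\in I^+(z)$, iterates the isometry to conclude $I^-(p)\supset\gamma_r$, and contradicts Prop.~\ref{bxb} applied to $K=\{p\}$. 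Your observation that this is unnecessary --- a fixed point of $\phi_{n\tau}$ with $n\tau\neq 0$ already violates the freeness of the $\mathbb{R}$-action established in Theorem \ref{kkf}, since the restriction of a free action to a subgroup is free --- is valid and strictly more economical. The paper's longer route buys only a self-contained display of the geometric mechanism (the escape property of Prop.~\ref{bxb}); logically it duplicates work already done, and your version exposes that the proposition is purely a statement about restricting a proper free $\mathbb{R}$-action to the lattice $\tau\mathbb{Z}$.
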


\begin{proof}
Properness is immediate from Theorem \ref{kkf}.

Let $r\in M_{end}$
so that, by the construction of $V$, the property of  Prop.\ \ref{bxb}
holds, namely for every $K\subset V$ we can find $r'\in \gamma_r$ such that $K\cap \overline{I^+(r')}=\emptyset$.


Suppose that there is $n\ne 0$ and $p\in V$  such  that $\phi_{n\tau}(p)=p$. Since $I^+(M_{end})=I^+(\gamma_r)$ we know that there is some $z \in \gamma_r$ such that $p\in I^+(z)$ and hence for every  $k\in \mathbb{Z}$, $p\in I^+(\phi_{kn\tau}(z))$. This implies $I^-(p)\supset \phi(z)=\gamma_r$.
But by Prop. \ref{bxb} this is not possible, just set $K=\{p\}$.
$\square$ \end{proof}

\begin{corollary} \label{bjz}
Let $(M,g)$ be strongly causal at $H^+$.
Let $\tau>0$ and consider the action $\psi$ as in Prop.\ \ref{act}.
The quotient $V/\mathbb{Z}$ is a manifold, and there is a  normal covering map $c: V \to V/\mathbb{Z}$. Furthermore, there is a trivial $S^1$-principal bundle  $\hat \pi: V/\mathbb{Z}\to V/\mathbb{R}$ with $\pi=\hat \pi \circ c$.
\end{corollary}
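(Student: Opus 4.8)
The plan is to derive all three assertions from the standard theory of proper free actions, using as the main lever the explicit trivialization of $\pi$ already produced in the proof of Corollary~\ref{bjs}. First I would invoke Proposition~\ref{act}: for the fixed $\tau>0$ the $\mathbb{Z}$-action $\psi:(n,p)\mapsto\phi_{n\tau}(p)$ is proper and free. For a \emph{discrete} group a smooth proper free action is exactly a properly discontinuous (covering-space) action, so the quotient manifold theorem \cite[Thm.\ 21.10]{lee13} yields that $V/\mathbb{Z}$ is a Hausdorff manifold and that $c:V\to V/\mathbb{Z}$ is a smooth covering map. Since this covering is obtained from the free proper action of the group $\mathbb{Z}$ itself, its deck transformation group is precisely $\mathbb{Z}$ acting by $\psi$; in particular $c$ is a normal (regular) covering, as claimed.

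Next I would construct $\hat\pi$. Because the $\mathbb{Z}$-orbit of any $p$ sits inside its full $\mathbb{R}$-orbit $\gamma_p$, the assignment $c(p)\mapsto\pi(p)$ is well defined and descends to a smooth surjection $\hat\pi:V/\mathbb{Z}\to V/\mathbb{R}=B$ with $\pi=\hat\pi\circ c$. The residual flow passes to the quotient: as $(\mathbb{R},+)$ is abelian, each $\phi_s$ commutes with $\psi$, so it induces an action of $\mathbb{R}/\tau\mathbb{Z}\cong S^1$ on $V/\mathbb{Z}$ whose orbits are exactly the fibers of $\hat\pi$. This realizes $\hat\pi$ as an $S^1$-principal bundle over $B$.

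Finally, triviality is read off directly from the trivialization $(t,\pi):V\to\mathbb{R}\times B$ of Corollary~\ref{bjs}, in which $k=\partial/\partial t$, so that $\phi_s$ acts as $(t,b)\mapsto(t+s,b)$ and $\psi$ acts as $(t,b)\mapsto(t+n\tau,b)$. Quotienting the first factor by $\tau\mathbb{Z}$, the map $(t\bmod\tau,\pi)$ descends to a diffeomorphism $V/\mathbb{Z}\to(\mathbb{R}/\tau\mathbb{Z})\times B=S^1\times B$ that intertwines the induced $S^1$-action with left multiplication on the first factor, exhibiting $\hat\pi$ as the trivial $S^1$-bundle and completing $\pi=\hat\pi\circ c$ as the composition of the two trivial fibrations.

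I do not expect a genuine obstacle in this corollary: the only substantive external input is the quotient manifold theorem applied to the discrete group $\mathbb{Z}$ (for which freeness plus properness is the same as properly discontinuous), while every bundle-theoretic identification—the descent of $\hat\pi$, the passage of the flow to an $S^1$-action, and the triviality—follows mechanically from the already-constructed function $t$ with $k(t)=1$. The one point worth stating carefully is that the $\mathbb{Z}$-action is a \emph{sub}action of the $\mathbb{R}$-action along the same orbits, which is what guarantees both that $\hat\pi$ is well defined and that its fibers are circles $\mathbb{R}/\tau\mathbb{Z}$ rather than arbitrary quotients.
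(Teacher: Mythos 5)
Your proof is correct, and its first half coincides with the paper's: both rest on Proposition~\ref{act} plus the quotient theorem for proper free actions of the discrete group $\mathbb{Z}$ (the paper cites \cite[Thm.\ 21.13]{lee13}, which already packages the normal-covering conclusion that you re-derive via the deck-transformation argument). Where you genuinely diverge is in the construction of $\hat\pi$ and the proof of triviality. The paper defines $\hat\pi$ only locally, by composing local inverses of the covering map $c$ with $\pi$, and gets triviality abstractly: a smooth section $s\colon B\to V$ of the trivial $\mathbb{R}$-bundle $\pi$ pushes forward to the section $c\circ s$ of $\hat\pi$, and a principal bundle admitting a global section is trivial. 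You instead define $\hat\pi$ globally by descent ($\pi$ is constant on $\mathbb{Z}$-orbits, and $c$ is a submersion) and prove triviality by explicit computation in the trivialization $(t,\pi)$ of Corollary~\ref{bjs}: since $k=\partial/\partial t$, the $\mathbb{Z}$-action becomes $(t,b)\mapsto(t+n\tau,b)$, so $(t\bmod\tau,\pi)$ descends to an equivariant diffeomorphism $V/\mathbb{Z}\cong(\mathbb{R}/\tau\mathbb{Z})\times B$. Your route is slightly longer but buys something the paper leaves implicit: the equivariant diffeomorphism makes manifest that the induced $S^1$-action is free (inherited from the freeness in Theorem~\ref{kkf}) and hence that $\hat\pi$ genuinely carries a principal structure, a point glossed over by the paper's terse local construction --- and also by your own middle paragraph taken in isolation, whose claim that the orbit decomposition ``realizes $\hat\pi$ as an $S^1$-principal bundle'' needs exactly this freeness; since your final paragraph supplies it, there is no gap. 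Conversely, the paper's section-pushforward argument is shorter and never needs the explicit form of the flow in the trivialization.
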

\begin{proof}
The group $\mathbb{Z}$ endowed with the discrete topology is, being countably infinite, a {\em discrete Lie group}. We have already established that its action is proper and free. By \cite[Thm.\ 21.13]{lee13} there exists a normal covering map $c: V \to V/\mathbb{Z}$.

The projection $\tilde \pi$ is constructed by noticing that every neighborhood of $p\in V/\mathbb{Z}$ is diffeomorphic with a neighborhood of point in the fiber of the covering space. Which point is chosen is irrelevant as different choices shall have canonically diffeomorphic neighborhoods. Thus composing the local diffeomorphism with $\pi$ we get $\tilde \pi$ in a neighborhood of $p$. This shows that the bundle does indeed exist.

The last bundle is trivial because, if $s: B\to V$, $B=V/\mathbb{R}$, is a smooth section for $\pi:V\to B$,  then $c\circ s:B\to V/\mathbb{Z}$ is a section for   $\hat \pi: V/\mathbb{Z}\to B$.
$\square$ \end{proof}

We denote $\hat H=H^+/\mathbb{Z}$ so we have similar restrictions (denoted in the same way), namely a covering $c: H^+\to \hat H$, and a trivial $S^1$-principal bundle $\hat \pi: \hat H \to S$.
%
%
%
%
%
%

\section{Proving smoothness} \label{moot}

The next paragraphs and  Thm.\ \ref{vim} are given to provide characterization (b) in Def.\ \ref{bid} of `horizon with compact section'  but can be skipped on first reading.


The {\em Killing hull} of a set $S$ is the union of the orbit segments of the Killing field that start and end in $S$. A set is {\em Killing convex} if it coincides with its Killing hull.


The next result clarifies what it means for the horizon to have compact  space sections. It does not demand a  compact set to intersect all the generators or all the Killing orbits once, as both requests are strong and should rather be deduced from weaker assumptions. Our disconnection assumption does not mention generators (compare \cite[Sec.\ 4.1]{chrusciel08}).

\begin{theorem} \label{vim}
Let $(M,g)$ be strongly causal at $H^+$.
Let $K$ be a compact subset of $H^+$ which is  Killing convex and whose removal doubles the number of open components (e.g.\ if $H^+$ is connected, removal of $K$ disconnects it in two components).
Then every orbit of $k$ in $H^+$ intersects $K$, that is $H^+=\phi(K)$.
As a consequence, the quotients $S=\pi(H^+)=H^+/\mathbb{R}$, $\hat H=c(H^+)=H^+/\mathbb{Z}$ are compact.
\end{theorem}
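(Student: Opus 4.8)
The plan is to transport the whole question into the trivialization of Corollary \ref{bjs}. Since $H^+$ is invariant it is saturated for $\pi$, so $H^+=\pi^{-1}(S)$, and the trivializing diffeomorphism $(t,\pi):V\to\mathbb{R}\times B$ restricts to a homeomorphism $H^+\cong\mathbb{R}\times S$ that sends each Killing orbit to a fibre $\mathbb{R}\times\{s\}$ and satisfies $k=\partial/\partial t$. In these coordinates the assertion ``every orbit meets $K$'' becomes ``$\pi(K)=S$'', and the two stated consequences are then immediate: $\pi$ is constant along orbits, so $H^+=\phi(K)$ gives $S=\pi(H^+)=\pi(K)$, the continuous image of a compact set, hence compact; and since $\hat\pi:\hat H\to S$ is the trivial $S^1$-bundle of Corollary \ref{bjz}, $\hat H\cong S^1\times S$ is compact as well. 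Thus the heart of the matter is to deduce $\pi(K)=S$ from the doubling-of-components hypothesis.

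I would first record that, by Corollary \ref{bjs}, $t$ is bounded on the compact set $K$, say $K\subset[-T,T]\times S$; hence the past and future ``collars'' $\{t<-T\}\cap H^+$ and $\{t>T\}\cap H^+$ are disjoint from $K$ and, being products (ray)$\times S$, have exactly the components of $S$, i.e. of $H^+$. Next I use Killing convexity: for each orbit $\gamma$ the set $K\cap\gamma$ is a union of orbit segments with endpoints in $K$, so it is connected, and being compact it is a closed interval $[a(s),b(s)]$ (possibly a point, possibly empty) in the $t$-parameter. I then analyse the removal component by component, writing the components of $H^+$ as $C_\alpha\cong\mathbb{R}\times S_\alpha$ with $S_\alpha$ connected, indeed path-connected (as $H^+$ is a $C^0$ hypersurface, hence a locally path-connected topological manifold), and setting $K_\alpha=K\cap C_\alpha$, $S_{K,\alpha}=\pi(K_\alpha)$.

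The key local dichotomy to establish is that $C_\alpha\setminus K_\alpha$ has exactly two components when $S_{K,\alpha}=S_\alpha$ and exactly one otherwise. In the first case Killing convexity makes $C_\alpha\setminus K_\alpha$ the disjoint union of the open future region $\{t>b(s)\}$ and past region $\{t<a(s)\}$; each is path-connected because any two of its points can be joined by first flowing up into the collar $\{t>T\}\cap C_\alpha\cong(T,\infty)\times S_\alpha$ (resp. down into $\{t<-T\}\cap C_\alpha$), which is connected, and then flowing back, staying above $b$ (resp. below $a$) throughout. In the second case, pick an orbit $\gamma_0$ in $C_\alpha$ missed by $K$: its fibre joins the two collars inside $C_\alpha\setminus K_\alpha$, while every other point can be flowed into one of the collars (upward if it lies above $K\cap\gamma$, downward if below, which by convexity are the only possibilities), so $C_\alpha\setminus K_\alpha$ is connected. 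I expect this topological bookkeeping, and in particular the verification that no third component can ever appear, to be the main obstacle, since it is precisely what forces the use of Killing convexity.

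Finally I would assemble the count. Removing $K$ leaves each unfilled component connected and splits each fully met component into exactly two, so the number of components of $H^+\setminus K$ equals the number of components of $H^+$ plus the number $f$ of $\alpha$ with $S_{K,\alpha}=S_\alpha$. The doubling hypothesis thus reads: $f$ equals the total number of components. Since $K$, being compact, meets only finitely many of the (open) components, this forces $H^+$ to have finitely many components and $S_{K,\alpha}=S_\alpha$ for every $\alpha$; that is, $K$ meets every orbit, so $\pi(K)=S$ and $H^+=\phi(K)$. The compactness of $S$ and of $\hat H$ then follows as in the first paragraph.
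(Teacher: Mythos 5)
Your proposal is correct and follows essentially the same route as the paper's own proof: both work in the trivialization $H^+\cong\mathbb{R}\times S$ of Corollary \ref{bjs}, use level sets (collars) of $t$ beyond the $t$-range of the compact set $K$, classify points according to whether their orbit meets $K$ in the backward direction, in the forward direction, or not at all, and conclude that each component of $S$ contributes exactly two components to $H^+\setminus K$ when all its orbits meet $K$ and only one otherwise, so that doubling forces every orbit to meet $K$ and hence $S=\pi(K)$ is compact. Your write-up only adds explicit bookkeeping (path-connectedness of the future/past regions via the collars, finiteness of the number of components met by $K$) that the paper leaves implicit.
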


\begin{proof}
We use the trivialization throughout the proof.
We know that $H^+$ has the same number of components of $S$. Every point of  $H^+\backslash K$ is of three types: those that belong to an orbit that meets $K$ in the backward direction, those  that belong to an orbit that meets $K$ in the forward direction, those that belong to orbits that do not meet $K$. Those of the first type that project on the same component of $S$, say $C$, belong to the same component of $H^+$. Indeed, their connected orbits in $H^+\backslash K$ will all intersect a suitable hypersurface $\sigma_f$ diffeomorphic to $C$, and not intersecting $K$,  (in a level set of $t$). A similar conclusion is reached for those of the second type, whose connected orbits will intersect $\sigma_p$. If there is a point of the third type that projects in $C$, then its orbit will intersect both $\sigma_f$ and $\sigma_p$ which means that $\pi^{-1}(C)$ is a whole single component. In conclusion, we have two components of  $H^+\backslash K$ for each component of $S$, unless there are points of the third type in which case there would be less components. If the components double there are no points of the third type, namely every orbit intersect $K$, and so $S=\pi(K)$ is compact.
$\square$ \end{proof}

The next definition should not be confused with the definition given in \cite[Sec.\ 4.1]{chrusciel08}. Only in the next section we shall prove the existence of a compact subset of $H^+$ intersected by every generator precisely once, a result which is assumed in \cite{chrusciel08} and which is basically a $C^1$ assumption on $H^+$.

\begin{definition} \label{bid}
We say that $H^+$ has {\em compact projection} if the following equivalent properties hold
\begin{itemize}
\item[(a)]  the projection $S$ is compact,
\item[(b)] There is a  Killing convex compact set $K\subset H^+$ whose removal doubles the number of components.
\end{itemize}
\end{definition}

 Similar definitions apply to the connected components of $H^+$.

\begin{proof}[of the equivalence]
(a)$\Rightarrow$ (b). In the trivialization just let $K$ be any level set of $t$.
(b)$\Rightarrow$ (a). This is Theorem \ref{vim}.
$\square$ \end{proof}

\begin{theorem} \label{bos}
Let $(M,g)$ be a  spacetime endowed with a $C^3$ metric that satisfies the null convergence condition. Let $k$ be a complete  Killing vector field.
Let $\Sigma_{end}$ be an acausal  hypersurface, possibly with edge, over which $k$ is timelike.  Let $M_{end}:=\phi(\Sigma_{end})$ and  suppose that  the horizon $H^+:=\p I^-(M_{end})\cap I^+(M_{end})$, is connected, has compact projection (cf.\ Def.\ \ref{bid}) and that strong causality holds on it. If $\theta\ge 0$ on $H^+$ in the sense of support functions (implied by variants of asymptotic flatness condition, see discussion below)
then $H^+$ is a totally geodesic future null hypersurface as regular as the metric (e.g.\ smooth/analytic if the metric is smooth/analytic).  If $H^+$ is not connected the result holds on each connected component with compact projection provided strong causality and the condition $\theta\ge 0$ hold over such component.
\end{theorem}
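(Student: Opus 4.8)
The plan is to deduce the statement from the existing regularity theory for \emph{compact} horizons under the null convergence condition \cite{minguzzi14d,larsson14} by transporting the problem to the quotient spacetime $V/\mathbb{Z}$, in which the horizon becomes compact. First I would set up the quotient geometry. By Theorem~\ref{kkf} there is an invariant neighbourhood $V$ of $H^+$ on which $k\neq 0$, strong causality holds, and the $\mathbb{R}$-action is proper and free; fixing $\tau>0$, Proposition~\ref{act} and Corollary~\ref{bjz} furnish a proper free $\mathbb{Z}$-action $\phi_{n\tau}$ together with a normal covering $c\colon V\to V/\mathbb{Z}$. Since each $\phi_{n\tau}$ is an isometry and $c$ is a covering, the quotient inherits a $C^3$ Lorentzian metric $\hat g$ for which $c$ is a local isometry, and, the null convergence condition being pointwise, it holds on $(V/\mathbb{Z},\hat g)$ as well.

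Next I would verify that $\hat H:=c(H^+)$ is a compact horizon. Because $H^+$ is invariant under the full Killing flow it is $\mathbb{Z}$-invariant, so $c^{-1}(\hat H)=H^+$ and $c|_{H^+}\colon H^+\to\hat H$ is exactly the covering recorded at the close of the bundle construction; as the action is free and properly discontinuous and $H^+$ is closed, $\hat H$ is a closed embedded subset. By Corollary~\ref{bjz} it is the total space of a trivial $S^1$-bundle over $S=\pi(H^+)$, which is compact by the compact-projection hypothesis (Definition~\ref{bid}); hence $\hat H\cong S^1\times S$ is compact. Since $c$ is a local isometry, $\hat H$ is a $C^0$ future null hypersurface whose generators are the images of those of $H^+$ (null geodesics map to null geodesics), and the support-function inequality $\theta\ge 0$ descends, because upper/lower barrier hypersurfaces at a point of $H^+$ are carried by $c$ to barriers at the corresponding point of $\hat H$.

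Now I would invoke the compact case by applying the regularity result of \cite{minguzzi14d} to $\hat H\subset(V/\mathbb{Z},\hat g)$: the Raychaudhuri equation read in the support (lower-$C^2$) sense gives $\dot\theta\le-\theta^2/(n-1)-\sigma^2-\mathrm{Ric}(\ell,\ell)\le 0$ along generators, so $\theta$ is non-increasing; compactness imprisons the inextendible generators and forces recurrence, which together with $\theta\ge 0$ pins $\theta\equiv 0$ and $\sigma\equiv 0$, so $\hat H$ is totally geodesic and, by the consequent absence of generator crossings, as smooth as $\hat g$. Finally, total geodesy and smoothness are local properties invariant under the local isometry $c$, hence they lift from $\hat H$ to $H^+$, which is therefore a totally geodesic future null hypersurface as smooth as $g$.

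The main obstacle I anticipate is making the passage to the compact case rigorous: one must confirm that $\hat H$ genuinely satisfies the hypotheses under which \cite{minguzzi14d} operates, using only compactness, the generator structure, the null convergence condition, and the support-sense sign of $\theta$, and \emph{not} any global causality of $V/\mathbb{Z}$, which may well fail since $k$ can be spacelike on $V$ (so $V/\mathbb{Z}$ may contain closed causal curves). Care is likewise needed with the merely $C^0$ regularity of $H^+$: the expansion, the Raychaudhuri inequality, and the recurrence-forces-vanishing argument must all be run in the barrier/support sense, which is precisely the lower-$C^2$ machinery that the cited compact results are designed to accommodate.
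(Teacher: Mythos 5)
Your proposal is correct and follows essentially the same route as the paper: pass to the quotient $V/\mathbb{Z}$ of Corollary \ref{bjz}, observe that the compact-projection hypothesis makes $\hat H = c(H^+)$ a compact $C^0$ future null hypersurface with $\theta \ge 0$ descending in the support sense, apply the compact-horizon regularity results of \cite{minguzzi14d,larsson14} (with the Cauchy-horizon hypothesis there replaced by $\theta\ge0$, exactly the substitution the paper makes explicit), and lift total geodesy and smoothness back through the local isometry $c$. Your anticipated obstacles---that no global causality of $V/\mathbb{Z}$ may be invoked and that everything must run in the lower-$C^2$/barrier sense---are precisely the points the paper's parenthetical remarks address.
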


The generators can escape $H^+$ in the past direction, hence $H^+$ can have edge. The totally geodesic property implies that the expansion and shear vanish and then the Raychaudhuri equation implies that $R(n,n)=0$ where $n$ is a tangent field to the generators.

\begin{remark}
Let us introduce the property:
\begin{itemize}
\item[$\star$] There is a neighborhood $O$ of $H^+$ such that for every compact set $C\subset O$, $C\cap I^{-}(M_{end})\ne \emptyset$, there is a future complete geodesic $\eta \subset \p J^+(C,{M})$ starting from $C$ that intersects $M_{end}$,
\end{itemize}
From the physical point of view $\star$ states that it is possible to leave the region near the horizon at fastest speed without incurring in additional singularities while reaching the safe region at infinity $M_{end}$.

In the context of spacetimes admitting a conformal completion with asymptotic infinity $\mathscr{I}^+$, the  condition $\star$ can be proved under some reasonable causality conditions. For instance, Hawking deduced it from  stronger but  physically motivated conditions on the asymptotic structure, and in particular from the assumption of {\em asymptotic predictability
} (weak cosmic censorship): $\mathscr{I}^+\subset \overline{D^+(S)}$ where $S$ is a partial Cauchy hypersurface and the closure is in the topology of $\bar{M}$. The reader is referred to \cite{hawking73,chrusciel01} for a discussion of the reasonability of $\star$. In the regularity result by Chru\'sciel and Costa \cite[Thm.\ 4.11]{chrusciel08} it is also present though framed again using
 $\mathscr{I}^+$, see also \cite[Thm.\ 4.10]{chrusciel08}. The formulation $\star$ allows typically for  shorter and less technical  presentations.

As Hawking showed, $\star$ allows one to prove that the expansion is positive on the horizon \cite[Lemma 9.2.2]{hawking73}, a fact which ultimately leads to the proof of the second law of black hole thermodynamics which states that the area of black holes is non-decreasing (see \cite{chrusciel01,minguzzi14d} for a proof without smoothness assumptions).
The fact that Hawking's argument on the positivity of $\theta$ can be adapted to the non-smooth case is  non-trivial and was proved in \cite[Theor.\ 4.1]{chrusciel01}. These authors still work in the geometry of spacetime admitting a suitable conformal completion. This requires more assumptions, though the framework is compatible with that adopted in this work, in which the horizon is defined via the boundary of $I^-(M_{end})$.
We adapt the result  \cite[Theor.\ 4.1]{chrusciel01} to the present framework as follows.
\begin{theorem} \label{pdi}
If $\star$ holds true, then $\theta\ge 0$ on $H^+$.
\end{theorem}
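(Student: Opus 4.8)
The plan is to argue by contradiction along the lines of Hawking's positivity‑of‑expansion argument \cite[Lemma 9.2.2]{hawking73}, in the non‑smooth form of Chru\'sciel, Delay, Galloway and Howard \cite[Theor.\ 4.1]{chrusciel01}, with the global input supplied entirely by $\star$ in place of the conformal boundary $\mathscr{I}^+$. Suppose $\theta\ge 0$ fails at some $p\in H^+$ in the support sense; unwinding this should produce a $C^2$ past local support hypersurface touching $H^+$ at $p$ whose future null expansion at $p$ is strictly negative. From it I would extract a smooth compact spacelike codimension‑two surface $\mathcal{S}$ through a point near $p$ and push it slightly into $I^-(M_{end})$, so that, by continuity of the second fundamental form, the expansion $\theta_{\mathcal S}$ of the future outgoing null normal congruence (the one limiting onto the generator direction of $H^+$) is strictly negative everywhere on $\mathcal{S}$. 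Taking $\mathcal{S}$ inside the neighborhood $O$ with $\mathcal{S}\cap I^-(M_{end})\ne\emptyset$, it qualifies as an admissible set $C$ in $\star$.

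Next I would feed $C=\mathcal{S}$ into $\star$, obtaining a future complete null geodesic $\eta\subset \p J^+(\mathcal{S},M)$ with past endpoint on $\mathcal{S}$ and meeting $M_{end}$. Since $\p J^+(\mathcal{S},M)$ is an achronal boundary and $\eta$ issues from the spacelike surface $\mathcal{S}$, the generator $\eta$ leaves $\mathcal{S}$ orthogonally, so its initial expansion coincides with $\theta_{\mathcal S}<0$ at the emission point. This is exactly the configuration to which the focusing argument applies.

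Now the Raychaudhuri equation for the orthogonal null congruence,
\[
\frac{\dd \theta}{\dd \lambda}=-\frac{\theta^2}{n-1}-\sigma_{ab}\sigma^{ab}-R_{ab}n^a n^b,
\]
together with $\sigma_{ab}\sigma^{ab}\ge 0$ and the null convergence condition $R_{ab}n^an^b\ge 0$, gives $\dd\theta/\dd\lambda\le -\theta^2/(n-1)$; starting from a negative value this forces $\theta\to-\infty$, i.e.\ a focal point of $\mathcal{S}$ along $\eta$ at some finite affine parameter $\lambda_0$, the future completeness of $\eta$ guaranteeing that this value is actually reached. Past a focal point the geodesic enters the chronological future, $\eta(\lambda)\in I^+(\mathcal{S})$ for $\lambda>\lambda_0$, hence $\eta(\lambda)\notin \p J^+(\mathcal{S},M)$; this contradicts $\eta\subset \p J^+(\mathcal{S},M)$ (and a fortiori its reaching $M_{end}$, which lies further out). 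The contradiction yields $\theta\ge 0$ on $H^+$.

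I expect the main obstacle to be the first step: making ``$\theta\ge 0$ in the sense of support functions'' precise and converting its failure into a genuine \emph{smooth} compact surface with \emph{strictly} negative expansion throughout, so that whichever generator $\star$ returns necessarily emanates from the negative‑expansion region. This is where the merely $C^0$ nature of $H^+$ enters and where the support‑function techniques of \cite{chrusciel01} (and the semiconcavity methods of \cite{minguzzi14d}) must be imported; in particular one has to control the edge of the local surface $\mathcal{S}$ — e.g.\ by arranging $\theta_{\mathcal S}<0$ on a full neighborhood and locating the past endpoint of $\eta$ in the interior, or through a limiting family of surfaces — so that the focal‑point conclusion is not spoiled by generators issuing from $\p \mathcal{S}$. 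Once the smooth surface is in hand, the remainder is the standard achronal‑boundary‑versus‑focusing dichotomy.
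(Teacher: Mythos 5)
Your proposal is correct and is essentially the paper's own argument: the paper proves Theorem \ref{pdi} by citation, saying the proof is obtained from the sketch in \cite[Thm.\ 22]{minguzzi14d} (originally \cite[Thm.\ 4.1]{chrusciel01}) with $I^{-}(\mathscr{I}^+, \bar{M})$ replaced by $I^-(M_{end})$ --- i.e.\ precisely the Hawking-type focusing contradiction you outline, with $\star$ supplying the future complete generator of $\p J^+(\mathcal{S},M)$ reaching $M_{end}$ in place of $\mathscr{I}^+$. The technical points you flag (converting failure of the support inequality into a smooth surface of strictly negative expansion, and controlling the edge of $\mathcal{S}$ so that the generator provided by $\star$ departs orthogonally from the negative-expansion region) are exactly the ones the paper likewise defers to those references.
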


A proof can be obtained from the  sketch  of proof in \cite[Thm.\ 22]{minguzzi14d} (see also the original reference \cite[Thm.\ 4.1]{chrusciel01}), with some trivial replacements such as $I^{-}(\mathscr{I}^+, \bar{M}) \to I^-(M_{end})$. Of course, the inequality $\theta\ge 0$ is understood in a  support function sense, as $H^+$ is a priori non-differentiable.

We are ready to prove the main theorem.
It is interesting to observe that the assumptions  coincide with those that guarantee the validity of the second law of black hole thermodynamics so they are pretty reasonable.
\end{remark}

\begin{proof}[of Theorem \ref{bos}]
 Without loss of generality we can assume that $H^+$ is connected, otherwise  apply the following argument to a single component (note that the previous results generalize to this case without difficulty).

Under the quotient of Corollary \ref{bjz} the horizon $H^+$ projects to a compact $C^0$ future null hypersurface $\hat H$ which is topologically a product. Moreover, we have $\theta\ge 0$ on $\hat H$, thus by the result in
\cite[Thm.\ 1.43]{larsson14} \cite[Thm.\ 18]{minguzzi14d} it follows that $\hat H$ and hence $H^+$ is a totally geodesic lightlike hypersurface as regular as the metric (the Cauchy horizon condition in these theorems is used to infer future completeness of the generators from which $\theta  \ge 0$ is inferred, thus for our purposes the Cauchy horizon condition can be replaced by $\theta \ge 0$, indeed  \cite[Thm.\ 13]{minguzzi14d} implies that $\theta=0$, $\mu^s_{ij}=0$ and  \cite[Thm.\ 17]{minguzzi14d} implies that it is as regular as the metric). Notice that although through each point of $H^+$ passes a unique generator, the generators of $H^+$ need not be past-inextendible (in some examples they are past incomplete and  can be extended escaping $H^+$).
$\square$ \end{proof}

\begin{remark}
Summarizing, from asymptotic conditions it is possible to infer condition $\star$, a kind of future completeness condition, from which, in turn, it is possible to obtain, by using the Raychaudhuri equation and the achronality of the horizon, $\theta\ge 0$, and hence the smoothness of the horizon via Theorem \ref{bos}.

One could think of a different strategy for obtaining $\theta\ge 0$.
In presence of a Killing field we could try to compactify the horizon, as done in Cor.\ \ref{bjz}, and apply the dichotomy non-degenerate/degenerate (i.e.\ all  generators are future complete (or all past complete), or  all incomplete) of compact horizons \cite{moncrief83,moncrief08,reiris21,minguzzi21}. By staying in the future complete case, we could infer $\theta\ge 0$ from the achronality of the horizon and then get the smoothness of the horizon with a proof analogous to the above.

The problem is that available proofs of the mentioned dichotomy apply to compact horizons that are already known to be smooth, so one would have first to extend the ribbon argument and other analytical techniques  used in the proof of the dichotomy  to the non-smooth horizon setting, a strategy which is not entirely clear could be successfully pursued given the technical difficulties involved.
\end{remark}

\section{Existence of cross-sections} \label{cnqwc}

In this section our objective is to obtain as much information as possible on the existence of special sections of $H^+$, particularly with reference to the behavior of generators.

\begin{theorem} \label{vmlh}
Suppose past distinction holds on $H^+$.
For every generator $\gamma:  I \to H^+$, there is no  $s\in I$, such that for some $r>0$,  $\phi_r(\gamma(s)) \le \gamma(s)$ (in particular, Killing orbits might repeatedly intersect the same generator but always in the future direction).
\end{theorem}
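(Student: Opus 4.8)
The plan is to argue by contradiction: assuming $\phi_r(q)\le q$ for some $q=\gamma(s)\in H^+$ and some $r>0$, I would show that the entire Killing orbit through a point of $M_{end}$ falls into the chronological past $I^-(q)$, which is forbidden by Corollary \ref{jid} since past distinction holds at $q\in H^+$. Only the facts $q\in H^+$ and past distinction at $q$ are needed, so it is irrelevant which generator $q$ sits on.

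First I would iterate the hypothesis. Since $\phi_r$ is an isometry it carries future-directed causal curves to future-directed causal curves, so $a\le b$ implies $\phi_r(a)\le\phi_r(b)$. Applying this to $\phi_r(q)\le q$ gives $\phi_{2r}(q)=\phi_r(\phi_r(q))\le\phi_r(q)\le q$, and by induction $\phi_{nr}(q)\le q$ for every $n\ge 1$.

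Next I would inject a point of $M_{end}$. Fix $p\in M_{end}$; by Corollary \ref{boa} one has $I^+(\gamma_p)=I^+(M_{end})$, and since $q\in H^+\subset I^+(M_{end})$ there is $z_0=\phi_{t_0}(p)\in\gamma_p$ with $z_0\ll q$. Applying the isometry $\phi_{nr}$ to $z_0\ll q$ and combining with $\phi_{nr}(q)\le q$ through the push-up property ($a\ll b\le c\Rightarrow a\ll c$) yields $\phi_{t_0+nr}(p)\ll\phi_{nr}(q)\le q$, hence $\phi_{t_0+nr}(p)\ll q$ for all $n\ge 0$; that is, arbitrarily late points of the orbit of $p$ lie in $I^-(q)$.

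Finally I would cover the whole orbit. Because $M_{end}$ is invariant one has $\gamma_p\subset M_{end}$, and $k$ is timelike on $M_{end}$; orienting $k$ so that it is future directed there, $\tau\mapsto\phi_\tau(p)$ is a future-directed timelike curve, so $\phi_\tau(p)\ll\phi_{\tau'}(p)$ whenever $\tau<\tau'$. Thus for $\tau\le t_0$ the chain $\phi_\tau(p)\ll z_0\ll q$ gives $\phi_\tau(p)\ll q$, while for $\tau>t_0$ one picks $n$ with $t_0+nr\ge\tau$ (possible as $t_0+nr\to+\infty$) and uses $\phi_\tau(p)\ll\phi_{t_0+nr}(p)\ll q$. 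Hence $\gamma_p\subset I^-(q)$, contradicting Corollary \ref{jid}, which proves the theorem; the parenthetical assertion then follows at once, since two points of the same achronal null generator are causally ordered, so a Killing orbit meeting $\gamma$ again can only do so to the future. The delicate points are the two chaining steps and, above all, passing from the discrete late iterates $\phi_{t_0+nr}(p)$ to \emph{all} of $\gamma_p$; capturing even the early points of the orbit is exactly what the timelike character of $\gamma_p$ inside $M_{end}$ secures, and I expect this to be the crux of the argument.
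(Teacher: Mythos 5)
Your proposal is correct and is essentially the paper's own argument: iterate the inequality $\phi_{nr}(\gamma(s))\le\gamma(s)$ by isometry, use Corollary \ref{boa} to place points of a Killing orbit $\gamma_p$, $p\in M_{end}$, chronologically below $\gamma(s)$, push up, and derive a contradiction with Corollary \ref{jid}. The only (cosmetic) difference is the arrangement: the paper invokes Corollary \ref{jid} at the outset to pick $q\in M_{end}$ with $\gamma(s)\notin I^+(q)$ and then contradicts this with a single sufficiently late iterate, whereas you establish the full inclusion $\gamma_p\subset I^-(\gamma(s))$ before invoking Corollary \ref{jid}, so the step you flag as the crux (sweeping up the early part of the orbit via the timelike character of $\gamma_p$) is precisely what the paper's ordering renders unnecessary.
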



\begin{proof}
Let $\gamma: I \to H^+$ be a generator, and assume that there is $s\in I$ such that for some $r>0$,  $\phi_r(\gamma(s)) \le \gamma(s)$.  It follows that for every $n\in \mathbb{N}\backslash\{0\}$, $\phi_{nr} (\gamma(s)) \le \gamma(s)$.
Let $q\in M_{end}$ be such that $\gamma(s) \notin I^+(q)$ (see Prop.\ \ref{vob}). Let $\gamma_q=\phi(q)$, $q=\gamma_q(0)$.
We know that there is some $p=\gamma_q(a)$, $a<0$, such that $\gamma(s)\in I^+(p)$, but for $n$ so large that $a+nr>0$ we have $\phi_{nr} (\gamma(s))\in I^+(\phi_{nr}(p))\subset I^+(q)$, and hence $\gamma(s)\in I^+(q)$, a contradiction.
$\square$ \end{proof}


We recall that a {\em cross-section} in the sense of \cite{chrusciel08} is a topological submanifold of $H^+$ that intersects all generators exactly once. We shall indeed prove that cross-sections exist (in \cite{chrusciel08} this was an assumption)  by taking advantage of the existence of the trivial principal bundle established in the previous section. Note that contrary to \cite{chrusciel08} we do not use the existence of cross-sections to obtain smoothness results, the logical order in this work is reversed ($C^1$ regularity is used to obtain cross-sections, see Lemma \ref{bax} below).


The next result uses ideas in \cite[Remark 2.2]{friedrich99}, see also \cite{chrusciel94b}. In that remark they used without mention a causal simplicity assumption at the end of their argument  (also beware that elsewhere they assume smoothness of the horizon, which, in any case, we proved  without using this type of results). We are able to considerably weaken the causality condition.

\begin{lemma} \label{nkg}
   Suppose that strongly causality holds on $H^+$.
Let $r\in M_{end}$ then every Killing orbit of $H^+$ intersects $Y:=\p I^+(r)\cap H^+$.  Moreover, in the forward direction it is bound to enter $I^+(r)$ and to remain in it once entered. Similarly, in the backward direction, it is bound to escape $\overline{I^+(r)}$ and to remain outside it once escaped. Thus the orbit intersects $Y$ in a compact segment (possibly a point).

If past reflectivity holds in an open set containing  $\overline{I^-(M_{end})} \cap I^+(M_{end})$ then the intersection is just one single point.
\end{lemma}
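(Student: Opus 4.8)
The plan is to analyse, for each point $p\in H^+$ and the orbit through it, the one-parameter family $t\mapsto\phi_t(p)$ in relation to the future set $I^+(r)$, exploiting that this family is \emph{monotone} with respect to entering $I^+(r)$. Concretely I would prove: (i) $\phi_t(p)\in I^+(r)$ for all large $t$ and, once this holds, it persists; (ii) $\phi_t(p)\notin\overline{I^+(r)}$ for all very negative $t$ and, once this holds going backwards, it persists. Granting (i) and (ii), the set $\{t:\phi_t(p)\in\partial I^+(r)\}$ is the complement of two opposite open half-lines, hence a closed bounded interval $[t_-,t_+]$, so the orbit meets $Y=\partial I^+(r)\cap H^+$ exactly in the compact segment $\phi_{[t_-,t_+]}(p)$; in particular it meets $Y$, and the first four assertions follow.

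The two monotonicity facts rest on the orbit $\gamma_r$ being future-directed timelike (as $r\in M_{end}$ and $k$ is timelike on the invariant set $M_{end}$), so that $\gamma_r(\lambda)\in I^+(r)$ and $\overline{I^+(\gamma_r(\lambda))}\subseteq\overline{I^+(r)}$ for $\lambda>0$, together with $H^+\subseteq I^+(M_{end})=I^+(\gamma_r)$ from Corollary \ref{boa}. For (i): choose $\lambda_0$ with $\gamma_r(\lambda_0)\ll p$; applying the isometry $\phi_t$ gives $\gamma_r(\lambda_0+t)\ll\phi_t(p)$, and $r\ll\gamma_r(\lambda_0+t)$ once $t>-\lambda_0$, whence $\phi_t(p)\in I^+(r)$; persistence follows by pushing $r\ll\phi_{t_0}(p)$ forward with $\phi_{t-t_0}$ and inserting $r\ll\gamma_r(t-t_0)$. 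For (ii): apply Proposition \ref{bxb} with $K=\{p\}\subset V$ and $q=r$ to get $r'\in\gamma_r$ with $p\notin\overline{I^+(r')}$; since $\phi_t(p)\in\overline{I^+(r)}$ is equivalent to $p\in\overline{I^+(\gamma_r(-t))}$ and $\overline{I^+(\gamma_r(-t))}\subseteq\overline{I^+(r')}$ once $-t$ is large enough, the orbit leaves $\overline{I^+(r)}$ for very negative $t$; the backward persistence is again the ``push forward by $\phi_{t_0-t}$'' argument.

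The decisive step, where past reflectivity is used, is to collapse $[t_-,t_+]$ to a point. Suppose $\delta:=t_+-t_->0$ and put $a:=\phi_{t_-}(p)$ and $b:=\phi_\delta(a)=\phi_{t_+}(p)$, both in $\partial I^+(r)\cap H^+$. From $a\in\overline{I^+(r)}$ one always deduces $I^+(a)\subseteq I^+(r)$; past reflectivity, in the form $I^+(a)\subseteq I^+(r)\Rightarrow I^-(r)\subseteq I^-(a)$, then yields the past inclusion $I^-(r)\subseteq I^-(a)$. Pushing it forward by the isometry $\phi_\delta$ gives $I^-(r')\subseteq I^-(b)$ with $r'=\gamma_r(\delta)$, and since $\delta>0$ we have $r\ll r'$, i.e.\ $r\in I^-(r')\subseteq I^-(b)$. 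Hence $r\ll b$, contradicting $b\in\partial I^+(r)$. Therefore $t_-=t_+$ and each orbit meets $Y$ once. I expect this to be the main obstacle: one must fix the correct pointwise reading of the reflecting condition and verify it is available at $a,b\in H^+$ and $r,r'\in M_{end}$, all of which sit inside $I^+(M_{end})$, the region where reflectivity is assumed.

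For the final homeomorphism I would pass to the trivialization of Corollary \ref{bjs}, identifying $H^+$ with $\mathbb{R}\times S$ so that the flow is translation in the $\mathbb{R}$ factor and $\pi$ is the projection onto $S$. By the single-intersection property $Y$ is the graph of the crossing-time function $f:S\to\mathbb{R}$, so $\pi|_Y:Y\to S$ is a continuous bijection and it remains to see that $f$ is continuous. Upper semicontinuity holds because for any $t>f(s)$ the \emph{open} condition $\phi_t(\,\cdot\,)\in I^+(r)$ persists on nearby orbits, forcing $f<t$ there; lower semicontinuity holds symmetrically from the open condition $\phi_t(\,\cdot\,)\notin\overline{I^+(r)}$ valid for $t<f(s)$. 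Thus $f$ is continuous, $Y$ is the graph of a continuous function over the compact base $S$, and $\pi|_Y$ is a homeomorphism (so that $Y$ is also compact), completing the proof.
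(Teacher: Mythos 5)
Your proposal is correct, and for the core of the lemma it follows the same route as the paper: entrance into $I^+(r)$ from $H^+\subset I^+(\gamma_r)$ (Cor.~\ref{boa}), escape from $\overline{I^+(r)}$ via Prop.~\ref{bxb} (the paper uses Prop.~\ref{bxa}, an immaterial difference), persistence in both directions by pushing the relations with the isometry, and, under past reflectivity, the observation that a horizon point in $\overline{I^+(r)}$ forces all strictly later orbit points into the open set $I^+(r)$ --- your two-boundary-point contradiction with $a$ and $b$ is just a repackaging of this. Your pointwise reading of reflectivity is the right one: it is applied only at $a\in\overline{I^+(r)}\cap H^+\subset\overline{I^+(M_{end})}\cap I^+(M_{end})$, which lies in the open set where the hypothesis holds; the points $r$, $r'$, $b$ play no role in the reflectivity hypothesis, so your worry there dissolves. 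Where you genuinely depart from the paper is the final homeomorphism. The paper invokes compactness of $Y$ (a fact it establishes only afterwards, in Lemma~\ref{bhd}, whose proof uses the first part of the present lemma) together with the fact that a continuous bijection from a compact space to a Hausdorff space is a homeomorphism. You instead work in the trivialization of Cor.~\ref{bjs}, realize $Y$ as the graph of the crossing-time function $f:S\to\mathbb{R}$, and prove continuity of $f$ by the two semicontinuity estimates coming from the openness of $I^+(r)$ and of $M\setminus\overline{I^+(r)}$. This makes your proof self-contained: it does not need compactness of $Y$ as an input and in fact delivers it as a corollary (using that $S$ is compact, a standing assumption of this section), whereas the paper's one-line argument is shorter but rests on a forward reference to Lemma~\ref{bhd}. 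Both are valid.
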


The curious and interesting role of past reflectivity in black hole physics has been  pointed out in \cite{minguzzi20}. We recall that it is implied by causal continuity and hence by global hyperbolicity. It is also implied by the existence of a continuous Lorentzian distance function \cite{beem77} \cite[Prop.\ 5.2]{minguzzi18b}. Finally, it is implied by the existence of a {\em timelike} Killing field which is complete in the past direction \cite{clarke88} \cite[Thm.\ 4.10]{minguzzi18b}.  As a consequence, past reflectivity holds on $M_{end}$, however, a priori it might not extend up to the horizon.

 It is worth recalling that past reflectivity and past distinction imply stable causality  \cite[Thm.\ 4.111]{minguzzi18b} and that we are imposing strong causality in a neighborhood of $H^+$. So imposing the property of past reflectivity implies the existence of a time function in a neighborhood of the horizon.


\begin{proof}
Let $p\in H^+$ then as $H^+\subset I^+(\phi(r))$,  we can find $\tau>0$ such that $p\in I^+(\phi_{-\tau}(r))$ (remember that $k$ is timelike in $M_{end}$ but not necessarily near the horizon),  and
hence $\phi_{\tau}(p)\in I^+(r)$. Moreover, by Prop.\ \ref{bxa} we can find $s>0$ such that $p\notin \overline{I^+(\phi_s(r))}$ thus $\phi_{-s}(p)\notin \overline{I^+(r)}$. Thus the Killing orbit $x(t)=\phi_t(p)\subset H^+$ intersects $\p I^+(r)$ and  in the forward direction it  enters $I^+(r)$ while in the past direction it  escapes $\overline{I^+(r)}$.

Now observe that if $q\in I^+(r)\cap H^+$ then for $\tau\ge 0$, $q\in I^+(\phi_{-\tau}(r))$  which implies $\phi_\tau(q)\in I^+(r)$, namely following any orbit in the  forward direction we have that once it enters $I^+(r)$ it remains in it.

Similarly, if $q\notin \overline{I^+(r)}\cap H^+$ then for $s\ge 0$, $q\notin  \overline{I^+(\phi_s(r))}$ which implies $\phi_{-s}(q) \notin  \overline{I^+(r)}$, namely following any orbit in the  backward direction  we have that once the
orbit escapes $\overline{I^+(r)}$ it remains outside  it.
This shows that every orbit intersects $Y$ in a compact segment.

Assume past reflectivity. If $q\in \overline{I^+(r)}\cap H^+$ then by past reflectivity $r\in \overline{I^-(q)}$ and for  $\tau> 0$,   as $r\gg \phi_{-\tau}(r)$, $q\in I^+(\phi_{-\tau}(r))$, and by the isometry $\phi_\tau(q)\in I^+(r)$,
namely any orbit can have at most one point in  $\p I^+(r)$  as any subsequent point belongs to $I^+(r)$.
$\square$ \end{proof}

\begin{lemma} \label{bhd}
  Suppose that strongly causality holds on $H^+$ and $S=\pi(H^+)$ is compact.
Let $r\in M_{end}$, then the set $Y:=\p I^+(r)\cap H^+$ is non-empty and compact.
 If past reflectivity holds in  a neighborhood of $\overline{I^-(M_{end})} \cap I^+(M_{end})$, then $Y$ and $S$ are homeomorphic.
\end{lemma}

\begin{proof}
It is non-empty because, as shown above,  every orbit intersects it. Let us consider the splitting $H^+\cong \mathbb{R}\times S$. If it is non-compact we can find, without loss of generality, a sequence $p_n=(t_n,z_n)\in Y$, $z_n\in S$, $z_n\to z\in S$, and $t_n\to +\infty$ or $t_n\to -\infty$.

Consider the former case. The orbit projecting to $z$ enters $I^+(r)$, thus there is some $q\in I^+(r)$ in the orbit projecting on $z$. As $I^+(r)$ is open the orbits projecting to $z_n$ intersect $I^+(r)$ at points $q_n$ for which  $t(q_n)=t(q)$.
 This implies that  $t_n<t(q_n)\le t(q)$ for $n$ sufficiently large,  a contradiction.

Consider the latter case. The orbit projecting to $z$ escapes $\overline{I^+(r)}$, thus there is some $q\notin \overline{I^+(r)}$ in the orbit projecting to $z$. As $\overline{I^+(r)}$ is closed the orbits projecting to $z_n$ intersect $H^+\backslash \overline{I^+(r)}$ at points $q_n$ for which    $t(q_n)=t(q)$.
This implies that  $t_n>t(q_n)=t(q)$ for $n$ sufficiently large,  a contradiction.

The contradiction proves that $Y$ is compact.

 The bundle projection $\pi$ is continuous and under past reflectivity its restriction to $Y$ sends bijectively $Y$ to $S$. But $Y$ is compact and $S$ is Hausdorff thus such restriction is a homeomorphism.
$\square$ \end{proof}

\begin{lemma} \label{bax}
  Suppose that strongly causality holds on $H^+$ and $S=\pi(H^+)$ is compact.
Suppose that $H^+$ is $C^1$ (or assume the hypothesis of Theorem \ref{bos}).
Let $\gamma$ be a generator of $H^+$. Then with reference to the splitting of Cor.\ \ref{bjs}, $\gamma$ intersects all level sets $t=const.$ More precisely, over $\gamma$ we have $t\to +\infty$  in the future direction, and $t\to -\infty$ in the past direction.
\end{lemma}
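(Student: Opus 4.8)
The goal is to show that every generator $\gamma$ of $H^+$ crosses every level set $\{t=\mathrm{const}\}$, with $t\to+\infty$ forward and $t\to-\infty$ backward along $\gamma$. My plan is to combine the compactness of the section $Y=\partial I^+(r)\cap H^+$ (Lemma~\ref{bhd}) with the forward/backward entry-and-escape behavior of Killing orbits relative to $I^+(r)$ (Lemma~\ref{nkg}), and then transfer this information from Killing orbits to the generator $\gamma$ using $C^1$ regularity. Throughout I work in the splitting $H^+\cong\mathbb{R}\times S$ of Cor.~\ref{bjs}, in which $k=\partial/\partial t$ and $t$ strictly increases along each Killing orbit (forward) by exactly the parameter $\tau$.

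The key observation is that $t$ increases without bound along $\gamma$. Fix $r\in M_{end}$. Since $H^+\subset I^+(\phi(r))$, the future-directed generator $\gamma$ eventually enters $I^+(r)$, and by Lemma~\ref{nkg} it stays there; symmetrically, going backward it eventually escapes $\overline{I^+(r)}$ and stays outside. Now I would argue that $t$ cannot remain bounded along the future half of $\gamma$. Indeed, if $t\le c$ along the forward generator, then because $t$ is proper on compact-in-$V$ sets, the forward image of $\gamma$ would be confined to the region $\{t\le c\}\cap(\text{fiber over its }S\text{-projection})$. The crucial point is that the future-inextendible generator escapes every compact subset of $V$ (it is a future-complete-or-inextendible lightlike geodesic in the achronal boundary that cannot accumulate by strong causality), so its $t$-coordinate must be unbounded above unless the generator's projection to $S$ wanders off — but $S$ is compact, so the projection has an accumulation point, and near such a point one can apply the entry-into-$I^+(r)$ argument of Lemma~\ref{bhd} to force $t\to+\infty$. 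Concretely I would mimic the two cases of Lemma~\ref{bhd}: if the projected generator accumulates at $z\in S$, the orbit through $z$ enters $I^+(r)$ at a definite $t$-value $t(q)$, and by openness of $I^+(r)$ nearby orbits enter at nearby $t$-values; since $\gamma$ stays in $I^+(r)$ once it enters and in $\overline{I^+(r)}^c$ before, this pins the $t$-coordinate along $\gamma$ and yields the monotone divergence.

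The role of the $C^1$ hypothesis (or the conclusion of Theorem~\ref{bos}) is essential and is where I expect the main obstacle. On a merely $C^0$ horizon a generator can be future-inextendible yet its behavior relative to the Killing flow is hard to control, and generators may cross or have multiplicities at non-differentiable points. Under $C^1$ regularity there is a unique generator through each point and the generators foliate $H^+$ smoothly enough that the tangent lightlike field $n$ is continuous; this lets me compare the flow of $n$ (along $\gamma$) with the flow of $k$ (the $t$-direction) and rule out the pathological possibility that $\gamma$ asymptotes to a single Killing orbit at finite $t$. The hard part will be excluding precisely this asymptotic trapping: a priori $\gamma$ could spiral toward a limiting Killing orbit with $t$ bounded. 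I would rule this out by noting that such a limit orbit would be a complete Killing orbit inside $H^+$, but Cor.~\ref{erjs} states that no integral curve of $k$ can forward-accumulate on any $q\in V$ and that Killing orbits on $H^+$ forward-escape every compact set; combining this with the compactness of $S$ and the strong causality at $H^+$ forces the generator's $t$-value to diverge. The backward statement $t\to-\infty$ follows by the time-symmetric version of the same argument, using that backward along $\gamma$ the orbit escapes $\overline{I^+(r)}$ and applying the second case of Lemma~\ref{bhd}.
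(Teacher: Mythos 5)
Your proposal has a genuine gap, and it sits at its very first step. You assert that ``since $H^+\subset I^+(\phi(r))$, the future-directed generator $\gamma$ eventually enters $I^+(r)$'', and symmetrically that going backward it eventually escapes $\overline{I^+(r)}$. Neither statement follows from $H^+\subset I^+(\phi(r))$: that inclusion only gives, for each point $p$ of the generator, \emph{some} orbit parameter $s$ (depending on $p$) with $p\in I^+(\phi_{s}(r))$, and says nothing about $I^+(r)$ itself. Worse, the entering/escaping behavior of \emph{generators} (as opposed to Killing orbits, which is what Lemma \ref{nkg} controls) is precisely the content of Lemma \ref{nog}, which the paper derives \emph{from} Lemma \ref{bax}; you supply no independent proof of it, so your argument rests on an unproven claim that, in the paper's logical order, comes after the statement you are proving. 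The same claim reappears when you say the generator stays ``in $\overline{I^+(r)}^c$ before'' entering (the one-crossing property, again Lemma \ref{nog}, whose proof needs a limit-curve argument). The paper sidesteps all of this by choosing the reference points so that the needed confinements are trivial: rebase the trivialization so that $t(p)=0$ for $p=\gamma(0)$, and use Prop.\ \ref{bxa} to pick $r\in M_{end}$ with $\{t=0\}\cap\overline{I^+(r)}=\emptyset$; then the past half of $\gamma$ can never meet $I^+(r)$ (else $p\in I^+(r)$), hence by Lemmas \ref{nkg} and \ref{bhd} it lies in $\{t\le \max_{K} t\}$, $K=\p I^+(r)\cap H^+$. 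Dually, pick $q\in\gamma_r$, $q\le r$, with $p\in I^+(q)$ (Cor.\ \ref{boa}); then the whole future half automatically lies in $I^+(q)$, hence in $\{t\ge \min_{K'}t\}$, $K'=\p I^+(q)\cap H^+$. No statement about the generator entering or leaving a chronological future is ever needed.

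Two further problems. First, once the two halves are confined to half-slabs, what forces $t$ to diverge is non-imprisonment: each set $t^{-1}([c,d])\cap H^+\cong[c,d]\times S$ is compact, hence covered by finitely many causally convex neighborhoods (strong causality), so the causal curve meets it in finitely many parameter intervals and must leave it permanently. Your appeal to Cor.\ \ref{erjs} is misdirected here: that corollary concerns Killing orbits accumulating, whereas the curve that must not be trapped is the generator; and your proposed confinement region ``$\{t\le c\}\cap(\text{fiber over its }S\text{-projection})$'' is not compact (the generator's projection to $S$ is a curve, not a point), so it cannot serve this purpose. Second, the $C^1$ hypothesis is used for one specific thing: every point of a $C^1$ horizon is interior to a generator, so $\gamma$ is past inextendible \emph{within} $H^+$; without this, the generator could terminate at a past endpoint inside $H^+$ at finite $t$ and the backward divergence would fail. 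Your proposed use of $C^1$ (continuity of the tangent field and comparison of the two flows) is not what the argument requires and is never actually deployed in your sketch.
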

Notice that it can intersect them more than once.
\begin{proof}
Let $\gamma$ be a generator and let $p$ be one of its points.
Let us affine parametrize it so that $p=\gamma(0)$. There is an inextendible lightlike geodesic $\eta$ such that $\eta(0)=\gamma(0)$, $\dot \eta(0)=\dot \gamma(0)$. The domain of definition $I$ of $\gamma$ is the largest interval in the domain $J$ of $\eta$ that contains $0$ and such that $\eta(I)\subset H^+$. As $\gamma$ is future inextendible the interval is of the form $I=(a,b)$ or $I=[a,b)$ (the constants $a,b$ can take infinite value). But the latter possibility is  excluded because $H^+$ is $C^1$ and so generators have no past endpoint in it, i.e.\ every point of $H^+$ belongs to the interior of a generator \cite{beem98} (still $\eta$ can escape $H^+$ from $\textrm{edge} ( H^+)$, we recall that $\textrm{edge} (H^+)\cap H^+=\emptyset$). In other words the $C^1$ property of $H$ implies that $\gamma$ is past inextendible in $H^+$.


Let us redefine the section that trivializes the bundle $V$ so that $t(p)=0$. We know from Prop.\ \ref{bxa} that there is $r\in M_{end}$ such that $\{t=0\}\cap \overline{I^+(r)}=\emptyset$ (recall that the level sets of $t$ are diffeomorphic to $S$, hence compact). By Lemma \ref{bhd}, $K=\p I^+(r)\cap H^+$ is compact. Let $\tau=\max_{K} t<\infty$, then by Lemma \ref{nkg}  $\{x\in H^+:t(x)> \tau\}\subset {I^+(r)}$.
Observe that for $s\le 0$ it cannot be $\gamma(s)\in I^+(r)$ otherwise $p\in  I^+(r)$, a contradiction. Thus $\gamma((a,0])\subset t^{-1}( (-\infty, \tau])$.   Since strong causality implies non-partial imprisonment, for any $c \in (-\infty, \tau]$,  the curve $\gamma_{(a,0]}$
is bound to escape the compact set  $t^{-1}([c,\tau])$ in the past direction,
due to the past inextendibility of $\gamma_{(a,0]}$ in $H^+$ consequence of the $C^1$ assumption.
Thus we  have $t(\gamma(s))\to -\infty$ for $s\to a$, which proves that it intersects any level set $t=c<0$.


Next observe that there is $q\in \gamma_r=\phi(r)$, $q\le r$, such that $p\in I^+(q)$. The set $K'=\p I^+(q)\cap H^+$ is compact, let $\tau'=\min_{K'} t<\infty$. Again by  Lemma \ref{nkg}  $\{x\in H^+: t(x) < \tau'\}\cap {I^+(q)}=\emptyset$, thus $\gamma([0,b))\subset t^{-1}([\tau',+\infty))$. Again by non-partial imprisonment (implied by strong causality) and future inextendibility of $\gamma$, we have  $t(\gamma(s))\to +\infty$ for $s\to b$, which proves that $\gamma_{[0,b)}$ intersects any level set $t=c'>0$.
$\square$ \end{proof}


\begin{lemma} \label{nog}
 Suppose that strongly causality holds on $H^+$ and $S=\pi(H^+)$ is compact.
Suppose that $H^+$ is $C^1$ (or assume the hypothesis of Theorem \ref{bos}).
The sets of type $Y:=\p I^+(r)\cap H^+$, $r\in M_{end}$, are intersected by each generator exactly once.  In the future direction the generator enters $I^+(r)$ and remains in it and in the past direction it enters $M\backslash \overline{I^+(r)}$ and remains in it. Thus cross-sections in the sense of \cite{chrusciel08} exist.
\end{lemma}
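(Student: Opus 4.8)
The plan is to read off how a single generator $\gamma\colon I=(a,b)\to H^+$ meets the three regions $M\setminus\overline{I^+(r)}$, $\partial I^+(r)$ and $I^+(r)$, and to show that the passage from the first to the third occurs at a single point of $Y$. Throughout I use that $H^+$ is $C^1$ (which also holds under the hypotheses of Theorem~\ref{bos}), so that by Lemma~\ref{bax} every generator $\gamma$ is past-inextendible in $H^+$ and satisfies $t(\gamma(s))\to+\infty$ as $s\to b$ and $t(\gamma(s))\to-\infty$ as $s\to a$; moreover through each point of $H^+$ there passes exactly one generator. By Lemma~\ref{bhd} the set $Y=\partial I^+(r)\cap H^+$ is nonempty and compact, so $\tau:=\max_Y t$ and $\tau':=\min_Y t$ are finite.

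First I would establish the two monotonicity (``enter and remain'') properties along $\gamma$. In the future direction: if $\gamma(s_0)\in I^+(r)$ then for $s\ge s_0$ one has $r\ll\gamma(s_0)\le\gamma(s)$, whence $r\ll\gamma(s)$ and $\gamma(s)\in I^+(r)$; thus $A:=\gamma^{-1}(I^+(r))$ is open and upward closed. For the past direction I would first prove $J^+(\overline{I^+(r)})\subset\overline{I^+(r)}$: if $y\in\overline{I^+(r)}$ and $y\le z$, then for any $z'\gg z$ the push-up $y\le z\ll z'$ gives $y\ll z'$, and since $\ll$ is open and $y\in\overline{I^+(r)}$ there is $x$ near $y$ with $r\ll x\ll z'$, so $z'\in I^+(r)$; letting $z'\to z$ yields $z\in\overline{I^+(r)}$. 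Hence $B:=\gamma^{-1}(\overline{I^+(r)})$ is closed and upward closed, with $A\subset B$ and $\gamma^{-1}(\partial I^+(r))=B\setminus A$.

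Next I would check that both ends of $\gamma$ are nontrivial. Since $t$ is a strictly increasing parameter along the Killing orbits ($k=\partial/\partial t$ in the trivialization of Cor.~\ref{bjs}), the orbit structure of Lemma~\ref{nkg} gives $\{t>\tau\}\cap H^+\subset I^+(r)$ and $\{t<\tau'\}\cap H^+\cap\overline{I^+(r)}=\emptyset$. Because $t(\gamma(s))\to\pm\infty$, the curve lies in $I^+(r)$ near $s=b$ and outside $\overline{I^+(r)}$ near $s=a$; thus $A\ne\emptyset$ and $C:=I\setminus B\ne\emptyset$. Writing $s_1:=\inf B$, continuity and the closedness of $\overline{I^+(r)}$ give $\gamma(s_1)\in\partial I^+(r)=Y$ with $s_1\in(a,b)$; in particular $\gamma$ does meet $Y$, which together with uniqueness will yield the directional statements.

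The hard part is to prove that the meeting is at a single parameter value, i.e.\ that $B\setminus A=[s_1,s_2]$ reduces to a point. Suppose instead $s_1<s_2$; then $\gamma|_{[s_1,s_2]}$ is a nondegenerate null geodesic segment lying in the achronal boundary $\partial I^+(r)$, hence it coincides with a portion of a generator of $\partial I^+(r)$. As two null geodesics sharing a segment are the same maximal geodesic, the past continuation of $\gamma$ through $\gamma(s_1)$ agrees with the past continuation of that boundary generator, which remains in $\partial I^+(r)\subset\overline{I^+(r)}$ for a nondegenerate parameter range (running toward $r$ or toward past infinity). The only exception would be $\gamma(s_1)=r$, but $r\in M_{end}$ is interior to $I^-(M_{end})$ and so $r\notin H=\partial I^-(M_{end})$, whence $\gamma(s_1)\ne r$. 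Therefore $\gamma(s)\in\overline{I^+(r)}$ for $s$ slightly below $s_1$, contradicting $(a,s_1)\subset C$. Hence $s_1=s_2$: each generator meets $Y$ exactly once, lying in $M\setminus\overline{I^+(r)}$ for $s<s_1$ and in $I^+(r)$ for $s>s_1$, so $Y$ is a cross-section in the sense of \cite{chrusciel08}. The step I expect to need most care is precisely the claim that a null geodesic contained in $\partial I^+(r)$ is a portion of a boundary generator and can be continued to the past inside $\partial I^+(r)$; this rests on the standard description of $\partial I^+(r)$ as generated by null geodesics whose only possible past endpoint is $r$.
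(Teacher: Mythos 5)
Your proposal is correct, and while its first two steps coincide with the paper's, the crux is argued by a genuinely different mechanism. Like the paper, you use Lemmas \ref{nkg}, \ref{bhd} and \ref{bax} to show that the generator eventually lies in $I^+(r)$ toward the future and outside $\overline{I^+(r)}$ toward the past, and you establish the ``enter and remain'' monotonicity --- though you prove the relevant push-up facts ($I^+(r)$ and $\overline{I^+(r)}$ being future sets with $J^+(\overline{I^+(r)})\subset\overline{I^+(r)}$) directly, where the paper cites the transitivity of the relation $D_f$ from \cite{dowker00} and \cite{minguzzi07b}. For uniqueness, the paper takes $x\le y$ on the generator with $x\in\overline{I^+(r)}$, constructs via the limit curve theorem a past inextendible causal curve $\sigma^x\subset\overline{I^+(r)}$ ending at $x$, and runs a dichotomy: if the concatenation with the generator segment is not achronal, then $y\in I^+(r)$ and one is done; if it is achronal, then $\sigma^x$ is the past prolongation of the generator inside $H^+$, along which $t\to-\infty$ by Lemma \ref{bax}, so $\overline{I^+(r)}$ would meet $(-\infty,\tau')\times S$, a global contradiction. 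You instead work with the achronal set $\p I^+(r)$ rather than with $\overline{I^+(r)}$: if the intersection interval $[s_1,s_2]$ were nondegenerate, the segment $\gamma|_{[s_1,s_2]}$ concatenated with the boundary generator of $\p I^+(r)$ through $\gamma(s_1)$ (which exists and extends strictly to the past of $\gamma(s_1)$ since $\gamma(s_1)\ne r$, by the Penrose structure theorem for achronal boundaries) is a causal curve lying in an achronal set, hence an unbroken null geodesic; therefore $\gamma(s)\in\overline{I^+(r)}$ for $s$ slightly below $s_1$, a purely local contradiction with $s_1=\inf B$. This buys two things: the achronality dichotomy disappears (achronality is automatic inside $\p I^+(r)$), and the contradiction does not reuse Lemma \ref{bax} or the $C^1$ past-inextendibility of generators within $H^+$ --- those enter only to guarantee that $s_1$ exists. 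The price is an appeal to the achronal-boundary structure theorem, which the paper in effect reproves by hand with the limit curve theorem; the step you flag as delicate (that the segment is a portion of a boundary generator, smoothly continued to its past within $\p I^+(r)$) is exactly the standard no-corner property of causal curves contained in an achronal set, so your argument is complete modulo that classical fact.
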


We have shown above that under past-reflectivity $Y$ also intersects the Killing orbits exactly once.

\begin{proof}
 We keep making use of the splitting of Cor.\ \ref{bjs}.
By Lemma \ref{nkg} and compactness of $Y=\p I^+(r)\cap H^+$, $I^+(r)$ contains a set $(\tau, +\infty)\times S$ for some $\tau$, thus by Lemma  \ref{bax} $\gamma$ enters $I^+(r)$ in the future direction. Similarly, by Lemma \ref{nkg} and compactness of $\p I^+(r)\cap H^+$, $\overline{I^+(r)}$ has empty intersection with a set of the form  $(-\infty, \tau')\times S$ for some $\tau'$, thus by Lemma  \ref{bax} $\gamma$ escapes $\overline{I^+(r)}$ in the past direction. Thus $\gamma$ intersects $Y$.

Clearly, if the generator enters $I^+(r)$ it remains there in the future direction. The same is true if it escapes $\overline{I^+(r)}$ in the past direction (this follows, from the more general result on the transitivity of the relation  $D_f^+$ given by $D_f^+ = \{ (x,y): y \in \overline{I^+(x)}\}$, c.f.\ \cite{dowker00} \cite[Thm.\ 3.3]{minguzzi07b}), thus the generator intersects $Y$ in a connected compact subset of its domain of definition.

Now, suppose that there are two points $x,y$, $x \le y$, in the generator such that $x\in  \overline{I^+(r)}$. Since $D_f$ is transitive  $y\in \overline{I^+(r)}$, and our goal is to establish that $y\in I^+(r)$ as that would prove that the generator can intersect $Y$ in at most one point. By the limit curve theorem there is a continuous causal curve $\sigma^x$ with future endpoint $x$ such that $\sigma^x\subset  \overline{I^+(r)}$  which is either  past inextendible or connects $r$ to $x$. Let $\gamma$ be the segment of generator between $x$ and $y$. If the causal curve composition of $\sigma^x$ and $\gamma$ is not achronal, then some point of $z\in \sigma^x$ belongs to $I^-(y)$ and hence $y\in I^+(r)$, as we desired to prove. The other case is not realized because if achronality of the said composition holds then such curve is the prolongation of the generator passing  through $y$ in the past direction (in principle it might leave $H^+$ after reaching its edge). But we know by Lemma \ref{bax} that, as long as it stays in $H^+$, the function $t$ of the trivialization of $H^+$ will go to $-\infty$. This gives a contradiction because as $\sigma^x\subset  \overline{I^+(r)}$ that would imply that  $\overline{I^+(r)}$ intersects $(-\infty, \tau')\times S$ in contradiction with what we established above. This concludes the proof.
$\square$ \end{proof}


\section{Geodesic flow bundle and  unambigous topology of the quotients} \label{ccpt}

Let $n$ be a future-directed lightlike vector field on the horizon, tangent to it. By using the time orientation, and hence the existence of a future-directed timelike vector field, $H^+$ can be endowed with a Riemannian metric \cite{nomizu61}, and hence with a complete Riemannian metric, from which it follows that $n$ can be suitably rescaled to have norm less than one so as to be complete. We denote with $\varphi_t: H^+\to H^+$ the flow of $n$ for a choice of complete field.\\

\begin{theorem} \label{vobn}
 Suppose that strongly causality holds on $H^+$ and $S=\pi(H^+)$ is compact.
Suppose that $H^+$ is $C^k$, $k\ge 3$,   (or assume the hypothesis of Theorem \ref{bos}).
The flow $\varphi$ of $n$ is proper and free.  There exists a compact quotient manifold $\tilde S$ and  $H^+$ is a (trivial) $(\mathbb{R},+)$-bundle $\tilde \pi: H^+ \to \tilde S$. Thus $H^+$ is diffeomorphic to $\mathbb{R}\times \tilde{S}$. The manifold $\tilde S$ is homeomorphic to the sets of the form $Y:=\p I^+(r)\cap H^+$,  for any $r\in M_{end}$.
 If  past reflectivity holds in a neighborhood of  $\overline{I^-(M_{end})} \cap I^+(M_{end})$, then  $\tilde S$ is homeomorphic to $S$ (hence they are diffeomorphic if  $n-1\le 3$).
\end{theorem}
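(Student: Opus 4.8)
The plan is to bypass any abstract verification of properness and instead trivialize the generator flow directly against one of the compact cross-sections $Y:=\p I^+(r)\cap H^+$ produced in Lemmas \ref{nog} and \ref{bhd}. Throughout, recall that under the present hypotheses $H^+$ is $C^1$ (immediately, or through Theorem \ref{bos}), so the maximal generators are exactly the inextendible-in-$H^+$ integral curves of $n$, and the complete flow $\varphi$ has as its orbits precisely these generators (the edge being disjoint from $H^+$). Fix $r\in M_{end}$ and write $Y=Y_r$; by the tacit assumption of the section $S=\pi(H^+)$ is compact, so $Y$ is compact by Lemma \ref{bhd}.

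The decisive step is to show that the \emph{crossing-time function} $\sigma:H^+\to\mathbb{R}$, defined by the requirement $\varphi_{\sigma(x)}(x)\in Y$, is well defined and continuous. It is well defined and single-valued because by Lemma \ref{nog} every orbit meets $Y$ exactly once. Continuity is where the monotonicity of Lemma \ref{nog} enters: if $\sigma(x_0)=s_0$, then for every $\epsilon>0$ the point $\varphi_{s_0+\epsilon}(x_0)$ lies in the \emph{open} set $I^+(r)$ (forward of the crossing the generator \emph{enters and remains} in $I^+(r)$), while $\varphi_{s_0-\epsilon}(x_0)$ lies in the \emph{open} set $H^+\setminus\overline{I^+(r)}$ (backward of the crossing it \emph{remains} outside $\overline{I^+(r)}$). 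By continuity of $\varphi$ these two open conditions persist for $x$ near $x_0$, forcing $s_0-\epsilon\le\sigma(x)\le s_0+\epsilon$; hence $\sigma$ is continuous, and it satisfies the cocycle identity $\sigma(\varphi_t(x))=\sigma(x)-t$.

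Continuity of $\sigma$ trivializes the flow. The map $\Psi:\mathbb{R}\times Y\to H^+$, $(s,y)\mapsto\varphi_s(y)$, is a continuous bijection (surjectivity and uniqueness of $y$ by Lemma \ref{nog}; uniqueness of $s$ because a closed generator would be a closed causal curve, excluded by strong causality), and its inverse $x\mapsto(-\sigma(x),\varphi_{\sigma(x)}(x))$ is continuous by the previous step. Thus $\Psi$ is a homeomorphism carrying the generator flow to translation in the $\mathbb{R}$-factor. Since $Y$ is compact, translation on $\mathbb{R}\times Y$ is manifestly \emph{free} and \emph{proper} (the $t$ with $\varphi_t(K_1)\cap K_2\ne\emptyset$ lie in a bounded interval for compact $K_1,K_2$). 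With properness and freeness of the smooth complete field $n$ in hand, the same standard result invoked for Corollary \ref{bjs} (e.g.\ \cite[Thm.\ 21.10]{lee13}) upgrades the quotient $\tilde S:=H^+/\mathbb{R}$ to a smooth manifold and $\tilde\pi:H^+\to\tilde S$ to a principal $(\mathbb{R},+)$-bundle; being an $\mathbb{R}$-bundle it is trivial, so $H^+$ is diffeomorphic to $\mathbb{R}\times\tilde S$.

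It remains to identify $\tilde S$. The restriction $\tilde\pi|_Y:Y\to\tilde S$ is a continuous bijection (again Lemma \ref{nog}: one orbit per point of $Y$) from the compact $Y$ onto the Hausdorff manifold $\tilde S$, hence a homeomorphism; in particular $\tilde S$ is compact and homeomorphic to every $Y=\p I^+(r)\cap H^+$. Under past reflectivity Lemma \ref{nkg} gives $Y\cong S$, whence $\tilde S\cong S$; and since $\dim\tilde S=n-1$, when $n-1\le 3$ the uniqueness of smooth structures in low dimension promotes this homeomorphism to a diffeomorphism. The only genuinely delicate point is the continuity of $\sigma$: it rests entirely on the fact that each generator not merely meets $Y$ but \emph{enters and stays} on the correct side of $\p I^+(r)$ in each direction, the property established in Lemma \ref{nog} through the transitivity of $D_f$; everything else is bookkeeping or an appeal to the standard principal-bundle theorem.
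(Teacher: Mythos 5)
Your proposal is correct: the well-definedness of the crossing-time function $\sigma$ follows from Lemma \ref{nog} plus absence of closed generators (strong causality), its continuity follows from the strict ``outside $\overline{I^+(r)}$ before / inside $I^+(r)$ after'' structure of each orbit, and from the resulting homeomorphism $\Psi:\mathbb{R}\times Y\to H^+$ properness and freeness are immediate, after which the endgame (the quotient theorem of \cite[Thm.\ 21.10]{lee13}, the compact-to-Hausdorff homeomorphism $\tilde\pi|_Y$, and Lemma \ref{nkg} under past reflectivity) coincides with the paper's. The difference lies in how properness is obtained. The paper verifies Lee's criterion directly: closedness of $\{t:\varphi_t(K)\cap K\ne\emptyset\}$ by a limit argument, and boundedness by contradiction, taking an unbounded sequence $t_i$ with $\varphi_{t_i}(p_i)=q_i\in K$, choosing $r\in M_{end}$ with $K\cap\overline{I^+(r)}=\emptyset$, and using continuous dependence of the flow (the segments from $p_i$ to $q_i$ converge to the generator through $p$, which enters $I^+(r)$ and stays there) to force $q_i\in I^+(r)$, a contradiction. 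You instead package the same causal input --- the ``enters and stays'' property of Lemma \ref{nog} --- into the continuity of $\sigma$, which yields a global topological trivialization $H^+\approx\mathbb{R}\times Y$ \emph{before} any quotient is taken, making properness a one-line consequence of translation on a product with compact second factor. Your route is slightly longer but more constructive: it exhibits the product structure and the cross-section role of $Y$ explicitly and avoids the sequential convergence-of-segments argument (whose rigorous justification requires continuous dependence on initial conditions over arbitrarily long parameter intervals); the paper's route is shorter and defers all structure to the standard quotient theorem. Both proofs consume exactly the same supporting results (Lemmas \ref{bhd}, \ref{nog}, \ref{nkg}), so neither is logically more economical, and your argument is a valid substitute.
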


This is a different bundle to that found previously i.e.\ with respect to the geodesic flow $\varphi$, instead of the Killing flow $\phi$.

\begin{proof}
Since $n$ does not vanish  and strong causality holds on $H^+$, there are no $t\neq 0$ and $p\in H^+$ such that $\varphi_t(p) = p$. So the action  is free.
By \cite[Prop.\ 21.5(c)]{lee13} we need to show that for every compact set $K\subset H^+$, $\{t: \varphi_t(K)\cap K\ne \emptyset\}$ is compact. Closure follows noticing that if $t_i$ is a sequence belonging to the set that accumulates to $t$, there are $p_i,q_i \in K$ such that $\varphi_{t_i}(p_i)=q_i$. Passing to a subsequence we can assume that $p_i$ and $q_i$ are convergent to $p,q\in K$, thus, by continuity, $\varphi_t(p)=q$ which proves that $t$ belongs to the set.

As for boundedness, let us consider a similar sequence $p_i,q_i=\varphi_{t_i}(p_i) \in K$,  $p_i\to p$, $q_i\to q$, $p,q\in K$, where this time $t_i$ is an unbounded sequence. We want to find a contradiction. By passing to a subsequence or inverting the roles of some $p_i$ with $q_i$ we can assume that $t_i$  has limit $+\infty$.
 There is $r\in M_{end}$ such that $K\cap \overline{I^+(r)} = \emptyset$. By Lemma \ref{nog} the $n$-parametrized generator $\eta$ starting from $p$ will enter $I^+(r)$ eventually, $\eta(\tau)\in I^+(r)$ and stay there for $t>\tau$.  But the $n$-parametrized geodesic segments $\gamma_i$ connecting $p_i$  to $q_i$ converge to $\eta$ and so, as $t_i\to \infty$, must enter $I^+(r)$ which implies that $q_i\in I^+(r)$ for sufficiently large $i$, a contradiction.   We conclude that the bundle $\tilde \pi: H^+ \to \tilde S$ exists.

By Lemma \ref{nog} we have $\tilde S=\tilde \pi(Y)$, and from Lemma \ref{bhd}  we know that $Y$ is compact, thus $\tilde S$ is compact.
The projection $\tilde \pi$ is continuous so its restriction $\tilde \pi_Y$ is continuous and bijective. But  $Y$ is compact and $\tilde S$ is Hausdorff thus $\tilde \pi_Y$ is a homeomorphism.

The last statement follows from Lemma \ref{bhd}  and from the fact that two differentiable manifolds of dimension less or equal to 3 that are homeomorphic are diffeomorphic.
$\square$ \end{proof}

The following result will not be used but answers a natural question

\begin{proposition} \label{nog2}
Under the assumptions of the previous Lemma, the bundle $\tilde \pi: H^+ \to \tilde S$ admits  a continuous section $\tilde s: \tilde S \to H^+$ with image $Y$. It can be approximated, as much as desired by a smooth section.

Assume past reflectivity.   The bundle $\pi: H^+ \to  S$ admits  a continuous section $s: S \to H^+$ with image $Y$. It can be approximated, as much as desired by a smooth section.
\end{proposition}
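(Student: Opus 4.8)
The plan is to read off both continuous sections as inverses of homeomorphisms that are already in hand, and then to upgrade each to a smooth approximation using triviality of the bundle together with a standard smoothing argument on the compact base. For the bundle $\tilde\pi: H^+ \to \tilde S$, Theorem \ref{vobn} already tells us that the restriction $\tilde\pi_Y := \tilde\pi|_Y : Y \to \tilde S$ is a homeomorphism: it is continuous and, by Lemma \ref{nog}, bijective since each generator meets $Y$ exactly once, while $Y$ is compact by Lemma \ref{bhd} and $\tilde S$ is Hausdorff. I would simply set $\tilde s := (\tilde\pi_Y)^{-1}$; then $\tilde s: \tilde S \to H^+$ is continuous, satisfies $\tilde\pi\circ\tilde s = \mathrm{id}_{\tilde S}$, and has image exactly $Y$, giving the asserted continuous section. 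Under past reflectivity the same device works for the Killing-flow bundle: by Lemma \ref{nkg} the restriction $\pi|_Y:Y\to S$ is a homeomorphism because each Killing orbit meets $Y$ exactly once, so $s:=(\pi|_Y)^{-1}$ is a continuous section of $\pi$ with image $Y$.

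For the smooth approximation I would work in the global smooth trivialization $H^+ \cong \mathbb{R}\times\tilde S$ provided by Theorem \ref{vobn} (respectively $H^+\cong\mathbb{R}\times S$ from Corollary \ref{bjs}), in which the projection is the second factor. Every section is then the graph of a function on the base, so $\tilde s$ corresponds to a continuous $f:\tilde S\to\mathbb{R}$ with $\tilde s(x)=(f(x),x)$. Since $\tilde S$ is a compact smooth manifold, $f$ can be approximated uniformly by a smooth $\tilde f:\tilde S\to\mathbb{R}$, for instance by mollifying in finitely many coordinate charts and gluing with a partition of unity, and the graph $x\mapsto(\tilde f(x),x)$ is then a smooth section lying $C^0$-close to $\tilde s$. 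The identical argument, now on the compact base $S$, produces the smooth approximation of $s$.

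I do not anticipate a genuine obstacle, since the substance was already absorbed into the homeomorphism statements of Theorem \ref{vobn} and Lemma \ref{nkg}; what remains is essentially bookkeeping. The two points that demand care are, first, that a continuous bijection out of the compact set $Y$ onto a Hausdorff base is automatically a homeomorphism, so that its inverse is a bona fide continuous section, and second, that the approximating smooth section no longer has image $Y$---the set $Y$ being only a topological cross-section---so that ``approximation'' must be understood in the $C^0$ topology on sections rather than as an equality of images.
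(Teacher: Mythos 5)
Your proof is correct, and in substance it rests on the same pillars as the paper's: bijectivity of the projection restricted to $Y$ (Lemma \ref{nog}, resp.\ Lemma \ref{nkg}), compactness of $Y$ (Lemma \ref{bhd}), Hausdorffness of the base, and triviality of the bundle. The differences are in packaging. For the continuous section, the paper works inside a trivialization $\beta: H^+\to\mathbb{R}\times\tilde S$, writes $\beta(Y)$ as the graph of a function $h:\tilde S\to\mathbb{R}$, and obtains continuity of $h$ from the criterion that a real function with compact graph over a Hausdorff space is continuous; you instead invert the homeomorphism $\tilde\pi|_Y$ already recorded in Theorem \ref{vobn} (resp.\ Lemma \ref{nkg}). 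The two devices are essentially equivalent---the compact-graph criterion is itself proved by inverting a compact-to-Hausdorff continuous bijection---so this is a matter of which previously established fact one leans on, and your route is slightly shorter for that reason. For the smooth approximation, the paper invokes Steenrod's theorem on approximating sections of bundles with convex fiber, which requires no triviality, whereas you mollify the graph function on the compact base; your argument is more elementary and self-contained, at the cost of using triviality, which is anyway available here from Theorem \ref{vobn} and Corollary \ref{bjs}. Your closing caveat---that the smooth approximant cannot have image $Y$, so the approximation must be read in the $C^0$ sense on sections---is correct and is a point the paper leaves implicit.
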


\begin{proof}
Both bundles have model fiber $\mathbb{R}$ which is convex. Thus any continuous section can be approximated \cite[Sec.\ 6.7]{steenrod51}.

Since the bundle $\tilde \pi: H^+ \to \tilde S$   is trivial and $Y$ is intersected exactly once by every generator we can use a trivialization $\beta: H^+\to \mathbb{R} \times \tilde S$  to express  $\beta(Y)$ as the graph of a function $h: \tilde S \to \mathbb{R}$. But $Y$ is compact so $\beta(Y)$ and hence the graph of $h$ are compact. A real function over a Hausdorff space having compact graph is continuous, thus $h$ is continuous and hence $\tilde s=\beta^{-1}\circ (h,Id): \tilde S \to H^+$ is continuous and a section with image $Y$.


The case for the Killing  orbits instead of generators is similar.
$\square$ \end{proof}

For the conclusion that $\tilde S$ is homeomorphic to $S$ we are going to show that the assumption of past reflectivity can be dispensed of in most cases of interest.
We recall that two manifolds $M_1$ and $M_2$ are said to be {\em $\mathbb{R}$-diffeomorphic } if $M_1\times \mathbb{R}$ is diffeomorphic to $M_2\times \mathbb{R}$. A classical problem in differential topology is whether two $\mathbb{R}$-diffeomorphic manifolds are also diffeomorphic \cite{hausmann18}.

We have the following result \cite{hausmann18}

\begin{theorem}
Let $M$ and $N$ be two closed manifolds of dimension $k \le 3$, which
are orientable if $k= 3$. Then $N\cong_{\mathbb{R}-\textrm{diff}} M$ implies  $N \cong_{\textrm{diff}} M$.
\end{theorem}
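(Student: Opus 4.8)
The plan is to peel off the extra $\mathbb{R}$-factor in two different ways according to the dimension $k$: in low dimensions a bare homotopy equivalence between the factors already suffices, while in dimension $3$ one must squeeze a genuine \emph{h-cobordism} out of the product diffeomorphism. The universal starting observation is that $M\times\mathbb{R}$ deformation retracts onto $M\times\{0\}$, so any diffeomorphism $M\times\mathbb{R}\cong N\times\mathbb{R}$ forces a homotopy equivalence $M\simeq N$. This is decisive for $k\le 2$ but not for $k=3$, where homotopy equivalence is famously too weak (homotopy equivalent lens spaces such as $L(7,1)$ and $L(7,2)$ are not homeomorphic), so for $k=3$ I must use the smooth product structure, not merely its homotopy type.

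First I would settle $k\le 2$. For $k=0,1$ the number of connected components is a homotopy invariant and a closed manifold of these dimensions is determined up to diffeomorphism by that count (isolated points, respectively disjoint unions of circles), so $M\simeq N$ yields $M\cong N$ at once. For $k=2$ I would invoke the classification of closed surfaces: the homotopy type of a closed surface (equivalently $\pi_1$ together with orientability, both homotopy invariants) determines it up to diffeomorphism; hence $M\simeq N$ again gives $M\cong N$, with no orientability hypothesis needed, matching the statement.

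The core is $k=3$, which I would split into two steps. \textbf{Step 1 (reduce $\mathbb{R}$-diffeomorphism to an h-cobordism).} Given a diffeomorphism $\Phi:M\times\mathbb{R}\to N\times\mathbb{R}$, it is proper and induces a bijection of the two ends of each product; after composing with the reflection $(n,s)\mapsto(n,-s)$ if needed, the $+\infty$ ends correspond. For $a$ large, the closed separating hypersurface $\Sigma:=\Phi^{-1}(N\times\{a\})\cong N$ lies in $M\times(T,\infty)$, and I take the compact region $W$ caught between $M\times\{0\}$ and $\Sigma$, so $\p W=M\sqcup N$. Both $M\times\{0\}$ and $\Sigma$ are deformation retracts of the whole product ($\Sigma$ because $N\times\{a\}$ retracts $N\times\mathbb{R}$), and the product deformation retracts onto $W$ by pushing the two outer ends inward onto $M\times\{0\}$ and $\Sigma$; a $2$-out-of-$3$ argument then shows the two boundary inclusions into $W$ are homotopy equivalences, i.e.\ $W$ is an h-cobordism. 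This step is elementary and dimension-independent. \textbf{Step 2 (h-cobordant closed orientable $3$-manifolds are diffeomorphic).} Here Smale's h-cobordism machinery is unavailable, so I would appeal to Thurston--Perelman geometrization, which pins down closed orientable $3$-manifolds finely enough that an h-cobordism between two of them forces them to be homeomorphic; Moise's theorem (unique smoothing in dimension $3$) then upgrades this to a diffeomorphism $M\cong N$. The orientability hypothesis is consumed exactly at this geometric stage.

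The main obstacle is Step 2. Everything up to and including the cobordism construction is soft point-set and homotopy-theoretic bookkeeping that works in every dimension; what genuinely requires dimension-three heavy machinery is ruling out that an h-cobordism can join non-diffeomorphic orientable $3$-manifolds—precisely the place where the homotopy-equivalent-but-not-homeomorphic lens-space phenomenon must be excluded, and which cannot be handled by invariants surviving the homotopy equivalence alone. Consistency with Hausmann's theorem guarantees this conclusion holds, but its only known proofs route through geometrization (or an equivalent classification of irreducible $3$-manifolds via $\pi_1$ and torsion-type invariants), which is the deep input I would not attempt to reprove.
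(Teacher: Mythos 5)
The paper itself contains no proof of this statement: it is quoted directly from Hausmann--Jahren \cite{hausmann18}, so the only meaningful comparison is with the proof in that source. Your outline reproduces its architecture faithfully: in dimensions $k\le 2$ a homotopy equivalence (extracted from the product diffeomorphism via deformation retraction) already determines the diffeomorphism type; in dimension $3$ one converts the $\mathbb{R}$-diffeomorphism into a compact h-cobordism $W$ with $\partial W=M\sqcup\Sigma$, $\Sigma\cong N$, and then invokes a deep rigidity statement for closed orientable $3$-manifolds, finishing with Moise's theorem to pass from homeomorphism to diffeomorphism. Your Step 1 is sound as written, including the two-out-of-three argument; the only points left implicit are the reduction to connected $M,N$ before the end-correspondence argument, and the (routine) check that $M\times\{0\}$ lies on the $(-\infty)$-side of $\Sigma$, so that $W$ really is compact with the stated boundary.

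The one place where your write-up is imprecise is the justification of Step 2. Geometrization alone does not force h-cobordant closed orientable $3$-manifolds to be homeomorphic: it reduces the question to prime pieces and identifies the pieces with infinite fundamental group from $\pi_1$, but for spherical space forms --- exactly your lens-space phenomenon --- the homeomorphism type is not determined by anything geometrization sees, and a priori nothing excludes an h-cobordism between $L(7,1)$ and $L(7,2)$: they are homotopy equivalent, and $Wh(\mathbb{Z}/7)\ne 0$, so nontrivial h-cobordisms are available in principle. What closes this case is Whitehead-torsion duality: for an h-cobordism $(W;M,N)$ of closed oriented $3$-manifolds, Milnor's duality formula shows that the induced homotopy equivalence $M\to N$ has torsion of the form $\tau+\bar\tau$, hence the Reidemeister torsions of $M$ and $N$ differ by a norm $w\bar w$, and a Franz-type computation with cyclotomic units shows this is impossible for non-homeomorphic lens spaces. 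The combination of geometrization (for the infinite-$\pi_1$ pieces and the connected-sum bookkeeping) with this torsion argument is precisely Turaev's theorem, which is the result Hausmann--Jahren actually invoke. So your overall route is the correct one and the black box you need does exist in the literature; but as literally stated --- ``geometrization pins down closed orientable $3$-manifolds finely enough that an h-cobordism forces homeomorphism'' --- your Step 2 attributes the conclusion to the wrong mechanism, and an appeal to geometrization by itself would leave the crucial lens-space case open.
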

Further results are also available, for instance when one of the two manifolds   is simply connected. Correspondingly the next results could be similarly generalized.

On a $n+1$ dimensional spacetime $\textrm{dim} S=\textrm{dim} \tilde S=n-1$ thus, thanks to the  compactness result for $\tilde S$ (Theorem \ref{vobn}), we have


\begin{theorem}
 Suppose that strongly causality holds on $H^+$ and $S=\pi(H^+)$ is compact.
Suppose that $H^+$ is $C^k$, $k\ge 3$,   (or assume the hypothesis of Theorem \ref{bos}).
Let the dimension of spacetime be 3 or 4, or suppose that it is 5 but $H^+$ is orientable.
Then the Killing flow quotient $S$ is diffeomorphic to the generator flow quotient $\tilde S$.
\end{theorem}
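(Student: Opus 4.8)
The plan is to notice that the two fibration theorems already proved exhibit $H^+$ as a trivial line bundle in \emph{two} different ways, so that $S$ and $\tilde S$ are forced to be $\mathbb{R}$-diffeomorphic for free, and then to invoke the quoted result of Hausmann to upgrade this weak equivalence to a genuine diffeomorphism in the stated low dimensions. Concretely, I would first combine Corollary \ref{bjs}, which provides a diffeomorphism $H^+\cong \mathbb{R}\times S$ trivializing the Killing $\mathbb{R}$-bundle $\pi$, with Theorem \ref{vobn}, which provides a diffeomorphism $H^+\cong \mathbb{R}\times \tilde S$ trivializing the geodesic $\mathbb{R}$-bundle $\tilde\pi$. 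Composing these two diffeomorphisms yields $\mathbb{R}\times S\cong \mathbb{R}\times \tilde S$, which is precisely the assertion that $S\cong_{\mathbb{R}-\textrm{diff}}\tilde S$ in the sense recalled above.

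Next I would verify that $S$ and $\tilde S$ meet the hypotheses of the quoted theorem, i.e.\ that they are closed manifolds. Both are compact: $S$ is compact by the running assumption of this section (respectively because $H^+$ has compact projection under the hypotheses of Theorem \ref{bos}), while $\tilde S$ is compact by Theorem \ref{vobn}. Both are boundaryless, since $H^+$ is itself a manifold without boundary (its edge is excluded, $\textrm{edge}(H^+)\cap H^+=\emptyset$) and the bases of the two trivial $\mathbb{R}$-bundles inherit this property. Finally their common dimension is $\dim S=\dim \tilde S=n-1$ on an $(n+1)$-dimensional spacetime.

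A dimension count then settles the application. If spacetime has dimension $3$ or $4$ then $n-1\le 2<3$, so Hausmann's result applies with no orientability hypothesis and gives $S\cong_{\textrm{diff}}\tilde S$ directly. If spacetime has dimension $5$ then $n-1=3$, and the quoted theorem requires the two closed $3$-manifolds to be orientable; I would supply this from the assumption that $H^+$ is orientable, observing that orientability of the product $\mathbb{R}\times S$ (resp.\ $\mathbb{R}\times\tilde S$) forces orientability of the factor $S$ (resp.\ $\tilde S$). In every admissible case Hausmann's theorem then delivers the desired diffeomorphism. The argument carries no deep obstacle: the $\mathbb{R}$-diffeomorphism drops out at once from the triviality of the two bundles, and the only point genuinely requiring care is the bookkeeping just described, namely confirming that $S$ and $\tilde S$ are bona fide closed manifolds and that orientability of $H^+$ descends to each quotient.
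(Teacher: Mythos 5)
Your proposal is correct and is exactly the argument the paper intends (the paper leaves it implicit in the sentences immediately preceding the theorem): the two trivializations $H^+\cong\mathbb{R}\times S$ from Corollary \ref{bjs} and $H^+\cong\mathbb{R}\times\tilde S$ from Theorem \ref{vobn} yield the $\mathbb{R}$-diffeomorphism, and the compactness of $S$ and $\tilde S$, the dimension count $\dim S=\dim\tilde S=n-1$, and the descent of orientability from $H^+$ to the factors in the five-dimensional case are precisely what is needed to invoke the quoted Hausmann--Jahren theorem. Your bookkeeping of these hypotheses matches the paper's intended route, so there is nothing to add.
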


Due to this result, in the physical 4-dimensional spacetime case, there is no ambiguity in speaking of the Euler characteristic of the event horizon projection  even if past reflectivity does not hold. We recall that some arguments due to Hawking \cite{hawking72} \cite[Prop.\ 9.3.2]{hawking73}  imply that in a 4-dimensional spacetime the topology of the black hole event horizon must be  $\mathbb{S}^2$ hence with Euler characteristic different from zero, see also \cite{chrusciel94b,galloway95,jacobson95}.
In Theorem \ref{voz} we just need the latter property.

It is also worth noting that the fact that the topology is $\mathbb{S}^2$ can in most cases be deduced from the following fact without using any assumption on the asymptotic behavior of the spacetime (for a similar argument see \cite{geroch82}). Notations are chosen so that it can be directly applied to our framework. We recall that  the flow of $n$ preserves the induced metric on the horizon  as a consequence of the total geodesic property (implied by Thm.\ \ref{bos})  which therefore passes to a Riemannian metric $\tilde g$ on the quotient $\tilde S$ \cite[Lemma B1]{friedrich99}\cite{moncrief08}\cite[Lemma 7]{minguzzi21}. The Killing field $k$ sends $H^+$ to itself and so it is tangent to it (recall that $H^+$ is smooth, again by Thm.\ \ref{bos}). The Killing flow $\phi$ necessarily sends  generators to generators which implies that it projects to a flow $\tilde \phi$  generated by a Killing field $\tilde k:=\tilde\pi_*(k)$ for $\tilde g$.

\begin{proposition} \label{vnqp}
Let $(\tilde S,\tilde g)$ be a connected closed 2-dimensional Riemannian manifold  admitting a non-trivial $S^1$-isometric action induced by a Killing field $\tilde k$. Then the topology is $\mathbb{S}^2$ (two zeros), $\mathbb{T}^2$ (no zeros), $\mathbb{P}^2$ (one zero) and (Klein bottle) $\mathbb{P}^2\sharp \mathbb{P}^2$ (no zeros).
\end{proposition}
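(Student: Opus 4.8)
The plan is to count the zeros of the Killing field $\tilde k$, relate that count to the Euler characteristic $\chi(\tilde S)$ via the Poincar\'e--Hopf theorem, and then let the classification of closed surfaces finish the job. First I would analyze the zero set $Z$ of $\tilde k$. Since a Killing field is completely determined by its value and the value of its (skew-symmetric) covariant derivative at a single point, a nontrivial $\tilde k$ cannot vanish on any open set. Moreover, the zero set of a Killing field is a disjoint union of totally geodesic submanifolds of even codimension (Kobayashi), so on the surface $\tilde S$ the set $Z$ can only consist of codimension-two pieces, i.e.\ isolated points; by compactness there are finitely many of them, say $N$.

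Next I would compute the index of $\tilde k$ at each zero $p\in Z$. The $S^1$-action fixes $p$ and acts on $T_p\tilde S\cong\mathbb{R}^2$ by isometries, hence through a homomorphism $S^1\to SO(2)$ that must be nontrivial (were it trivial, a whole neighbourhood of $p$ would be fixed, contradicting isolation). Equivalently, the linearization of $\tilde k$ at $p$ is the skew-symmetric endomorphism $(\nabla\tilde k)_p$, which is nonzero and therefore has strictly positive determinant; a planar linear vector field $v\mapsto Av$ with $\det A>0$ is a center, of index $+1$. Thus every zero of $\tilde k$ carries index $+1$, independently of the sign of the rotation and of any choice of local orientation.

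Then, applying the Poincar\'e--Hopf theorem---which holds for compact manifolds whether or not they are orientable (one may pass to the orientation double cover, on which each zero lifts to two zeros of the same index $+1$ while $\chi$ doubles)---I obtain $N=\chi(\tilde S)\ge 0$. Finally I would invoke the classification of closed surfaces: those with $\chi\ge 0$ are precisely $\mathbb{S}^2$ (with $\chi=2$), $\mathbb{P}^2$ (with $\chi=1$), and the two surfaces with $\chi=0$, namely $\mathbb{T}^2$ and the Klein bottle $\mathbb{P}^2\sharp\mathbb{P}^2$. Reading off $N=\chi(\tilde S)$ yields two zeros on $\mathbb{S}^2$, one zero on $\mathbb{P}^2$, and no zeros on $\mathbb{T}^2$ or on $\mathbb{P}^2\sharp\mathbb{P}^2$, which is exactly the list of four cases in the statement.

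The only genuinely delicate points are the index computation and the correct use of Poincar\'e--Hopf in the possibly non-orientable setting; once the index $+1$ is secured, the surface classification is immediate. I expect the main obstacle to be essentially bookkeeping: ensuring that the determinant-sign index argument is robust---that the fixed point cannot degenerate into a non-isolated zero, which is controlled by the rigidity of Killing fields established in the first step---and that the non-orientable version of the index count is invoked correctly through the orientation double cover. I would not need to verify that each listed surface actually admits such an action, since the proposition only asserts the constraint on the topology.
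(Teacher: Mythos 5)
Your proposal is correct and takes essentially the same route as the paper's proof: isolated zeros (via the rigidity of Killing fields), index $+1$ at every zero, Poincar\'e--Hopf giving $\chi(\tilde S)\ge 0$, and the classification of closed surfaces. The only variation is in how the index is justified---the paper argues geometrically that the isometric flow preserves small geodesic disks, while you linearize and use that $(\nabla\tilde k)_p$ is skew-symmetric and nonzero, hence of positive determinant; both are sound, and your explicit treatment of Poincar\'e--Hopf in the non-orientable case via the orientation double cover is a detail the paper leaves implicit.
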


As we shall see in a moment, the non-triviality is related   to what is a posteriori interpreted as the rotation of the black hole.  This `rotational' case is {\em physically} sufficient to determine the topology of the horizon, as one expects that there does not exist any precisely non-rotating black hole in Nature (also because the absorption of just one electron having angular momentum would spoil such a property while it is not expected to alter dramatically the topology of the horizon). The orientability of the horizon can then be deduced from that of the spacetime and that of the Cauchy surface, if there is any, or from other more refined arguments. That leaves only with the options $\mathbb{S}^2$ and $\mathbb{T}^2$ to be discussed which is, to be precise, also the result of Hawking's theorem, the $\mathbb{T}^2$ case being a sort of exceptional case which in every version of the theorem requires a special analysis.

\begin{proof}
One observes that $k$ has isolated zeros or vanishes identically (here one works on a disk at the zero point and uses the fact that if the Killing vector has vanishing covariant derivative at the zero point then it is zero everywhere) and moreover each zero point has index 1 (because the flow sends shortest curves to shortest curves, and preserves small disks  implying that the Killing field is tangent to the disk). Thus the Euler characteristic is larger or equal to 0. Every compact connected surface is homeomorphic to  $\mathbb{S}^2$,  the connected sum of $m$ tori $ \mathbb{T}^2$ or the connected sum of $m$ projective spaces  $\mathbb{P}^2$, $m\ge 1$, of Euler characteristics $2$, $2-2m$, $2-m$, respectively. Thus the only possible cases are those listed.
$\square$ \end{proof}

\section{The (sign of) surface gravity (including the non-compact case)} \label{sign}

In this section we consider a $C^0$ future null hypersurface $N$. For each point $p\in N$ we can find a future-directed inextendible lightlike geodesic $\gamma$  with starting point $p$. Its tangent vector at $p$ is called semi-tangent to $N$ at $p$, see e.g.\ \cite{chrusciel01,minguzzi14d}. Each point has as many semi-tangents as the generators passing from it.  Given a smooth global future-directed timelike vector field $T$, the semi-tangent can be normalized via the condition $g(T,\dot \gamma)=-1$. A choice for the semi-tangent multi-valued map on $N$ will be called a {\em semi-tangent section} $n$. It is understood that for each point $p\in N$ and for each generator passing/starting from $p$ the semi-tangent section contains just one element tangent to the generator and belonging to $T_pM$.

This semi-tangent multi-valued map admits a natural uni-valued restriction over each generator.
Over the points of a  generator $t\mapsto \gamma(t)$ consider the map $t\mapsto n(t)\in T_{\gamma(t)}M$ which provides the tangent to the generator which is coincident with one element of the semi-tangent section.
If the semi-tangents are normalized as above this map is actually $C^1$ due to the $C^2$ regularity   of the geodesic. Moreover, if the semi-tangents $n_k$ have base points $p_k\in N$, $p_k\to p\in N$ and there is a limit $n_k\to n\in T_pM$ (this is always the case up to subsequences due to the introduced normalization), then $n$ belongs to the semi-tangent section \cite[Lemma 3.1(1)]{galloway00} (this property is known as upper semi-continuity \cite{aubin84}, but if the semi-tangent is uni-valued over $N$, namely $N$ is $C^1$, then this property is equivalent to the continuity of the semi-tangent section).
When this happens we say that the semi-tangent section is {\em regular}.
In conclusion, the normalization via the auxiliary timelike vector field $T$ proves that regular semi-tangent sections exist and that over every generator the induced map $n$  satisfies the pregeodesic equation  $\nabla_n n=\alpha n$ where $\alpha$ is continuous over the generator.

Note that if $N$ were $C^2$ then $n$ could be regarded as  a global $C^1$ field over $N$ and so $\alpha$ would be a continuous functions over $N$. These properties will not be used in the following theorem, although they will hold in applications in the reminder of the work.


We stress that in this section we are not imposing any compactness, totally geodesic, or Killing conditions (as it is clear by the lack of sufficient regularity). With the following theorem we clarify what it means to find a field $n$ of constant surface gravity under very general circumstances. In the following section we shall apply it to the special case in which  $N$ is an event horizon $H^+$ (for which we have established strong regularity properties).

In the next result  ``forward complete''  does not refer to the affine completeness of some geodesic. Rather it refers to the fact that the parameter $s$ such that $n=\frac{\dd }{\dd s}$ over the geodesic is unbounded from above. For a previous  result in the  direction of the following theorem, see \cite[Prop.\ 3.3]{costa24}.\\

\begin{theorem} \label{vm4p}
On a $C^0$  future  null hypersurface $N$ the following two properties
are equivalent for a regular semi-tangent section $n$
\begin{itemize}
\item[(a)] 
there is a  finite constant $\kappa<0$ such that for every
$p\in N$ any generator starting at $p$ with tangent belonging to the semi-tangent section has finite  affine length $\Lambda=-\frac{1}{\kappa}$ in the future direction,
\item[(b)]      there is a  finite constant $\kappa<0$ such that over every generator of $N$
\[
\nabla_n n =\kappa n,
\]
and  $n$ is  forward complete.
\end{itemize}
If they hold then the number $\kappa$ mentioned there is the same for both instances and any other regular semi-tangent section $n'$ with the same properties satisfies  $n'=\frac{\Lambda}{\Lambda'} \, n$.  Namely, fixing the value of $\Lambda>0$ or $\kappa<0$ fixes the field $n$. Conversely, fixing the scale for $n$ fixes $\kappa$ and $\Lambda$. A dual statement also holds.
\end{theorem}

\begin{proof}
Let us focus on a generator and let us denote with $n$ the $C^1$ tangent field over it coming from the regular semi-tangent section. As shown above
$\nabla_{n}  n=\alpha n$, where $\alpha$ is a function.
Let $x:[0,b)\to N$ , $s \mapsto x(s)$,  be the (generator) integral curve of $n$, with $x(0)=p$.
As proved in \cite[Lemma 2]{minguzzi21} (for the case $b=+\infty$ but the proof does not change), the geodesic $\gamma$ starting from $x(\tau)$ with tangent $\dot \gamma=n$ has future affine length
\begin{equation} \label{bud}
\Lambda(x(\tau))=\int_\tau^b e^{\int_\tau^\rho \alpha(x(s))\dd s} \dd \rho.
\end{equation}
In particular,  $1 + \alpha \Lambda + n(\Lambda) = 0$.

(a) $\Rightarrow$ (b). Evidently, the assumption (a) implies that $\Lambda$ is constant over the generator and so that the last term in the left-hand side of $1 + \alpha \Lambda + n(\Lambda) = 0$ vanishes. This implies that $\kappa=\alpha$.
From (\ref{bud}) we get as, $\alpha$ is constant,
\begin{equation} \label{bud2}
\Lambda(x(\tau))=\int_\tau^b e^{\kappa (\rho-\tau)} \dd \rho=\frac{e^{\kappa (b-\tau)} -1}{k}.
\end{equation}
Suppose that $b<+\infty$, then $\Lambda(x(\tau))$ depends on $\tau$  and so $\Lambda$ is not constant. The contradiction proves that $b=+\infty$ and hence that $n$ is forward complete. Since $\alpha=\kappa$ we also get $\nabla_{n}  n=\alpha n=\kappa n$.

(b) $\Rightarrow$ (a). If $b=+\infty$ and $\nabla_{n}  n=\kappa n$ then $\alpha=\kappa$ thus again integrating (\ref{bud}) we get $\Lambda=-1/\kappa$ which does not depend on the starting point on $N$ and on the generator chosen.

Due to (a) the last statement is clear. $\square$
\end{proof}


We arrive at the definition of surface gravity. Note that we are not interested in assigning a number to each point of $N$ (this is known as {\em dynamical surface gravity} among physicists \cite{pielahn11}) but rather a number to $N$.

\begin{definition}[Surface gravity $\kappa$ and its sign] \label{vqog}
We say that a $C^0$ future null hypersurface $N$ has {\em negative surface gravity}  if there exists a semi-tangent section $n$ with the properties of Theorem \ref{vm4p}.
We say that a $C^0$ past null hypersurface $N$ has {\em positive surface gravity} if   Theorem \ref{vm4p} applies  in the time dual version. We say that a $C^1$ null hypersurface has {\em zero surface gravity} if a geodesic field $n$ can be found such that all generators are complete.
When the scale of $n$ can be fixed the {\em surface gravity} $\kappa$ is defined. It is zero if the last case apply, otherwise it can be read from $\nabla_n n=\kappa n$ or $\kappa=-1/\Lambda$.
\end{definition}

Note that   a $C^0$ future null hypersurface is also a $C^0$ past null hypersurface iff it is a $C^1$ null hypersurface. In fact, by local achronality every future generator starting at $p$ must match (no corner) a past generator at $p$, which implies that every point is in the interior of a generator.

Even in the $C^1$ case the affine completeness properties are all mutually excluding so the previous definition is consistent. The formula $\Lambda=-1/\kappa$ holds also in the positive surface gravity case, where a negative $\Lambda$ signals past incompleteness.

Note that if $n$ is geodesic then completeness is equivalent to affine completeness.

In the following section we shall show how the scale of $n$ is fixed for the event horizons.

\begin{remark}
Theorem \ref{vm4p} allows us to recognize the importance of the completeness of the field $n$ for the definition of the sign of surface gravity. It is not sufficient to refer to the pregeodesic equation $\nabla_n n =\kappa n$ with constant $\kappa$ to read the value of the sign surface gravity, for it is necessary to check that $n$ is forward complete for $\kappa <0$ (an inequality which implies future affine incompleteness), backward complete for $k>0$ (an inequality which implies past affine incompleteness), and complete for $\kappa =0$. This check is not needed for compact $N$, as the completeness property would be ensured by standard ODE theory, but is otherwise necessary.

For instance, over a smooth event horizon which admits a section we can find a geodesic $n$. It is sufficient to fix the value of $n$ on the section and then propagate it geodetically. Though in this way we get $\nabla_n n=0$ we cannot conclude that $\kappa=0$ as $n$ might be non-complete.

Therefore, the {\em sign of surface gravity}  is related to two contrasting forms of (in)com\-ple\-te\-nesses (which, actually, in the degenerate case coincide).
The reader can compare our definition with the suggestion by Petersen \cite[Def.\ 1.22]{petersen19} where the  pregeodesic equation is paired instead with a   condition of the form, $[k,n]=0$ on $N$, with $k$ Killing. We shall meet this type of property later on but it will be deduced, see Thm.\ \ref{voz2}.
\end{remark}

\section{The surface gravity (temperature) and angular velocity of event horizons} \label{surf}

The goal of this section is to explore in some detail the geometry of the horizon in 4-dimensional spacetimes under the dominant energy condition. In any case, all assumptions will be spelled out in the theorems below. Some results and strategies can actually work in other dimensions.

Our main result establishes that it is possible to assign a {\em constant surface gravity} (not just the sign), to be interpreted as temperature via the formula $T=\frac{\kappa}{2\pi}$, an {\em angular velocity}, a {\em privileged lightlike vector field} $n$ tangent to the horizon, and a {\em privileged vector field} $\zeta$ on the horizon generating an $S^1$-action (axisymmetry). These fields are related as follows:
\[
k=n+\omega \zeta.
\]
These results are not entirely new, of course, the statement concerning the existence of an axisymmetry being related to what in the literature goes under the name of {\em black hole rigidity theorem}.

It is worth recalling the general traditional strategy for this type of investigation, also to emphasize the differences from our findings. The standard approach, due to Hawking \cite{hawking72,chrusciel97}\cite[Prop.\ 9.3.6]{hawking73}, is to prove under the following conditions:
\begin{itemize}
\item[(a)] analyticity of the metric and horizon,
\item[(b)] (electro-)vacuum assumption or assumptions on the hyperbolic evolution of matter from Cauchy data,
\item[(c)] asymptotic flatness conditions and causality assumptions,
\item[(d)] energy conditions,
\item[(e)] bifurcation conditions,
\end{itemize}
that a stationary spacetime admitting a Killing field $k$ which is timelike near the conformal boundary must have a {\em Killing event horizon} and must be axisymmetric (the Killing field $n$ tangent to the horizon might be different from $k$). The $S^1$-action Killing field $\zeta$ is first found in a neighborhood of the horizon and then extended by using real analyticity. The Killing property of the horizon also implies constant surface gravity, via simple algebraic computations making use of the dominant energy condition \cite{bardeen73,chrusciel20}. Similar rigidity results hold in higher dimensions \cite{hollands07,hollands09}.

Once these results are established, other arguments, not depending on the analyticity assumption, first explored by Carter \cite{carter71b} and Robinson \cite{robinson75}, show that the black hole is determined by just three parameters: mass, angular momentum, and electric charge. This result is known as the {\em no-hair theorem}. This line of inquiry thus leads to the proof of the {\em black hole uniqueness theorem} in the analytic case, which essentially states that the only spacetime with the above (suitable) properties is the Kerr(-Newman) solution \cite{robinson75,mazur82,heusler96,amsel10,chrusciel10b,meinel12,chrusciel12b,chrusciel20}. Incidentally, by our Theorem \ref{bos}, any analytical regularity assumption on the horizon, as mentioned in (a), can be dropped.

Our main contribution in this section goes in a different direction. To start with, we do not impose an electro-vacuum assumption or assumptions on the hyperbolic evolution of matter from Cauchy data. Even asymptotically, we do not require the spacetime to be vacuum, just stationary. We can imagine a stationary configuration in which an ingoing matter-energy flow balances a similar outflow. There can still be an event horizon, in which case our result shows that we can assign a temperature and an angular velocity to it.

We do not impose any bifurcation assumption, thus completely dropping (a), (b), and (e). Of course, our results are correspondingly weaker. Indeed, we stress that the philosophy of our approach is different from Hawking's. We do not try to prove that the horizon is Killing from the outset; in fact, such a problem is difficult in the smooth case \cite{moncrief83,isenberg85,friedrich99,racz00,ionescu09,alexakis10,petersen19}, and has been solved so far only in the non-degenerate case and again under vacuum assumptions (see the discussion at the end of the section; in any case, typically one needs to prove at the very first step that the surface gravity is constant without using the analyticity assumption). Instead, we study the {\em geometry of the horizon in itself} and show the existence of the isometries just on the horizon. The temperature and angular velocity are thus shown to be intrinsic to the horizon.\footnote{In a way, our approach is closer in spirit to the study of near-horizon geometry \cite{kunduri13,dunajski25}, for in that setting one focuses on the intrinsic geometry of the horizon as it follows from the embedding in the spacetime by somehow projecting/reducing the Einstein equations to it (but in those works the horizon is assumed to be Killing and degenerate from the outset, and the vacuum equations are imposed,  all assumptions that we do not impose).}

If we assume an (electro-)vacuum condition, our findings are connected to the attempt to prove the black hole uniqueness theorem in the smooth framework. As is well known, there has been only limited success as analyticity cannot be used to extend various objects obtained at and near the horizon. Suppose such uniqueness is false. Our result shows that, nevertheless, we can still assign a temperature and an angular velocity to the horizon regardless of it being Killing.

Of course, there is another take on our results. Our hope is that they might ultimately help to prove the black hole uniqueness theorem in the smooth setting. The idea is to start constraining the geometry of the horizon, and learning as much as possible about it, as we try to do in this section. From here, PDE techniques could hopefully be used to get the relevant extension over the whole domain of outer communication.

We shall make use of the following result.

\begin{lemma} \label{kilr}
Let $(N,g)$ be a compact Riemannian manifold, $n$ a Killing field such that $g(n,n)=1$, and $k$ a vector field such that $[k,n]=fn$ for some function $f: N
\to \mathbb{R}$ such that $\p_n f=0$. Then $f=0$ and hence $[k,n]=0$.
\end{lemma}

\begin{proof}
We denote with $\nabla$ the Levi-Civita connection of $g$. From   $[k, n] = fn$, we have, taking the scalar product with $n$
 \[
  {g}({\nabla}_kn,n) - {g}({\nabla}_nk,n) = f.
  \]
 As $n$ is a unit Killing vector field for $(N, {g})$, ${\nabla}_nn = 0$, so we get   $- n({g}(k,n)) = f$.
  From   $[k, n] = fn$, we also have, taking the scalar product with $k$
 \[
  {g}({\nabla}_kn,k) - {g}({\nabla}_nk,k) = f{g}(k,n).
  \]
 Since  $n$ is a  Killing vector field for $({N}, {g})$, we get $ n({g}(k,k)) = -2f{g}(k,n).$ From this we get
 $ n( n({g}(k,k))) = -2fn({g}(k,n))$ since $n(f) =0$.
 But $- n({g}(k,n)) = f$, so it follows that
 $  n( n({g}(k,k))) = 2f^2$. Let us set $h := {g}(k,k): N\to \mathbb{R}$ which, being continuous and defined on a compact set, is  bounded. For any integral curve $\gamma:\mathbb{R} \to N$ of $n$, we have $(h\circ\gamma)^{\prime\prime}(t)\geq 0,  \forall\, t \in \mathbb{R}; $ hence $h\circ\gamma$ is a convex function defined on $\mathbb{R}$, but since  $h\circ\gamma$ is bounded, it is constant. It follows that for any integral curve $\gamma$ of $n$, $(h\circ\gamma)^{\prime\prime}(t)=  0, \forall\, t \in \mathbb{R}$; which proves that $f =0$. Hence  $[k, n] = 0$. $\square$
\end{proof}

In the next result the quotient projection along the generators is always denoted $\tilde \pi$, even when it applies to the compactified horizon $\hat H$ rather than $H^+$. We recall that $\phi$ is the flow of the Killing field $k$, while $\varphi$ is the flow of a lightlike field $n$ tangent to the horizon.

For simplicity, we formulate the following two theorems for a connected event horizon $H^+$, though they hold equally well for a connected component of $H^+$, provided it satisfies the properties mentioned in the theorems.

\begin{theorem}[Compactified horizon with closed generators] \label{voz} \\
Let $(M,g)$ be a 4-dimensional  spacetime endowed with a $C^k$, $k\ge 3$, metric. Let $k$ be a complete  Killing vector field.
Let $\Sigma_{end}$ be an acausal  hypersurface, possibly with edge, over which $k$ is timelike.  Let $M_{end}:=\phi(\Sigma_{end})$ and suppose that  the horizon $H^+:=\p I^-(M_{end})\cap I^+(M_{end})$, is connected, has compact Killing projection $S=\pi(H^+)$ (cf.\ Def.\ \ref{bid}) of non-vanishing Euler characteristic,  and that strong causality holds on $H^+$. Furthermore, assume that $H^+$ is as regular as the metric and totally geodesic (or assume $\theta\ge 0$ in support sense over $H^+$ and the null convergence condition so as to apply Theorem \ref{bos}).

Then the constant $\tau>0$ in Cor.\ \ref{bjz} can be chosen so that $\phi_\tau$ sends each generator $\gamma$ to itself (and for any $p\in \gamma$, $\phi_\tau (p)> p$), namely on $\hat H$ the generators close and $\hat \pi: \hat H \to S$ is a trivial $S^1$-principal bundle over the base manifold $S$ whose fibers are closed Killing orbits.

Moreover, $\tilde \pi: \hat H \to \tilde S$ is a trivial $S^1$-principal bundle  whose fibers are closed lightlike geodesics. More precisely, a future-directed lightlike vector field $n$ tangent to the generators can be found such that its flow $\varphi$ generates an $S^1$-action with period $\tau$, $\varphi_{\tau}=Id$ (we say that $n$ is $S^1$-generating).

The action of the Killing field sends the fibers of $\tilde \pi: \hat H \to \tilde S$ to fibers (it is fiberwise).

Let $\check \tau$ be the infimum of the possible $\tau$ with the above properties. It is really a minimum (so  $\check \tau>0$, and we speak of {\em rotational case}) iff $k$ is not tangent everywhere to the generators, in which case any other $\tau$ is a multiple. Otherwise $\check \tau=0$ ({\em irrotational case}) and every $\tau>0$ has the above properties.

The vector field $n$  lifts from $\hat H$ to $H^+$ to a complete vector field (denoted in the same way). The statement on $n$ being $S^1$-generating on $\hat H$ with period $\tau$ becomes on $H^+$, $\varphi_{\tau}=\phi_\tau$, where $\varphi$ and $\phi$ are the flows of $n$ and $k$ respectively.
\end{theorem}

\begin{proof}

The first part of the proof goes precisely as in  the proof of \cite[Prop.\ 2.1]{friedrich99}.
Since $H^+$ is totally geodesic the geodesic flow preserves the metric induced on $H^+$. Thus on the quotient $\tilde S$ we have a well defined Riemannian metric. The Killing vector field $k$ on $(H^+, g \vert_{TH^+\times TH^+})$ (with flow $\phi$) maps generators to generators and descends to a Killing vector field $\tilde k$ on $\tilde S$, (with flow $\tilde\phi$). Observe that $\tilde k$ necessarily vanishes at some point, as the Euler characteristic of $\tilde S$ is different from zero. The next step in the argument is as in pages 119--120 of \cite{wald94}, where it is proved that $\tilde\phi_\tau=Id_{\tilde S}$ for some $\tau>0$ (and a minimal $\tau$ can be chosen if $k$ does not identically project to zero, in which case $k$ and $n$ are proportional on $H^+$). Consequently, $\phi_\tau$ maps each generator to itself.

The statement in parenthesis on the fact that generators are moved to the future by $\phi_\tau$ follows from Thm.\ \ref{vmlh}.

Since $\hat H$ is obtained from $H^+$ by quotient with respect to the $\mathbb{Z}$-action generated by $\phi_\tau$, the Killing flow induced on $\hat H$ generates a $S^1\simeq \mathbb{R}/\mathbb{Z}$-free action.

The projection $\hat \pi$ is constructed by noticing that every sufficiently small neighborhood of $p\in \hat H$ is diffeomorphic with a neighborhood of point in the fiber $c^{-1}(p)$ in the covering space (remember that we have the covering map $c: H^+\to \hat H$). Which point is chosen is irrelevant as different choices shall have canonically diffeomorphic neighborhoods. Thus composing the local diffeomorphism with the projection $\pi$ constructed in Cor.\ \ref{bjs} we get $\hat \pi$ in a neighborhood of $p$. This shows that the bundle $\hat \pi: \hat H\to S$ does indeed exist.

Similarly, composing the local diffeomorphism with $\tilde \pi$ and using Thm.\ \ref{vobn} we get that the bundle $\tilde \pi: \hat H\to \tilde S$ does indeed exist.
Its fibers are circles. Furthermore, by taking a local section $\Sigma$  of $\tilde \pi: H^+\to \tilde S$ passing through $p$, each point of $\Sigma$ is mapped to $\phi_\tau(\Sigma)$, a section passing through  $\phi_\tau(p)$ (each point $q\in \Sigma$ is moved to its causal future, see Thm.\ \ref{vmlh}). As $\Sigma$ and $\phi_\tau(\Sigma)$ get identified, $\hat H$ is covered by products $S^1\times \Sigma$, where $\Sigma$ is diffeomorphic to its projection on $\tilde S$. In other words, $\tilde \pi: \hat H\to \tilde S$ is a fiber bundle with (oriented) $S^1$ fibers. By \cite[Prop.\ 6.15]{morita01} every oriented $S^1$ bundle admits the structure of principal $S^1$ bundle. In other words, there is a tangent field $n$ to the generators such that its $n$-parameter over the closed generator has length $\tau$ (regardless of the generator).


The principal bundle $\tilde \pi: \hat H\to \tilde S$ is  trivial because, if $\tilde s: \tilde S\to H^+$,  is a smooth section for $\tilde \pi:H^+\to \tilde S$,  then $c\circ \tilde s:\tilde S\to \hat H$ is a section for   $\tilde \pi: \hat H\to \tilde S$.

By the compactness of $\hat H$, the vector field $n\in \mathfrak{X}(\hat H)$ is clearly complete. As a consequence, so is its lift (denoted in the same way) $n\in  \mathfrak{X}(H^+)$.  $\square$
\end{proof}

If $k$ is not everywhere tangent to the generators, the canonical choice for $\tau$ to realize the compactified space $\hat H$ is the minimal possible value $\check \tau$. All the physically relevant quantities will be related to this choice and might also be denoted with a `check'.  We call the constant $\check \tau$ the {\em rotational period of the horizon} ad the constant  $\check \omega:=2\pi/\check \tau$ the {\em angular velocity of the horizon}. The latter is related to a $S^1$-action on $H^+$ as  clarified by the following theorem.


Working with $\tau$ might be convenient to avoid dealing with the cases $\check \tau>0$ and $\check \tau=0$ separately.

Note that $n$ is complete in $H^+$, thus in $H$ it cannot be extended to a vector field unless it converges to zero on $\textrm{edge}(H^+)$.

The proof of the following result uses the compactification of Theorem \ref{voz}, however, for better readability, the statement does not mention such construction.

\begin{theorem}[Lightlike field, surface gravity and angular velocity of the horizon] \label{voz2}

Assume the hypothesis of Theorem \ref{voz} and suppose, furthermore, that the dominant energy condition holds.

On the horizon $H^+$ there is one and only one future-directed $C^{k-1}$ lightlike field $n$, called the {\em lightlike field of the horizon}, with the properties:
\begin{itemize}
\item[(a)] completeness,
\item[(b)] pregeodesic equation
\begin{equation} \label{jjwe}
\nabla_n n =\kappa n,
\end{equation}
 for some constant $\kappa$,
\item[(c)] $\varphi_\tau=\phi_\tau$ ($S^1$-generating condition) for some $\tau>0$.
\end{itemize}
The constant $\kappa$  for the lightlike field of the horizon is thus uniquely determined and called the {\em surface gravity of the horizon}.

For $\kappa=0$ all generators are (affine) complete. For $\kappa>0$ they are all future affine complete, and past affine incomplete in the portion contained in $H^+$ (and time dually for $\kappa<0$).

Further, there are two cases. Either $n=k$ in which case (c) holds for any $\tau$, or $n\ne k$ somewhere on $H^+$ in which case (c) holds for $n$ for a minimal value of $\tau$ denoted $\check \tau>0$. In the former case we set $\check \omega:=0$ and in the latter case we set $\check \omega:=2\pi/\check\tau$, and call $\check \omega$ the {\em angular velocity of the horizon}.

Consider a lightlike geodesic $\gamma: I \to H^+$ running on the horizon starting with velocity $n$. Then we have the vector ratio $\dot \gamma(\check\tau)/d\phi_{\check \tau}(\dot \gamma(0))=e^{-\check c}$ where the constant is $\check c=\kappa \check \tau$ and so is independent of the starting point (note also that $d\phi_{\check \tau}(n)=d\varphi_{\check \tau}(n)=n$). We call $e^{-\check c}$ {\em the dilating factor of the horizon}.

We have on $H^+$
\begin{equation} \label{com7}
[k,n]=0,
\end{equation}
and if  $\check \tau >0$ the field $\zeta:=\frac{\check \tau}{2 \pi} (k- n)$ provides  a $S^1$-action on $H^+$ with period $2\pi$ (the axisymmetry action; we can also set $\zeta:=0$ for $\check \tau=0$). Its non-trivial orbits are spacelike.
The  flows of the fields $k,n,\zeta$  are isometric in the sense that they preserve the  metric induced on the horizon.

There exists a smooth function $\bar f: H^+  \to \mathbb{R}$ such that $\dd \bar f(n)=1$, $\dd \bar f (\zeta)=0$, $\psi^*_\theta\bar f=\bar f$, where $\psi$ is the flow of $\zeta$. Its level sets are  smooth  cross-sections of $H^+$, intersected  exactly once by both generators and Killing orbits, and   the orbits of $\zeta$ are tangent to  the level sets.
The level sets are sent to level sets by the action of the Killing flow $\phi$ and by the generator flow $\varphi$.
\end{theorem}

The proof is somewhat long so we organize it in sections  with titles in boldface.

\begin{proof}
We use the compactification $\hat H$ relative to the parameter $\tau>0$ introduced in Theorem \ref{voz}.

{\bf The dilating factor and its  generator independence}.
Let $n$ be a $S^1$-generating field with  period $\tau$ on $\hat H$. Since the bundle $\tilde \pi: \hat H\to \tilde S$ is trivial we can introduce a section $\Sigma$ and push it with the $n$-flow $\varphi$  to get a foliation. Call the tangents spaces to this foliation {\em horizontal}. They are preserved by the flow $\varphi$ of $n$. We can introduce a 1-form field $n^*$ such that $n^*(n)=1$ and its kernel is horizontal. At this point we proceed with the ribbon argument, that is,
let us consider two integral curves $x_0(s)$ and $x_1(s)$ of $n$ starting from $\Sigma$ and terminating in another leaf $\Sigma'$. We can connect the starting points with a horizontal curve  in $\Sigma$ and the ending points with another horizontal curve  in $\Sigma'$ which is the image under the flow of the former (note that since $H^+$ is connected, $\tilde S$ is connected), and we have by \cite[Eq.\ (18)]{minguzzi21} (which uses the dominant energy condition), by choosing the longitudinal length to be that of $m$ cycles
\[
 \vert \int_0^{m\tau}\kappa(x_1(s)) \dd s -\int_0^{m\tau} \kappa(x_0(s))\dd s\vert \le 2B ,
\]
where the constant $B$ does not depend on how much elongated is the ribbon, that is, it does not depend on $m>0$. Here $s$ is the parameter such that $n=\frac{\dd}{\dd s}$. However, in the present specific case $\kappa(x_1(s))$, $\kappa(x_0(s))$ are periodic with period $\tau$ thus the inequality is satisfied if and only if
\[
\int_0^{\tau}\kappa(x_1(s)) \dd s=\int_0^{\tau}\kappa(x_0(s)) \dd s .
\]
As  $\kappa(x(s))$ is periodic with period $\tau$, we conclude that the integral
\[
c:=\int_0^{\tau}\kappa(x_p(s)) \dd s,
\]
with $x_p(s)$ integral curve of $n$ with starting point $p$, is actually independent of $p\in \hat H$.

 This integral has the following interpretation. Let us consider the affinely parametrized geodesic $\gamma(t)$ with initial conditions $\gamma(0)=x(0)$ and $\dot \gamma(0)=b n(x(0))$ for some $b>0$. We have $\dot  \gamma(t(s))=f(s) b n(x(s))$, and from the geodesic condition $\nabla_{\dot \gamma}\dot \gamma=0$ it follows \cite[proof of Lemma 2]{minguzzi21}
 \[
 f(s)=\exp[-\int_0^s \kappa(x(r)) \dd r],
  \]
  thus after a cycle the tangent vector gets expanded by $e^{-c}$.  If $c< 0$ we have future incompleteness and past completeness for all generators \cite[Comment after Prop.\ 6.4.3]{hawking73}, if $c>0$ we have future completeness and past incompleteness for all generators, and if $c=0$ we have completeness in both directions for all generators. If $c\ne 0$ we speak of non-degenerate case and if $c=0$ of degenerate case.
The constant $c$ is denoted $\check c$ when we choose $\tau=\check \tau$, see Thm.\ \ref{voz} for the definition of $\check \tau$.

 Notice that  this property of $c$ is formulated using geodesics and so  does not mention the $S^1$-generating property of $n$. We just used the closure of the generators on $\hat H$,  but no special parametrization.

  Once $n$ is lifted to $H$ the just proved property becomes that expressed as the vector ratio $\dot \gamma(\check\tau)/d\phi_{\check \tau}(\dot \gamma(0))=e^{-\check c}$ in the statement of the theorem, where the choice $\tau=\check \tau$ has been made, indeed, note that $\dot \gamma(0)=b n(\gamma(0))$ thus $d\phi_{\check \tau}(\dot \gamma(0))=d\varphi_{\check \tau}(\dot \gamma(0))=b d\varphi_{\check \tau}(n)=b n(\gamma(\check \tau))$ and so the ratio in the statement of the theorem is precisely that we calculated above.

{\bf Existence of $\bm{S^1}$-generating fields on $\bm{\hat H}$ of constant surface gravity}.
Let us consider the non-degenerate case, $c \ne 0$, namely suppose that there is a (past or future) incomplete generator on $\hat H$.
By the results in \cite{reiris21,minguzzi21} we know that $n$ can be chosen so that it has non-zero constant surface gravity. Actually, in this simplified setting of trivial  $S^1$-bundle this result can be obtained directly by specializing the formulas obtained in \cite{minguzzi21}. Let $n$ be a $S^1$-generating field with  period $\tau$. Recalling that  $\hat H$ is diffeomorphic to $S^1\times \Sigma$, we  define the function on $\hat H$
\begin{align}
e^{f(x_q(r))}&=\frac{c}{\tau(e^c-1)} \int_0^\tau \left[ \exp \int_0^t\kappa(x_q(r+s)) \dd s\right] \dd t\\
&=\frac{c}{\tau(e^c-1)} \int_r^{r+\tau} \left[ \exp \int_r^u\kappa(x_q(s)) \dd s\right] \dd u
\end{align}
where $x_q$ is the integral curve of $n$ starting from $q\in \Sigma$. The first expression shows that the function is periodic $f(x_q(r))=f(x_q(r+\tau))$ and so well defined. The second expression shows that the surface gravity of $n'=e^f n$ is $\kappa'=e^f(\p_n f + \kappa)=c/\tau$ which is a non-zero constant.


So let $n$ be any tangent field with constant non-zero surface gravity on $\hat H$. We can still calculate the dilating factor as done previously to get
\[
e^{-c}=\exp[-\int_0^{R(\gamma)} \kappa(x(r)) \dd r]=\exp (-\kappa R(\gamma))
\]
where this time the range of the parameter might depend on the generator as $n$ is not $S^1$-generating a priori. But the just proved equation proves that $R(\gamma)=c/\kappa$, so it is indeed independent of $\gamma$. By rescaling $n$ with a global constant we can accomplish $R=\tau$.



We have just proved that, in the non-degenerate case, $c\ne 0$, there is a choice of $n$  on $\hat H$ of constant non-zero surface gravity   and for any such choice $n$ is automatically $S^1$-generating.



In the degenerate case, $c=0$, it is similarly possible to construct a  $S^1$-generating field of zero surface gravity.
 Let us define the field over a smooth section and propagate it with the geodesic property. This extends to a well defined field over $\hat H$ since $e^{-c}=1$. The new field has zero surface gravity. But we have still the freedom of choosing the initial field. Let $\lambda(\gamma)$ be the affine length of the first cycle, which depends on the generator chosen.  Rescaling $n\to  n\lambda(\gamma)/\tau$ over each generator $\gamma$  we get  that the new $n$ has zero surface gravity and is $S^1$-generating with period $\tau$ (one could also start from a $S^1$-generating field with period $\tau$ and use the previous formulas for $e^f$ in the limit $c\to 0$).



The field $n$ with the required properties  (a)-(c) is obtained lifting the analogous field from $\hat H$.
For  $\check \tau =0$ we just  consider the construction of $n$ on $\hat H$ of the previous paragraphs choosing any $\tau>0$ for the compactification. For  $\check \tau >0$ we just  consider the construction of $n$ on $\hat H$ of the previous paragraphs  choosing  $\tau=\check\tau$ for the compactification.

{\bf Uniqueness of $\bm{n}$ and $\bm{\kappa}$ on $\bm{H^+}$}.


Assume that the future-directed lightlike field $n$ satisfies (a)-(c).

Suppose $\check \tau=0$. We already know from Thm.\ \ref{voz} that $n=\alpha k$ for some function $\alpha$  which means that the horizon is Killing, and since both fields do not vanish on $H^+$, using Thm.\ \ref{vmlh}, $\alpha > 0$.
Since the dominant energy condition  holds we know that $\nabla_k k =\kappa' k$ for some constant $\kappa'$, see \cite[Thm.\ 4.3.12]{chrusciel20}. Thus both $n$ and $k$ have constant surface gravity. But the sign of surface gravity must be the same as it is related to the affine completeness of the generators. If the surface gravity does not vanish, as both $n$ and $k$ are complete on $H^+$,  Theorem \ref{vm4p} implies that they differ by a constant. Condition (c) now implies $k=n$. If the surface gravity does vanish
 then $0=\alpha^{-1} \nabla_n n=\nabla_k n=(\p_k \alpha) k + \alpha \nabla_k k =(\p_k \alpha) k$ which implies that $\alpha$ is a constant over every generator. But this constant must again be 1 due to condition (c). We conclude that for $\check \tau=0$, $n=k$, thus $n$ is unique.

Suppose $\check \tau>0$. Property (c) implies that all geodesics return to themselves under $\phi_\tau$, thus $\tau=i \check \tau$ by Thm.\ \ref{voz}, where $i$ is a positive integer.  Let $n'$ be a $S^1$-generating field of constant surface gravity which we know to exist on  $\hat H_{\check \tau}$ (obvious meaning of the notation). Its lift to $H^+$ is denoted in the same way. Both $n$ and $n'$ on $H^+$ are complete and have constant surface gravity. The signs of the surface gravities actually coincide because they are related to the affine incompleteness of the generators. If the surface gravities do not vanish then Theorem \ref{vm4p} proves that $n$ and $n'$ differ by a constant factor. As both satisfy $\varphi_\tau=\phi_\tau$ this factor is unity. If the surface gravity vanish then, writing $n=\alpha n'$, $\alpha >0$, $0=\alpha^{-1} \nabla_n n=\nabla_{n'} n=(\p_{n'} \alpha) n' + \alpha \nabla_{n'} n'=(\p_{n'} \alpha) n'$ which implies that $\alpha$ is a constant over every generator. But this constant must again be 1 due to condition $\varphi_\tau=\phi_\tau$, which proves $n=n'$ and so all possible $n$ coincide. In particular, the identity $n=n'$ implies $\varphi_{\check\tau}=\phi_{\check \tau}$.

{\bf Commutativity $\bm{[k,n]=0}$}.
Since the flow of $k$  on $\hat H$ sends generators to generators, $k$ is projectable under the map $\tilde \pi: \hat H \to \tilde S$. But $\tilde \pi_*(n)=0$, thus $0=[\tilde \pi_*(k),\tilde \pi_*(n)]=\tilde \pi_*([k,n])$ which implies $[k,n]= f n$ for some function $f: \hat H \to \mathbb{R}$. But since $L_k \nabla =0$ we have, taking the Lie derivative of $\nabla_n n = \kappa n$, $\nabla_{fn} n+\nabla_n (fn)=\kappa f n$ which implies $\p_n f +\kappa f=0$ and hence along  a generator $\gamma(s)$, $f=C \exp(-\kappa s)$ where $s$ is a parameter such that $n=\dd/\dd s$. But since the generator closes itself after a period $\check \tau$, $f(\gamma(0))=f(\gamma(\check\tau))$ which implies $C =C \exp(-\kappa \check\tau)$.

Suppose $\kappa \ne 0$, then the only solution is $C=0$, that is $f=0$  over every generator, and hence $f=0$ over $\hat H$.

As for the case $\kappa=0$ we proceed as follows.
We recall that the bundle $\tilde \pi: \hat H \to \tilde S$ is trivial, thus we can find a 1-form $\omega$ such that $\omega(n)=1$, $L_n\omega=0$ (in the trivialization $\hat H \to S^1_\theta\times \tilde S$, let $\omega=\dd \theta$ where $n=\frac{\p}{\p \theta}$), so introduced a Riemannian metric $\sigma$ on $\tilde S$,  $n$ is Killing for the Riemannian metric $\hat g:=\tilde \pi^*\sigma+\omega\otimes \omega$ and normalized, $\hat g(n,n)=1$.  Recalling the equation just obtained, $n(f) = -\kappa f=0$, thus the conditions of Lemma \ref{kilr} apply for $(\hat H, \hat g)$, which implies $[k,n]=0$.

%



{\bf The $\bm{S^1}$-action on $\bm{H^+}$}. The statement on the axisymmetry follows from the formula for the flow of a  linear combination of vector fields that commute   $\Phi_{\frac{\check \tau}{2\pi}(k-n)t}=\Phi_{\frac{\check \tau}{2\pi}kt}\circ \Phi_{-\frac{\check \tau}{2\pi} nt}=\phi_{\frac{\check\tau}{2\pi} t} \circ \varphi^{-1}_{\frac{\check \tau}{2\pi}t}$ which for $t=2\pi$ gives the identity.

Since $k$ is Killing and $H^+$ is totally geodesic, it is clear that its flow preserves the induced metric. It is also well known that the flow of  $n$  preserves the induced metric (e.g.\ \cite[Lemma B1]{friedrich99}\cite{moncrief08}\cite[Lemma 7]{minguzzi21}). Thus the flow of $\zeta$ preserves the induced metric. This means that over every non-trivial orbit (which is closed) $\zeta$ has the same square $g(\zeta,\zeta)$ at each of its points and hence that $\zeta$ has the same causal character at every point of the orbit. It cannot be lightlike as there would be a closed causal curve in contradiction with the strong casuality of $H^+$,  and it cannot be timelike as it is tangent to the horizon, thus $\zeta$ is spacelike wherever it is non-zero.

{\bf Existence of nice cross-sections}. For the last statement, we already know by Thm.\ \ref{voz} that the bundle  $\tilde \pi: H^+ \to \tilde S$ is trivial and admits a smooth section.

We can thus find a surjective function $f: H^+  \to \mathbb{R}$ (essentially the first projection in the trivialization) such that $\dd f(n)=1$, and the level sets of $f$ are images of smooth sections for the bundle.  If $\check \tau=0$  we have $n=k$, $\zeta=0$, and we can set $\bar f:=f$.
For $\check \tau>0$, let $\psi_s$ be the flow of $\zeta$. As $\psi$ commutes with $\varphi$ it sends (parametrized) generators to (parametrized)  generators, so that $(\psi_s)_* (n)=n$.  In particular, for any fixed $s$,  $f \circ \psi_s$ increases over the generators.

Let us consider the function $\bar f: H^+  \to \mathbb{R}$
\begin{align}
\bar f:= \frac{1}{2\pi}\int_0^{2\pi} \psi_\theta^* f \, \dd \theta.
\end{align}
It is smooth and such that $\dd \bar f(n)=1$ which, by the completeness of $n$ on $H^+$, implies that it is surjective over every generator.  This implies that the levels sets of $\bar f$ intersect the generators exactly once. Each of the level sets is a cross-section. Since $\phi_\tau=\varphi_\tau$, $f(\phi_{m\tau})=f(\varphi_{m\tau})$ for every integer $m$ which, by continuity, shows that $f$ is surjective over every $k$-Killing orbit.
Furthermore, due to reparametrization invariance  $\psi_\theta^* \bar f=\bar f$, we have $\dd \bar f(\zeta)=0$, which shows that the orbits of the $S^1$ action $\psi$ run over the level sets of $\bar f$, and due to    the expression for $\zeta$, $k=n+\frac{2\pi}{\tau} \zeta$, we have $\dd \bar f(k)=1$, which proves that $\bar f$ is increasing over every $k$-Killing orbit, and so  that the levels sets of $\bar f$ intersect the $k$-Killing orbits exactly once.


The fact that the level sets are preserved by the $\varphi$ flow follows from $\dd \bar f(n)=1$. The fact that they are preserved by the $\phi$ flow follows from $\dd \bar f(k)=1$.
$\square$
\end{proof}

The Killing property of a field is preserved by multiplication by a global constant. Still, we cannot assign this freedom to the Killing field $k$: for instance, if the spacetime approximates Minkowski at infinity, one typically demands that $k$ should represent the velocity field of an observer in Minkowski, i.e.\ it should approach a  normalized field $g(k,k)\to -1$. This shows that, physically speaking, $k$ cannot be rescaled. It follows that, as show in Theorem \ref{voz}, $\check \tau$ and $\check \omega$ are completely fixed by the geometry. As shown by Theorem \ref{voz2}, $\kappa$ and the field $n$ are also uniquely determined.


This important geometrical fact  confirms the  possibility of interpreting the surface gravity as a physical quantity, i.e.\ the temperature of the black hole $T=\frac{\kappa}{2\pi}$.

Thus the previous theorem clarifies that the geometry fixes the value of surface gravity not just its sign. If surface gravity were introduced via the property (\ref{jjwe}) or with (\ref{jjwe}) and (\ref{com7}) as done e.g.\ in \cite[Def.\ 1.22]{petersen19} only the sign of $\kappa$ would be definable as any point independent rescaling of $n$ would be allowed. It is the condition $\phi_\tau=\varphi_\tau$, and hence the request that $n$ should be $S^1$-generating, that plays a major role in determining the  value of $\kappa$.


Notice that we are able to prove that stationarity implies the existence of an $S^1$ action on the horizon, which we  rightfully term {\em axisymmetry}, without extending the involved fields to Killing fields (in fact our proof is different from previous proofs). It seems remarkable that we can obtain the interesting physical quantities of (constant) temperature and angular velocity directly from the geometry of the horizon without using a vacuum assumption  or any extension to Killing fields (a result which requires the Einstein's equations). Our proof of axisymmetry is also largely independent of dimensionality assumptions.


%
%

As mentioned in  the above theorem, under the dominant energy condition the totally geodesic compact smooth horizon $\hat H$ can either be degenerate ($\kappa=0$) or non-degenerate ($\kappa\ne 0$), namely admit a complete generator or not. This can be seen as a special case of the dichotomy proved in \cite{reiris21,minguzzi21}.


Further, by recent results by Petersen and R\'acz \cite{petersen18,petersen18b,petersen19} in the vacuum  non-degenerate case, the tangent field $n$ that realizes the constant surface gravity can be extended to a Killing vector field  (possibly different from $k$). By lifting it to the neighborhood $V$ of $H^+$ one gets another proof of  Hawking's local rigidity theorem in the non analytic case, which was recently proven by Alexakis-Ionescu-Klainerman \cite{alexakis10} using a different analysis taking advantage of a bifurcate horizon assumption. The mentioned compactification strategy has instead been followed by Petersen \cite{petersen18} (who relied on Chru\'sciel-Costa \cite{chrusciel08} for regularity, so he assumed existence of cross-sections).

We note that in 4 spacetime dimension and for horizons having a section of non-zero Euler characteristic, due to Thms.\ \ref{voz}-\ref{voz2}, there is no need to work in the most general non-closed generators case as done in  \cite{petersen18}. Our result proves the condition $[k,n]=0$ on $H^+$ that Petersen  \cite[Thm.\ 1.23]{petersen19} had to assume in his work and which was not established in \cite{reiris21,minguzzi21}. Thanks to this proof we fill the gap between previous results on the constancy of surface gravity (that do not mention $[k,n]=0$) and the assumption in Petersen's and Petersen and Racz's work.


\section{Conclusions}

Previous results on the smoothness of stationary black holes event horizons relied on a cross-section assumption which implies a $C^1$ differentiability assumption. Through a geometrical analysis we proved, under fairly general causality conditions, the existence of certain fiber bundles in a neighborhood of the horizon, which allowed us to compactify the space and hence apply smoothness result for compact horizons. As a result, we proved smoothness without relying on strong assumptions on the existence of sections (in the case of compact horizons there was a similar development in the literature: the first results came with assumptions on the existence of sections that were only later removed). In fact, our compact projection assumption refers, as the name suggests, to the compactness of a certain projection of the horizon, not to a real `section' i.e.\ a codimension one hypersurface intersecting transversally the horizon.

Ultimately, by using the smoothness of the horizon, we have been able to prove the existence of cross-sections, but the logical derivation was reversed. Finally, we obtained a theorem that describes in detail the geometry of the black hole horizon. We showed the possibility of identifying the surface gravity itself (not just its sign), something that allows its identification with the temperature, and we also identified the angular velocity proving the existence of a horizon axisymmetry. We showed that these quantities are well defined even without imposing the Einstein's equation, just on the basis of the horizon geometry.

It is likely that our methods could be applied, suitably adapted, to the study of higher dimensional black holes and certainly to studies in the analytic setting \cite{hollands07,hollands09}. We leave these research directions for future work.

Summarizing, stressing the physically relevant conclusions, we  showed the constancy of surface gravity and so proved the zeroth's law of black hole thermodynamics without  (a) smoothness assumptions on  the horizon, (b) non-degeneracy conditions, (c) hypothesis on the existence of cross-sections, (d) (electro-)vacuum conditions (e) horizon bifurcation assumptions, and under fairly weak causality conditions (strong causality). Only stationarity was substantially used but this is necessary if one wants to derive a constant temperature, as it expresses the fact that the black hole has reached thermal equilibrium.

\section*{Acknowledgments}
R.H. was supported by ICTP-INDAM ``Research in Pairs'' grant and by ``fondi d'internazionalizzazione''  of the Department of Mathematics of Universit\`a Degli Studi di Firenze thanks to which he visited Florence in August-November 2021, during the covid pandemic.


\section*{Declarations}

\subsection*{Data availability statement}
Data sharing not applicable to this article as no datasets were generated or analysed during the current study.

\subsection*{Conflicts of Interest}
The authors have no relevant financial or non-financial interests to disclose.


\end{document}